\documentclass[11pt,longbibliography]{article}

\usepackage[margin=0.75in]{geometry}
\usepackage{amsmath,amssymb,amsthm,mathtools}
\usepackage{hyperref}
\usepackage{booktabs} 
\usepackage{multirow} 
\usepackage{authblk}  
\usepackage[nottoc]{tocbibind}  
\usepackage{appendix} 
\hypersetup{colorlinks=true,allcolors=blue}
\usepackage{graphicx}

\newtheorem{definition}{Definition}[section]

\newtheorem{lemma}{Lemma}[section]
\newtheorem{theorem}{Theorem}[section]
\newtheorem{corollary}{Corollary}[section]
\newtheorem{remark}{Remark}[section]

\DeclareMathOperator{\Tr}{Tr}
\DeclareMathOperator{\E}{\mathbb{E}}
\DeclareMathOperator{\argmin}{arg\,min}
\DeclareMathOperator{\rankTT}{rank_{TT}}

\newcommand{\R}{\mathbb{R}}
\newcommand{\N}{\mathbb{N}}
\newcommand{\Nzero}{\mathbb{N}_0}

\newcommand{\cH}{\mathcal{H}}
\newcommand{\cA}{\mathcal{A}}
\newcommand{\cB}{\mathcal{B}}

\newcommand{\ip}[2]{\left\langle #1,#2\right\rangle}
\newcommand{\deff}{d_{\mathrm{eff}}}
\newcommand{\cG}{\mathcal{G}}

\begin{document}

\title{Local tensor-train surrogates for quantum learning models}
\author[1]{Sreeraj Rajindran Nair\thanks{Email: \texttt{sreeraj.r.nair@student.uts.edu.au}}}
\author[1]{Christopher Ferrie\thanks{Email: \texttt{christopher.ferrie@uts.edu.au}}}

\affil[1]{Centre for Quantum Software and Information, University of Technology Sydney, NSW 2007, Australia}

\date{}
\maketitle

\begin{abstract}

A key bottleneck in quantum machine learning is the computational cost of repeated quantum circuit evaluations during the inference phase. To address this, we present a framework for constructing fast, cheap, provably accurate classical tensor-train surrogates of fully trained quantum machine learning models within local patches of their input data space. The approach combines Taylor polynomial approximation with a tensor-train (TT) representation and embeds it in a statistical learning paradigm via empirical risk minimization. In our analysis, the Taylor-TT construction serves as a deterministic error certificate proving that the TT hypothesis class contains a good approximation; empirical risk minimization then provably recovers a surrogate with controlled generalization error and explicit bounds. This translates into three independently controllable error sources: (i) Taylor truncation error controlled by the patch radius $r$ and polynomial degree $p$, (ii) TT approximation error controlled by the bond dimension $\chi$, and (iii) statistical estimation error. While the parameter count scales polynomially in the number of data dimensions $N$, i.e., $\deff = N(p+1)\chi^2$ rather than the naive $(p+1)^N$, the worst-case constants inherit an exponential factor through the tensor-product feature norm during Taylor polynomial embedding onto TT. This cleanly separates representation complexity from feature-induced constants. Our risk bounds and sample complexity depend explicitly on the local patch radius $r$. 

\end{abstract}


\section{Introduction}

Quantum machine learning (QML) refers to a wide array of learning frameworks wherein quantum data types are used for learning tasks \cite{chang2025primerquantummachinelearning,11014055, acampora2025quantumcomputingartificialintelligence, arunachalam2017surveyquantumlearningtheory}. The field has seen significant development over recent years with the introduction of a plethora of novel learning frameworks and demonstrations.  Variational quantum algorithms (VQAs), namely parametrized quantum circuits (PQCs) \cite{ McClean_2016, PhysRevA.98.032309, dunjko2017machinelearningartificial, Wang_2024} have garnered significant attention within the QML due to their compatibility with near-term noisy intermediate-scale quantum (NISQ) devices \cite{Preskill_2018, RevModPhys.94.015004, gujju2024quantummachinelearningnearterm, Melnikov_2023}, albeit they still are constrained by phenomena such as barren plateaus \cite{Larocca_2025,cerezo2024doesprovableabsencebarren}, where gradients vanish exponentially with system size. There have been significant advancement in QML models for fault-tolerant application-scale quantum (FASQ) computers \cite{eisert2025mindgapsfraughtroad, preskill2025nisqmegaquopmachine, zimbors2025mythsquantumcomputationfault, Caro2022} as well.

However, a fundamental practical challenge lies in \emph{deploying} these models, especially during the inference phase. The computational cost of evaluating quantum models (both in NISQ and FASQ devices), in particular for gradient-based optimization, presents a significant bottleneck, as each gradient component typically requires multiple circuit evaluations \cite{molteni2025quantummachinelearningadvantages}. Unlike classical models, which, once trained, can be queried at negligible cost on classical hardware, quantum models impose a per-query cost in quantum hardware time, energy, and expense that scales with the circuit's complexity. As QML models scale over the years, this cost will scale alongside and translate over into higher energy and quantum hardware resources \cite{Tripathi:2025qki, fellousasiani2022resourcecostlargescale}. Classical surrogates of trained quantum learning models offer a potential pathway to mitigate these challenges. The core principle is to train on quantum hardware to extract potential advantages \cite{hangleiter2026quantumadvantageachieved,salmon2023provableadvantagequantumpac, huang2025vastworldquantumadvantage, eisert2025mindgapsfraughtroad, Schuld_2022} native to quantum domain learning and afterwards dequantize \cite{cotler2021revisitingdequantizationquantumadvantage, cerezo2024doesprovableabsencebarren} the inference phase of these expensive quantum models via constructing an equivalent classical counterpart of these trained quantum learning models. A particular type of PQCs named reuploading quantum models  \cite{P_rez_Salinas_2020, P_rez_Salinas_2021} wherein alternating layers of data-encoding gates and trainable unitary gates are used have been known to be exactly represented as truncated Fourier series \cite{PhysRevA.103.032430, PhysRevA.107.062612}. This property was used by Schreiber et al \cite{Schreiber_2023} to design an exact classical Fourier surrogates for this particular class of QML models. Hernich et al. \cite{hernicht2025enhancingscalabilityclassicalsurrogates} improved upon this work in terms of sample complexity by using random Fourier features instead of exact grid reconstruction. Jerbi et al \cite{Jerbi_2024} under a similar spirit introduced shadow models applicable to models beyond reuploading quantum models. In a similar vein, Watanabe et al. \cite{watanabe2026tensornetworksurrogatemodels} showed that two-dimensional tensor-network ansatzes can serve as accurate surrogate models for variational quantum circuits, with classical feasibility maintained at moderate bond dimension. These frameworks dealt with surrogating the entire data space of the trained quantum models and we will refer to these as global surrogation and the classical surrogates so constructed as global surrogates.

We focus on surrogate construction for a local patch inside the data space of an arbitrary trained quantum learning model. In previous work \cite{nair2025localsurrogatesquantummachine}, we numerically demonstrated that an arbitrary quantum learning model can be locally quantum surrogated with a reuploading model, within a local patch in the data space, and such a local quantum surrogate can always be further used to construct a local classical surrogate via the classical surrogation protocol outlined by Schreiber et al \cite{Schreiber_2023}. Furthermore, we provided empirical evidence for the cost optimization aspect of local surrogates, wherein smaller local patches required fewer resources. Local surrogates are widely featured in interpretable/explainable AI (XAI) literature \cite{molnar2025, Longo_2024, 10.1145/2939672.2939778}, and this has been introduced to the QML interpretability paradigm as well \cite{Pira_2024}. Lerch et al \cite{lerch2024efficient, mhiri2025unifyingaccountwarmstart} have introduced a hybrid quantum-classical expectation landscape simulation framework wherein local patches within these landscapes are efficiently classically simulated.

\begin{figure}[!t]
  \centering
  \includegraphics[width=\textwidth]{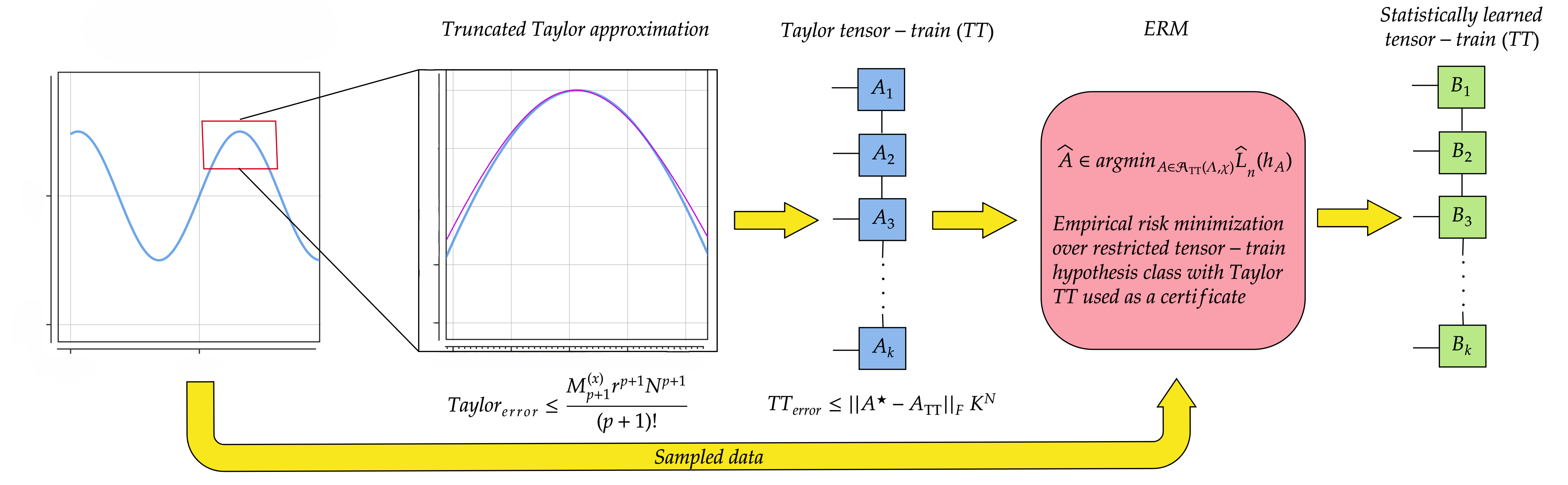}
  \caption{Sample illustration of the local tensor-train surrogates (LTTS):
  Starting from a function $f$, we restrict to a local patch and approximate $f$ by a
  truncated Taylor polynomial.  The Taylor coefficients are embedded as a tensor and represented in tensor-train (TT) format
  with cores $A_1,\dots,A_k$, which serves as an analytical
  certificate predictor with a known error bound decomposition.  An empirical risk
  minimizer (ERM) then fits a new TT with cores
  $B_1,\dots,B_k$ directly from sampled function values in the
  local patch that potentially uses the Taylor-TT certificate as a warm-start initialization instead of a random initialization. We present deterministic bounds on the certificate's approximation
  errors and statistical guarantees on the learned surrogate's generalization.}
  \label{fig:ltts_pipeline}
\end{figure}

\section*{Framework overview}

In this paper, we introduce local tensor-train (TT) surrogates (LTTS) for a broad class of trained quantum learning models \cite{Huggins_2019,_unkovi__2022,Rieser_2023}. Tensor trains (TT) \cite{Bridgeman_2017,eisert2013entanglementtensornetworkstates,Oseledets2011,perezgarcia2007matrixproductstaterepresentations}, also known in the physics literature as matrix product states (MPS), are a structured tensor network decomposition designed to efficiently represent quantum states with limited entanglement. We treat a trained quantum model as a black-box function and focus on a local patch around a point of interest. Our goal is to construct a local surrogate that is accurate in the patch, efficiently parameterized through the TT structure , and learnable from noisy samples with quantifiable sample complexity. We do not assume any special reconstruction structure intrinsic to the underlying quantum model. This distinguishes our setting from structure-dependent reconstruction frameworks such as those of Schreiber et al. \cite{Schreiber_2023}. Instead, our analysis combines classical approximation theory with probably approximately correct (PAC) learning guarantees \cite{ShalevShwartzBenDavid2014}.

The theoretical backbone of the framework is a deterministic error certificate constructed from a local Taylor approximation together with TT compression. The deterministic error terms originating from Taylor truncation are characterized using Taylor remainder terms, while the error term due to TT representation is characterized using the results of Oseledets \cite{Oseledets2011}. In statistical learning terms, this certificate plays the role of a witness hypothesis: it shows that the constrained TT class contains a good local approximant to the black-box function on the patch. Aldavero et al \cite{rodrguezaldavero2025chebyshevapproximationcompositionfunctions} had introduced Chebyshev approximation for quantum numerical analyses, wherein they traced out a similar polynomial approximation to TT embedding.

\medskip\noindent\textbf{Deterministic certificate}
(Theorem \ref{thm:deterministic}).
On a local patch of radius $r$, the Taylor--TT construction yields a predictor whose pointwise error over the patch is bounded by the sum of a Taylor truncation term (controlled by $r$ and $p$) and a TT compression term
(controlled by $\chi$):
\[
  |g(x) - h_{A_\chi}(x)|
  \;\le\;
  \underbrace{\varepsilon_{\mathrm{Taylor}}(r, p)}_{\text{Taylor truncation}}
  \;+\;
  \underbrace{\varepsilon_{TT}}_{\text{TT compression}}
  \;=:\; E_{\mathrm{det}}.
\]where $\varepsilon_{TT}$ absorbs the feature-norm constant. Both terms are deterministic approximation terms and do not depend on the number of training samples.

We set up a statistical learning task based on empirical risk minimization (ERM) \cite{788640,ShalevShwartzBenDavid2014}, wherein the goal is to learn a TT predictor from a constrained TT hypothesis (parameterized by TT cores and a fixed degree-p tensor-product feature map) class that closely approximates the target black-box function with quantifiable sample complexity. Using pseudo-dimension bounds for tensor-network models \cite{KhavariRabusseau2021} together with a uniform-convergence argument for bounded squared loss \cite{JMLR:v11:shalev-shwartz10a,Mohri2018,Vapnik1999-VAPTNO}, we obtain high-probability excess-risk guarantees for the learned surrogate. Thus, the deterministic certificate controls the approximation side of the problem, while the statistical analysis controls the estimation side. In practice, the learner does not need to compute Taylor coefficients or perform TT-SVD, although the certificate may also be useful as a warm start for the learning pipeline. In our analysis, the explicit dependence of the risk and sample-complexity bounds on the patch radius enters through this Taylor--TT construction. This yields polynomial complexity in the TT model parameters, although worst-case constants can still retain dimension dependence, thus inheriting the curse of dimensionality.

\medskip\noindent\textbf{Statistical guarantee}
(Theorem \ref{thm:end_to_end_pdim}).
When ERM is performed over the bounded TT hypothesis class, the population risk (expected squared error on unseen points) of the learned surrogate $\hat{h}$ satisfies, with probability at least $1 - \delta$,
\[
  R(\hat{h})
  \;\le\;
  E_{\mathrm{det}}^2
  \;+\;
  \Delta_n(\delta),
\]
where $\Delta_n(\delta) \to 0$ as $n \to \infty$.  The statistical term depends on the pseudo-dimension of the TT class, which scales as $d_{\mathrm{eff}} = O(N(p+1)\chi^2 \log N)$ i.e. polynomially in all relevant parameters, in contrast to the exponential $(p+1)^N$ scaling of an unconstrained polynomial model.  The certificate $E_{\mathrm{det}}^2$ enters as an upper bound on the best achievable risk in the class; ERM finds a hypothesis at most $\Delta_n$ worse.

\medskip\noindent\textbf{Local dependence on the patch radius}
(Theorem \ref{thm:local_ltts}).
Shrinking the patch radius $r$ directly improves the Taylor truncation component of the deterministic certificate and reduces the norm budget $\Lambda^\star(r)$ entering the statistical bound.  The overall benefit also depends on how the embedded Taylor tensor $A_\star$ compresses under TT-SVD, which is problem-dependent and can be assessed empirically.  Our numerical experiments indicate that, for the quantum models tested, smaller patches are associated with lower approximation error.

\begin{figure}[h]
\centering
\fbox{\parbox{0.92\textwidth}{
\textbf{Workflow:} Local TT surrogate learning on a patch $\cB(x_0,r)$\\[4pt]
\textbf{Input:} Trained quantum model $g$, patch center $x_0$, radius $r$,
degree $p$, TT rank $\chi$\\
\textbf{Output:} Learned local classical surrogate $\hat h$ on $\cB(x_0,r)$\\[6pt]
1.\quad Construct a local Taylor certificate around $x_0$
(optionally estimating derivatives in practice)\\
2.\quad Embed the local coefficient tensor into TT format 
with rank cap $\chi$ to obtain a certificate predictor\\
3.\quad Sample $n$ points in $\cB(x_0,r)$ and query $g$ to obtain training
pairs $(X_i,Y_i)$\\
4.\quad Learn a TT surrogate by empirical risk minimization over a constrained
TT class (optionally warm-started from Step 2)\\
5.\quad Deploy the learned surrogate classically on the local patch\\[4pt]
{\small Steps 1--2 provide the deterministic certificate and an optional warm start.
Step 4 is the statistical learning stage. The full theory combines the certificate
with the ERM guarantee to obtain an explicit local risk bound.}
}}
\end{figure}

\section{Truncated Taylor Model}
We will begin with formally defining the local patch we intend to surrogate.
\begin{definition}[Patch and normalized coordinates]\label{def:patch}
Fix a center $x_0 \in \R^N$ and radius $r > 0$. Define the $\ell^\infty$-hypercube
\[
\cB(x_0,r) := \{x \in \R^N : \|x - x_0\|_\infty \le r\}.
\]
For $x \in \cB(x_0,r)$, define the normalized coordinate $\xi \in [-1,1]^N$ by
\[
\xi := \frac{x - x_0}{r} \qquad \text{(componentwise)}.
\]
\end{definition}

Local tensor-train surrogates (LTTS) are formalized for a locally smooth blackbox function, and we formally define these as follows:

\begin{definition}[Local function]\label{ass:smooth}

Let $g:\cB(x_0,r)\to\mathbb R$ be the target function, assumed to be $(p+1)$-times
continuously differentiable on $\cB(x_0,r)$. Define the normalized local function
$f:[-1,1]^N\to\mathbb R$ by
\[
f(\xi):=g(x_0+r\xi).
\]
Also define

 \[
C_{p+1}^{(x)} := \sup_{x \in \cB(x_0,r)} \max_{|\alpha| = p+1} |\partial_x^\alpha g(x)| < \infty \]
and there exists a finite constant:
\[
C_{\le p}^{(x)} := \sup_{x \in \cB(x_0,r)} \max_{0 \le |\alpha| \le p} |\partial_x^\alpha g(x)| < \infty.
\]
    
\end{definition}

The local patch function is approximated with a truncated Taylor series, and we present the accrued truncation error bound in this section. We use standard multi-index notation throughout. For $\alpha = (\alpha_1, \ldots, \alpha_N) \in \Nzero^N$, let $|\alpha| = \sum_i \alpha_i$, $\alpha! = \prod_i \alpha_i!$, and $\xi^\alpha = \prod_i \xi_i^{\alpha_i}$.

\begin{theorem}[Taylor truncation error]\label{thm:taylor}
Let $g:\R^N \to \R$ be defined as in Definition \ref{ass:smooth}. Define the total-degree-$p$ Taylor polynomial on the normalized patch:
\[
T_p(\xi) := \sum_{|\alpha| \le p} \frac{r^{|\alpha|}}{\alpha!} (\partial^\alpha g)(x_0) \, \xi^\alpha, \qquad \xi \in [-1,1]^N.
\]
Then for all $\xi \in [-1,1]^N$,
\[
\bigl| g(x_0 + r\xi) - T_p(\xi) \bigr| \le \frac{C_{p+1}^{(x)} \, r^{p+1} \, N^{p+1}}{(p+1)!}.
\]
\end{theorem}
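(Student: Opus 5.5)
The plan is to reduce to a one-variable Taylor expansion along the segment from $x_0$ to $x = x_0 + r\xi$, and then bound the Lagrange remainder term by term.

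Fix $\xi \in [-1,1]^N$ and set $\phi(t) := g(x_0 + t r\xi)$ for $t \in [0,1]$. Since $\cB(x_0,r)$ is convex and contains both $x_0$ and $x_0 + r\xi$, the whole segment lies in the patch, so $\phi \in C^{p+1}([0,1])$ by Definition \ref{ass:smooth}. By the chain rule and the multinomial theorem,
\[
\frac{\phi^{(k)}(t)}{k!} = \sum_{|\alpha| = k} \frac{r^{k}}{\alpha!}\, (\partial^\alpha g)(x_0 + t r\xi)\, \xi^\alpha .
\]
Evaluating at $t=0$ and summing over $k \le p$ gives $\sum_{k\le p} \phi^{(k)}(0)/k! = T_p(\xi)$, so $T_p$ is exactly the degree-$p$ one-dimensional Taylor polynomial of $\phi$ evaluated at $t=1$.

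The one-variable Taylor theorem with Lagrange remainder then gives, for some $\tau \in (0,1)$,
\[
g(x_0 + r\xi) - T_p(\xi) = \frac{\phi^{(p+1)}(\tau)}{(p+1)!} = \sum_{|\alpha| = p+1} \frac{r^{p+1}}{\alpha!}\, (\partial^\alpha g)(x_0 + \tau r\xi)\, \xi^\alpha .
\]

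It remains to bound this sum. Each derivative satisfies $|\partial^\alpha g| \le C_{p+1}^{(x)}$ on the patch, and $|\xi^\alpha| \le 1$ because $\|\xi\|_\infty \le 1$. Hence the remainder is at most
\[
C_{p+1}^{(x)}\, r^{p+1} \sum_{|\alpha| = p+1} \frac{1}{\alpha!}.
\]
The step that needs care is evaluating this combinatorial sum. By the multinomial theorem,
\[
\sum_{|\alpha| = p+1} \frac{(p+1)!}{\alpha!} = (1 + \cdots + 1)^{p+1} = N^{p+1},
\]
so $\sum_{|\alpha|=p+1} 1/\alpha! = N^{p+1}/(p+1)!$. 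Substituting this gives the claimed bound $C_{p+1}^{(x)} r^{p+1} N^{p+1}/(p+1)!$.

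An equivalent route avoids the multinomial sum: write $\phi^{(p+1)}(\tau) = (r\xi\cdot\nabla)^{p+1} g$, expand it as $N^{p+1}$ ordered partial derivatives, and bound each one by $C_{p+1}^{(x)} r^{p+1}$. Mixed partials commute because $g \in C^{p+1}$, so each ordered derivative equals some $\partial^\alpha g$ with $|\alpha| = p+1$.
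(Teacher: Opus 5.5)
Your proposal is correct and follows essentially the same route as the paper: a Taylor remainder along the segment from $x_0$ to $x_0+r\xi$, each $(p+1)$-th derivative bounded by $C_{p+1}^{(x)}$ together with $|\xi^\alpha|\le 1$, and then the multinomial identity $\sum_{|\alpha|=p+1}1/\alpha! = N^{p+1}/(p+1)!$. The only difference is that you derive the multivariate remainder bound from the one-variable Lagrange form of $t\mapsto g(x_0+tr\xi)$, whereas the paper simply quotes that bound as standard.
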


\begin{proof}
For a scalar function $g:\R^N \to \R$ that is $(p+1)$-times
continuously differentiable on $\cB(x_0,r)$, the Taylor expansion around $x_0$ up to total degree $p$ is
\[
g(x) = \sum_{|\alpha| \le p} \frac{1}{\alpha!} (\partial^\alpha g)(x_0) \cdot (x - x_0)^\alpha + R_p(x),
\]
where $R_p(x)$ is the remainder term. The standard multivariate remainder bound gives
\[
|R_p(x)| \le \sup_{z \in [x_0, x]} \max_{|\alpha| = p+1} |\partial^\alpha g(z)| \cdot \sum_{|\alpha| = p+1} \frac{|x - x_0|^\alpha}{\alpha!},
\]
where $[x_0, x]$ denotes the line segment from $x_0$ to $x$.

Set $x = x_0 + r\xi$ where $\xi \in [-1,1]^N$. Then $x - x_0 = r\xi$ and
\[
|x - x_0|^\alpha = (r|\xi|)^\alpha = r^{|\alpha|} |\xi|^\alpha.
\]
Since $|\xi_i| \le 1$, we have $|\xi|^\alpha \le 1$. For $|\alpha| = p+1$:
\[
|R_p(x_0 + r\xi)| \le C_{p+1}^{(x)} \, r^{p+1} \sum_{|\alpha| = p+1} \frac{1}{\alpha!}.
\]
Using the multinomial identity $\sum_{|\alpha| = m} \frac{1}{\alpha!} = \frac{N^m}{m!}$ with $m = p+1$ yields
\[
|R_p(x_0 + r\xi)| \le \frac{C_{p+1}^{(x)} \, r^{p+1} \, N^{p+1}}{(p+1)!}. \qedhere
\]
\end{proof}

\section{Embedding Scheme}

In order to embed the truncated Taylor polynomial constructed from the local patch into a tensor-train (TT) format, we define a custom TT-compatible feature map wherein the factorial terms of the Taylor polynomial form a factorial feature vector as defined below:

\begin{definition}[Factorial feature map of degree $p$]\label{def:features}
Fix an integer $p \ge 0$. For $\xi_i \in [-1,1]$, define the factorial feature vector
\[
v(\xi_i) \in \R^{p+1}, \qquad v(\xi_i)_k := \frac{\xi_i^k}{k!}, \quad k = 0, 1, \ldots, p.
\]
Define the tensor-product feature map
\[
\Phi(\xi) := v(\xi_1) \otimes v(\xi_2) \otimes \cdots \otimes v(\xi_N) \in \R^{(p+1)^N}.
\]
\end{definition}

The feature map $\Phi(\xi)$ is naturally indexed by multi-indices $\alpha \in \{0, 1, \ldots, p\}^N$ (the ``box'' truncation), with
\[
\Phi(\xi)[\alpha] = \prod_{i=1}^{N} \frac{\xi_i^{\alpha_i}}{\alpha_i!}.
\]

Throughout this paper, we use the following terminology:
\begin{itemize}
\item The \emph{simplex} (or \emph{total-degree index set}) refers to 
      $\{\alpha \in \mathbb{N}_0^N : |\alpha| \leq p\}$, which indexes 
      the Taylor polynomial $T_p$.
\item The \emph{box} (or \emph{product index set}) refers to 
      $\{0,1,\ldots,p\}^N$, the Cartesian product structure required 
      for tensor-train decomposition.
\end{itemize}

The Taylor polynomial $T_p(\xi)$ in Theorem \ref{thm:taylor} uses \emph{simplex truncation} (total degree $|\alpha| \le p$) which has $\binom{N+p}{p}$ terms, whereas the TT-compatible tensor-product feature map $\Phi(\xi)$ is indexed by the \emph{box} $\{0, \ldots, p\}^N$ which has $(p+1)^N$ entries. To express $T_p$ as an inner product with $\Phi$, we embed the simplex coefficients into the box tensor space by zero-padding. The zero-padding ensures that the ``extra'' box entries (those with $|\alpha| > p$) do not contribute to the inner product. We provide numerical evidence substantiating that such a zero-padding embedding scheme doesn't exhibit systematic rank inflation under TT decomposition. This embedding is what enables TT approximation of the corresponding coefficient tensor as defined below, while preserving the Taylor polynomial structure. 

\begin{definition}[Embedded Taylor coefficient tensor]\label{def:Astar}
Define the order-$N$ tensor $A^\star \in \R^{(p+1) \times \cdots \times (p+1)}$ indexed by $\alpha \in \{0, 1, \ldots, p\}^N$ via
\[
A^\star[\alpha] := 
\begin{cases}
r^{|\alpha|} (\partial^\alpha g)(x_0), & \text{if } |\alpha| \le p, \\[4pt]
0, & \text{if } |\alpha| > p.
\end{cases}
\]
Equivalently, using the normalized target $f(\xi) = g(x_0 + r\xi)$ and the chain rule identity $\partial_\xi^\alpha f(0) = r^{|\alpha|} (\partial_x^\alpha g)(x_0)$:
\[
A^\star[\alpha] = 
\begin{cases}
\partial_\xi^\alpha f(0), & \text{if } |\alpha| \le p, \\[4pt]
0, & \text{if } |\alpha| > p.
\end{cases}
\]
Note that $A^\star$ stores the \emph{unnormalized} derivatives; the factorials are carried by the feature map.
\end{definition}

\begin{lemma}[Taylor polynomial as inner product]\label{lem:taylor_ip}
With $A^\star$ as in Definition \ref{def:Astar} and $\Phi$ as in Definition \ref{def:features}, for all $\xi \in [-1,1]^N$:
\[
T_p(\xi) = \ip{A^\star}{\Phi(\xi)},
\]
where $\ip{\cdot}{\cdot}$ denotes the Frobenius inner product.
\end{lemma}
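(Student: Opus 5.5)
The proof is a direct expansion of the Frobenius inner product over the box index set $\{0,1,\ldots,p\}^N$, followed by a comparison with the definition of $T_p$ in Theorem~\ref{thm:taylor}. First, write
\[
\ip{A^\star}{\Phi(\xi)} = \sum_{\alpha \in \{0,\ldots,p\}^N} A^\star[\alpha]\,\Phi(\xi)[\alpha]
= \sum_{\alpha \in \{0,\ldots,p\}^N} A^\star[\alpha] \prod_{i=1}^N \frac{\xi_i^{\alpha_i}}{\alpha_i!},
\]
using the entrywise formula for $\Phi$ given after Definition~\ref{def:features}. The product simplifies by the multi-index conventions: $\prod_i \xi_i^{\alpha_i}/\alpha_i! = \xi^\alpha/\alpha!$.

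Next, split the box into the simplex part $\{\alpha : |\alpha| \le p\}$ and its complement $\{\alpha \in \{0,\ldots,p\}^N : |\alpha| > p\}$. On the complement, Definition~\ref{def:Astar} sets $A^\star[\alpha]=0$, so those terms vanish. The remaining sum is
\[
\sum_{|\alpha|\le p} \frac{r^{|\alpha|}(\partial^\alpha g)(x_0)}{\alpha!}\,\xi^\alpha,
\]
which is exactly $T_p(\xi)$.

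The one point to check is that the simplex really sits inside the box: if $|\alpha| \le p$, then each $\alpha_i \le |\alpha| \le p$, so $\alpha \in \{0,\ldots,p\}^N$. This guarantees that every Taylor term of $T_p$ appears exactly once in the box sum, so nothing is lost by the zero-padding. This is the only nontrivial step, and it is immediate.

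The argument makes no use of $\xi \in [-1,1]^N$; the identity holds as a polynomial identity for all $\xi \in \R^N$. One may also remark that the equivalent form $A^\star[\alpha]=\partial_\xi^\alpha f(0)$ gives the same statement as the Taylor polynomial of $f$ at $0$.
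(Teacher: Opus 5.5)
Your proof is correct and matches the paper's argument essentially step for step: you expand the Frobenius inner product over the box, apply the entrywise formula for $\Phi$, drop the zero-padded entries with $|\alpha|>p$, and use $\prod_i \xi_i^{\alpha_i}/\alpha_i! = \xi^\alpha/\alpha!$. Two of your remarks go slightly beyond the paper, and both are correct: the explicit check that the simplex lies inside the box (which the paper leaves implicit), and the observation that the identity holds for all $\xi\in\R^N$.
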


\begin{proof}
Expanding the inner product:
\begin{align*}
\ip{A^\star}{\Phi(\xi)} 
&= \sum_{\alpha \in \{0,\ldots,p\}^N} A^\star[\alpha] \, \Phi(\xi)[\alpha] \\
&= \sum_{\alpha \in \{0,\ldots,p\}^N} A^\star[\alpha] \prod_{i=1}^{N} \frac{\xi_i^{\alpha_i}}{\alpha_i!} \\
&= \sum_{|\alpha| \le p} r^{|\alpha|} (\partial^\alpha g)(x_0) \prod_{i=1}^{N} \frac{\xi_i^{\alpha_i}}{\alpha_i!} \\
&= \sum_{|\alpha| \le p} \frac{r^{|\alpha|}}{\alpha!} (\partial^\alpha g)(x_0) \, \xi^\alpha \\
&= T_p(\xi),
\end{align*}
where the third equality uses that $A^\star[\alpha] = 0$ for $|\alpha| > p$ as per zero-padding, and the fourth uses $\alpha! = \prod_i \alpha_i!$.
\end{proof}

We explicitly bound the feature vector defined in Definition \ref{def:features} and identified the bound as a modified Bessel function. Later, this bound will feature in the TT error term.

\begin{lemma}[Bessel feature norm bound]\label{lem:bessel}
Let $I_0$ denote the modified Bessel function of the first kind (order $0$):
\[
I_0(2) = \sum_{k=0}^{\infty} \frac{1}{(k!)^2}.
\]
Define $K := \sqrt{I_0(2)} \approx 1.50983$ (numerically). Then for all $\xi \in [-1,1]^N$,
\[
\|\Phi(\xi)\|_2 \le K^N.
\]
\end{lemma}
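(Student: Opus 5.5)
The plan is to use the fact that the Euclidean norm of a Kronecker product of vectors is the product of their norms, $\|u\otimes w\|_2=\|u\|_2\|w\|_2$. This holds because the squared entries of $u\otimes w$ are exactly the products $u_j^2 w_k^2$, and summing over $(j,k)$ factorizes. Iterating over the $N$ factors in Definition \ref{def:features} gives
\[
\|\Phi(\xi)\|_2=\prod_{i=1}^N \|v(\xi_i)\|_2 ,
\]
so the claim reduces to a one-dimensional bound $\|v(\xi_i)\|_2\le K$ for each $\xi_i\in[-1,1]$.

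For the one-dimensional bound, we write
\[
\|v(\xi_i)\|_2^2=\sum_{k=0}^{p}\frac{\xi_i^{2k}}{(k!)^2}.
\]
Since $|\xi_i|\le 1$, each term is at most $1/(k!)^2$. Because every term of the series is nonnegative, extending the finite sum over $k\le p$ to the full series only increases it, giving
\[
\|v(\xi_i)\|_2^2\le \sum_{k=0}^{\infty}\frac{1}{(k!)^2}=I_0(2).
\]
This value is the modified Bessel function: $I_0(z)=\sum_k (z/2)^{2k}/(k!)^2$ evaluated at $z=2$. Taking square roots gives $\|v(\xi_i)\|_2\le K$, and multiplying the $N$ factors gives $\|\Phi(\xi)\|_2\le K^N$.

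There is no real obstacle. The only point needing care is justifying the Kronecker-norm multiplicativity, which is a direct index computation as above. The numerical value $K\approx1.50983$ comes from summing a few terms, $1+1+1/4+1/36+\cdots\approx 2.2796$, and taking the square root. The bound is uniform in $p$, which is what later steps will use.
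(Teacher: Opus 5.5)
Your proposal is correct and follows the paper's proof essentially step for step. You use multiplicativity of the Euclidean norm under Kronecker products to reduce to a single coordinate. You then bound $\sum_{k=0}^{p}\xi_i^{2k}/(k!)^2$ by the full series $\sum_{k\ge 0} 1/(k!)^2 = I_0(2)$ and multiply the $N$ factors.
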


\begin{proof}
By definition of $\Phi(\xi)$ as a tensor (Kronecker) product and using $\|u \otimes w\|_2 = \|u\|_2 \|w\|_2$ we have:
\[
\|\Phi(\xi)\|_2 = \|v(\xi_1) \otimes \cdots \otimes v(\xi_N)\|_2 = \prod_{i=1}^{N} \|v(\xi_i)\|_2,
\]

For each coordinate $i$, since $|\xi_i| \le 1$:
\[
\|v(\xi_i)\|_2^2 = \sum_{k=0}^{p} \left( \frac{\xi_i^k}{k!} \right)^2 \le \sum_{k=0}^{\infty} \frac{1}{(k!)^2} = I_0(2).
\]
Thus $\|v(\xi_i)\|_2 \le \sqrt{I_0(2)} = K$ for each $i$, and multiplying over $i = 1, \ldots, N$ gives $\|\Phi(\xi)\|_2 \le K^N$.
\end{proof}


 $K^N$ grows exponentially in $N$. This exponential factor is a manifestation of the curse of dimensionality for full tensor-product features.

\begin{lemma}[$r$-dependent upper bound on $\|A^\star\|_F$]\label{lem:lambda_star}
Under Definition \ref{ass:smooth}, the embedded Taylor coefficient tensor satisfies
\[
\|A^\star\|_F \le \Lambda^\star(r) := C_{\le p}^{(x)} \left( \sum_{m=0}^{p} \binom{N + m - 1}{m} r^{2m} \right)^{1/2}.
\]
\end{lemma}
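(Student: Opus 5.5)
The plan is to compute $\|A^\star\|_F^2$ directly from Definition \ref{def:Astar}, bound each entry with $C_{\le p}^{(x)}$, and then count the nonzero entries of $A^\star$ by total degree. The zero-padding means only multi-indices in the simplex contribute. So
\[
\|A^\star\|_F^2 = \sum_{\alpha \in \{0,\ldots,p\}^N} A^\star[\alpha]^2 = \sum_{|\alpha|\le p} r^{2|\alpha|}\bigl((\partial^\alpha g)(x_0)\bigr)^2 .
\]
Every $\alpha$ with $|\alpha|\le p$ automatically has $\alpha_i\le p$. Hence the simplex sits inside the box and no simplex coefficient is lost in the embedding. I will state this explicitly, since it is what makes the sum above exact.

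\textbf{Step 1 (entrywise bound).} Since $x_0\in\cB(x_0,r)$ and $0\le|\alpha|\le p$, Definition \ref{ass:smooth} gives $|(\partial^\alpha g)(x_0)|\le C_{\le p}^{(x)}$. Therefore
\[
\|A^\star\|_F^2 \le \bigl(C_{\le p}^{(x)}\bigr)^2 \sum_{m=0}^{p} r^{2m}\,\#\{\alpha\in\Nzero^N : |\alpha| = m\}.
\]

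\textbf{Step 2 (counting).} I group the multi-indices by total degree $m$. By stars and bars, the number of $\alpha\in\Nzero^N$ with $|\alpha|=m$ is $\binom{N+m-1}{m}$. Substituting this count and taking square roots gives exactly $\Lambda^\star(r)$.

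I do not expect a real obstacle here. The only points that need care are the following:
\begin{itemize}
\item checking that the simplex is contained in the box, so the Frobenius sum reduces exactly to the simplex sum;
\item using the stars-and-bars count $\binom{N+m-1}{m}$ correctly;
\item noting that the bound uses the supremum constant $C_{\le p}^{(x)}$ evaluated at the center $x_0$, which is legitimate because $x_0$ lies in the patch.
\end{itemize}

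I might also remark that the bound depends on $r$ only through $r^{2m}$ weights on degree-$m$ terms. Hence $\Lambda^\star(r)\to C_{\le p}^{(x)}$ as $r\to 0$, consistent with the monotonicity in $r$ used later.
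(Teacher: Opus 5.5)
Your proposal is correct and follows the paper's proof: bound each simplex entry by $|A^\star[\alpha]| \le r^{|\alpha|} C_{\le p}^{(x)}$, use that the padded entries vanish, and count the multi-indices of total degree $m$ by stars and bars as $\binom{N+m-1}{m}$. Your closing aside that $\Lambda^\star(r)\to C_{\le p}^{(x)}$ is a little loose, since $C_{\le p}^{(x)}$ is a supremum over the patch and so itself depends on $r$ (the paper makes the same remark), but the lemma does not rely on it.
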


\begin{proof}
For each multi-index $\alpha$ with $|\alpha| = m \le p$ by Definition \ref{def:Astar}:
\[
|A^\star[\alpha]| = |r^m (\partial^\alpha g)(x_0)| \le r^m C_{\le p}^{(x)}.
\]
For $|\alpha| > p$, we have $A^\star[\alpha] = 0$ by same definition. Therefore:
\[
\|A^\star\|_F^2 = \sum_{\alpha \in \{0,\ldots,p\}^N} |A^\star[\alpha]|^2 = \sum_{m=0}^{p} \sum_{\substack{\alpha \in \Nzero^N \\ |\alpha| = m}} |A^\star[\alpha]|^2 \le (C_{\le p}^{(x)})^2 \sum_{m=0}^{p} r^{2m} \cdot \#\{\alpha \in \Nzero^N : |\alpha| = m\}.
\]
where $\#$ represents the cardinality of the set. The number of nonnegative integer solutions to $\alpha_1 + \cdots + \alpha_N = m$ is $\binom{N + m - 1}{m}$. Taking the square root yields the claim.
\end{proof}

\section{Tensor-Train (TT) Representation}
We formalize a tensor-train using the following definition:


\begin{definition}[TT-rank and TT tensor class]\label{def:tt}
An order-$N$ tensor $A \in \R^{(p+1) \times \cdots \times (p+1)}$ has TT-rank at most $\chi$ if it admits a representation
\[
A[\alpha_1, \ldots, \alpha_N] = G_1[\alpha_1] \, G_2[\alpha_2] \cdots G_N[\alpha_N],
\]
where $G_k[\alpha_k] \in \R^{r_{k-1} \times r_k}$ are matrices (cores) with $r_0 = r_N = 1$ and $r_k \le \chi$ for $k = 1, \ldots, N-1$.

For $\Lambda > 0$ and $\chi \in \N$, define the constrained tensor class
\[
\cA_{\mathrm{TT}}(\Lambda, \chi) := \{ A \in \R^{(p+1) \times \cdots \times (p+1)} : \rankTT(A) \le \chi, \, \|A\|_F \le \Lambda \}.
\]
\end{definition}

\begin{definition}[Best TT approximation error in the constrained class]\label{def:epsTT}
Fix $\Lambda>0$ and a TT-rank budget $\chi\in\N$. Define
\[
\varepsilon_{\mathrm{TT}}(\Lambda,\chi) := \inf \{ \|A^\star - A\|_F : A \in \cA_{\mathrm{TT}}(\Lambda,\chi)\}.
\]
\end{definition}

\begin{lemma}[TT-SVD error bound]\label{lem:ttsvd}
Let $A$ be an order-$N$ tensor and let $A_{\mathrm{TT}}$ be obtained by TT-SVD (left-orthonormal / left-canonical) with uniform rank cap $\chi$. Let $\sigma_{k,j}$ denote the singular values of the $k$-th unfolding of $A$. Then
\[
\|A - A_{\mathrm{TT}}\|_F^2 \le \sum_{k=1}^{N-1} \sum_{j > \chi} \sigma_{k,j}^2.
\]
\end{lemma}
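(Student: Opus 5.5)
We prove the standard Oseledets bound by induction on the sweep of TT-SVD, using orthogonality of the successive projections. Write the left-to-right TT-SVD as follows. Set $B_0 := A$. At step $k$ ($k=1,\dots,N-1$), form the $k$-th unfolding of the current tensor, compute its SVD, and keep the leading $\chi$ left singular vectors. Let $P_k$ be the orthogonal projector onto their span, acting on the first $k$ modes, i.e. on the row space of the $k$-th unfolding. The output then has the form
\[
A_{\mathrm{TT}} = P_{N-1}\cdots P_1 A .
\]
Each $P_k$ is built from left-orthonormal cores $U_1,\dots,U_k$, so it is of the form $W_k W_k^\top$ with $W_k$ having orthonormal columns and $W_k$ nested with $W_{k-1}$. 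The target estimate is the telescoping decomposition
\[
A - A_{\mathrm{TT}} = \sum_{k=1}^{N-1} \bigl(P_{k-1}\cdots P_1 A - P_k P_{k-1}\cdots P_1 A\bigr),
\]
where for $k=1$ the term $P_0\cdots P_1 A$ is read as $A$.

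\textbf{Step 1: orthogonality.} The terms in the sum are mutually orthogonal in the Frobenius inner product. The key fact is that, because of the nested left-orthonormal structure, the composed operators $Q_k := P_k\cdots P_1$ are themselves orthogonal projectors with decreasing ranges ($\mathrm{ran}\, Q_k \subseteq \mathrm{ran}\, Q_{k-1}$). Each difference term $(Q_{k-1}-Q_k)A$ lies in $\mathrm{ran}\,Q_{k-1}\ominus \mathrm{ran}\,Q_k$, and these subspaces are pairwise orthogonal. Pythagoras then gives
\[
\|A - A_{\mathrm{TT}}\|_F^2 = \sum_k \|(I-P_k) Q_{k-1} A\|_F^2 .
\]
Checking that $Q_k$ is an orthogonal projector (self-adjoint and idempotent) is the main obstacle. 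I would argue it concretely: $Q_k$ acts as $W_kW_k^\top$ on the first $k$ modes tensored with the identity on the remaining modes, and $W_k = (W_{k-1}\otimes I)\,U_k$ with $U_k$ having orthonormal columns. This identity is exactly the left-orthonormality of the cores, and it yields $Q_k Q_{k-1} = Q_{k-1}Q_k = Q_k$.

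\textbf{Step 2: bounding each term.} The quantity $\|(I-P_k)Q_{k-1}A\|_F^2$ is the truncation error of the rank-$\chi$ SVD of the $k$-th unfolding of $Q_{k-1}A$. By Eckart--Young it equals $\sum_{j>\chi}\tilde\sigma_{k,j}^2$, where $\tilde\sigma_{k,j}$ are the singular values of that unfolding. The unfolding of $Q_{k-1}A$ is (projection on the row indices) times (the unfolding of $A$), since $Q_{k-1}$ acts only on the first $k-1$ modes, which are row indices of the $k$-th unfolding. Multiplication by an orthogonal projector cannot increase singular values: $\tilde\sigma_{k,j}\le\sigma_{k,j}$ by the min--max characterization, or equivalently because $\|PM\|\le\|M\|$ for all the relevant Ky Fan-type quantities. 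Hence $\sum_{j>\chi}\tilde\sigma_{k,j}^2\le\sum_{j>\chi}\sigma_{k,j}^2$. Summing over $k$ gives the claim.

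\textbf{Step 3: the un-reshaped algorithm.} If the paper's TT-SVD is the standard version (SVD, then reshape $\Sigma V^\top$ and continue), I would note that it produces the same tensor as the projector description above. At each step the remainder $\Sigma V^\top = U^\top(\text{unfolding})$, so reconstructing gives exactly $W_kW_k^\top$ applied to the current tensor. This is the usual identification from Oseledets~\cite{Oseledets2011}, and I would simply cite it there for that routine identification.
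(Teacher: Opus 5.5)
Your proof is correct. The paper gives no argument of its own here (it only cites Oseledets), so what you have written is a self-contained version of that standard result, organized differently. Oseledets argues by induction on the order. He splits the first unfolding as $A^{\langle 1\rangle}=U_1B_1+E_1$ with $U_1^\top E_1=0$, so the total error is $\|E_1\|_F^2$ plus the error of recursively compressing $B_1=U_1^\top A^{\langle 1\rangle}$. He then notes that each unfolding of $B_1$ is $(U_1^\top\otimes I)$ times the corresponding unfolding of $A$, hence no harder to approximate at rank $\chi$. You instead unroll that recursion into a single telescoping sum over nested projectors and apply Pythagoras once. The ingredients are the same: orthogonality of successive residuals, the SVD tail identity, and the fact that a contraction cannot increase singular values. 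Your version has the advantage of exhibiting $A_{\mathrm{TT}}$ explicitly as an orthogonal projection of $A$. Indeed, as your own description of $Q_k$ shows, $\mathrm{ran}\,P_k\subseteq\mathrm{ran}\,P_{k-1}$ forces $Q_k=P_k$. So $A_{\mathrm{TT}}=P_{N-1}A$, and the step you flag as the main obstacle is immediate.

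Two small corrections. First, the Ky Fan alternative in Step 2 is not equivalent to what you need. Monotonicity of Ky Fan-type norms bounds partial sums over the \emph{largest} singular values, and that gives no control of the tail $\sum_{j>\chi}\sigma_j^2$. Use the per-index bound $\sigma_j(PM)\le\sigma_j(M)$ from min--max, or Eckart--Young directly (this is Oseledets' route): $\sum_{j>\chi}\sigma_j(PM)^2=\min_{\mathrm{rank}\,R\le\chi}\|PM-R\|_F^2\le\|P(M-M_\chi)\|_F^2\le\|M-M_\chi\|_F^2$, where $M_\chi$ is the rank-$\chi$ truncated SVD of $M$. Second, the nesting $W_k=(W_{k-1}\otimes I)U_k$ requires the retained singular vectors to lie in $\mathrm{ran}(W_{k-1}\otimes I)$ when the unfolding of $Q_{k-1}A$ has rank below $\chi$. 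This is automatic in the standard algorithm, which takes the SVD of the compressed remainder $(W_{k-1}^\top\otimes I)A^{\langle k\rangle}$ rather than of the full-size unfolding.
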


\begin{proof}
This is the standard TT-SVD quasi-optimality estimate; see Oseledets \cite{Oseledets2011}, Eq. (2.4).
\end{proof}

\section{Deterministic Error Bounds}
We have all the pieces to put together a deterministic error bound for the Taylor-TT certificate predictor.


\begin{lemma}[Relation between $x$- and $\xi$-derivative bounds]\label{lem:scale_derivs}
As defined under the Definition \ref{ass:smooth}, the normalized target $f(\xi) = g(x_0 + r\xi)$ satisfies
\[
C_{p+1}^{(\xi)} := \sup_{\xi \in [-1,1]^N} \max_{|\alpha| = p+1} |\partial_\xi^\alpha f(\xi)| \le r^{p+1} C_{p+1}^{(x)}.
\]
\end{lemma}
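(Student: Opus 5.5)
The plan is to derive the bound directly from the chain rule for the affine change of variables $x = x_0 + r\xi$, in three steps.

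\emph{Step 1 (derivative identity).} Since $f(\xi) = g(x_0 + r\xi)$ and the map $\xi \mapsto x_0 + r\xi$ is affine with diagonal Jacobian $r I_N$, each coordinate derivative satisfies $\partial_{\xi_i} f(\xi) = r\,(\partial_{x_i} g)(x_0 + r\xi)$. I will prove by induction on $|\alpha|$ that
\[
\partial_\xi^\alpha f(\xi) = r^{|\alpha|}\,(\partial_x^\alpha g)(x_0 + r\xi)
\]
for every multi-index $\alpha$ with $|\alpha| \le p+1$ and every $\xi \in [-1,1]^N$. The induction step applies one more $\partial_{\xi_i}$ to both sides and uses the first-order rule again, with $\partial_x^\alpha g$ in place of $g$. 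This identity is the same one already quoted in Definition~\ref{def:Astar} at $\xi = 0$; here I need it at every point of the cube.

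\emph{Step 2 (regularity).} By Definition~\ref{ass:smooth}, $g$ is $(p+1)$-times continuously differentiable on $\cB(x_0,r)$. Hence all mixed partials of $g$ up to order $p+1$ exist and are continuous, and the order of differentiation does not matter. This makes the induction in Step 1 legitimate and shows that $f$ is $C^{p+1}$ on $[-1,1]^N$. On the boundary of the cube the derivatives are taken one-sided, which is consistent with how the constants in Definition~\ref{ass:smooth} are defined.

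\emph{Step 3 (supremum).} Fix $|\alpha| = p+1$ and $\xi \in [-1,1]^N$. Then $x = x_0 + r\xi$ satisfies $\|x - x_0\|_\infty = r\|\xi\|_\infty \le r$, so $x \in \cB(x_0,r)$. By Step 1,
\[
|\partial_\xi^\alpha f(\xi)| = r^{p+1}\,|\partial_x^\alpha g(x)| \le r^{p+1} C_{p+1}^{(x)}.
\]
Taking the maximum over $|\alpha| = p+1$ and then the supremum over $\xi$ gives $C_{p+1}^{(\xi)} \le r^{p+1} C_{p+1}^{(x)}$. In fact this is an equality, because the map $\xi \mapsto x_0 + r\xi$ is a bijection from $[-1,1]^N$ onto $\cB(x_0,r)$.

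There is no real obstacle. The only point that needs care is Step 2: justifying the induction with mixed partials and handling the boundary of the cube. Both follow from the $C^{p+1}$ assumption.
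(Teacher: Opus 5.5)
Your proposal is correct and matches the paper's proof: the chain rule $\partial_{\xi_i} = r\,\partial_{x_i}$ gives $\partial_\xi^\alpha f(\xi) = r^{|\alpha|}(\partial_x^\alpha g)(x_0+r\xi)$, and taking $|\alpha|=p+1$ and suprema yields the bound. Your extra care about the induction and the one-sided boundary derivatives is a valid filling-in of details the paper leaves implicit. So is your remark that equality actually holds, since $\xi\mapsto x_0+r\xi$ maps $[-1,1]^N$ onto $\cB(x_0,r)$.
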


\begin{proof}
By the chain rule, $\partial_{\xi_i} = r \, \partial_{x_i}$. Thus $\partial_\xi^\alpha f(\xi) = r^{|\alpha|} (\partial_x^\alpha g)(x_0 + r\xi)$. Taking $|\alpha| = p+1$ and suprema yields the result.
\end{proof}


\begin{lemma}[TT approximation error propagates with $K^N$]\label{lem:tt_comp}

Let $\xi=(x-x_0)/r$. Then for all $x\in\cB(x_0,r)$,
\[
|T_p(\xi) - h_{A_{\mathrm{TT}}}(x)| \le \|A^\star-A_{\mathrm{TT}}\|_F K^N.
\]

\end{lemma}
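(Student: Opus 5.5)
The argument is a direct Cauchy--Schwarz estimate on the feature inner product. First I make explicit the convention, implicit in the statement, that the certificate predictor is $h_{A}(x) := \ip{A}{\Phi(\xi)}$ with $\xi=(x-x_0)/r$. This is the same feature map used for $T_p$ in Lemma \ref{lem:taylor_ip}. For $x\in\cB(x_0,r)$, Definition \ref{def:patch} gives $\xi\in[-1,1]^N$, so all earlier lemmas stated on $[-1,1]^N$ apply.

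The proof then has three steps.
\begin{enumerate}
\item Use Lemma \ref{lem:taylor_ip} to write $T_p(\xi)=\ip{A^\star}{\Phi(\xi)}$. Bilinearity of the Frobenius inner product then gives
\[
T_p(\xi)-h_{A_{\mathrm{TT}}}(x)=\ip{A^\star-A_{\mathrm{TT}}}{\Phi(\xi)}.
\]
\item Apply Cauchy--Schwarz in $\R^{(p+1)^N}$, viewing both tensors as vectors so that the Frobenius norm equals the Euclidean norm:
\[
\bigl|\ip{A^\star-A_{\mathrm{TT}}}{\Phi(\xi)}\bigr| \le \|A^\star-A_{\mathrm{TT}}\|_F\,\|\Phi(\xi)\|_2 .
\]
\item Invoke Lemma \ref{lem:bessel}, which gives $\|\Phi(\xi)\|_2\le K^N$ uniformly over $\xi\in[-1,1]^N$. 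This yields the claimed bound for every $x$ in the patch.
\end{enumerate}

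There is no genuine obstacle. The only points that need care are that $h_{A_{\mathrm{TT}}}$ is evaluated through the same factorial feature map $\Phi$, so that the factorials are carried consistently, and that $A_{\mathrm{TT}}$ lives in the same box index space $\{0,\dots,p\}^N$ as the zero-padded $A^\star$. Both hold by construction, since TT-SVD in Lemma \ref{lem:ttsvd} returns a tensor of the same shape. No property of $A_{\mathrm{TT}}$ beyond its shape is used, so the bound holds for any tensor in place of $A_{\mathrm{TT}}$. In particular it can later be combined with $\varepsilon_{\mathrm{TT}}(\Lambda,\chi)$ or with the singular-value tail of Lemma \ref{lem:ttsvd}.
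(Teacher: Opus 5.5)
Your proposal is correct and matches the paper's argument. The paper writes $T_p(\xi)=h_{A^\star}(x)$ via Lemma~\ref{lem:taylor_ip} and then applies the Lipschitz bound of Lemma~\ref{lem:predbound} with $A=A^\star$, $B=A_{\mathrm{TT}}$; that lemma is itself proved by your Cauchy--Schwarz step combined with Lemma~\ref{lem:bessel}, so you have simply inlined it, which also avoids the paper's forward reference to a lemma stated later.
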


\begin{proof}
Recall that $\xi=(x-x_0)/r$. By Lemma \ref{lem:taylor_ip},
\[
T_p(\xi)=\ip{A^\star}{\Phi(\xi)}=h_{A^\star}(x).
\]
Apply Lemma \ref{lem:predbound} with $A = A^\star$ and $B = A_{\mathrm{TT}}$.
\end{proof}

\begin{theorem}[Deterministic error decomposition (Taylor-TT certificate predictor)]\label{thm:deterministic}
Let $g$ be defined under Definition \ref{ass:smooth}. Let $A_{\mathrm{TT}}\in \cA_{\mathrm{TT}}(\Lambda,\chi)$ be arbitrary.
Then for all $x\in\cB(x_0,r)$, writing $\xi=(x-x_0)/r$, we have
\[
\bigl| g(x) - h_{A_{\mathrm{TT}}}(x) \bigr|
\le
\underbrace{\frac{C_{p+1}^{(x)} \, r^{p+1} \, N^{p+1}}{(p+1)!}}_{\text{Taylor truncation}}
\;+\;
\underbrace{\|A^\star-A_{\mathrm{TT}}\|_F \, K^N}_{\text{TT approximation}}.
\]
\end{theorem}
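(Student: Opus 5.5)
The plan is a triangle inequality that splits the error through the Taylor polynomial. For $x\in\cB(x_0,r)$ with $\xi=(x-x_0)/r\in[-1,1]^N$, we have $g(x)=g(x_0+r\xi)=f(\xi)$, so
\[
\bigl|g(x)-h_{A_{\mathrm{TT}}}(x)\bigr|\le \bigl|g(x_0+r\xi)-T_p(\xi)\bigr| + \bigl|T_p(\xi)-h_{A_{\mathrm{TT}}}(x)\bigr|.
\]
We then bound the two pieces separately by earlier results.

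\textbf{First term.} Theorem \ref{thm:taylor} applies directly, since $\xi\in[-1,1]^N$ and $g$ is $(p+1)$-times continuously differentiable on the patch by Definition \ref{ass:smooth}. It gives exactly the Taylor truncation contribution $C_{p+1}^{(x)}r^{p+1}N^{p+1}/(p+1)!$.

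\textbf{Second term.} Here we invoke Lemma \ref{lem:tt_comp}, which yields $\|A^\star-A_{\mathrm{TT}}\|_F K^N$. To make this self-contained, we unpack it with the predictor convention $h_A(x)=\ip{A}{\Phi((x-x_0)/r)}$. Lemma \ref{lem:taylor_ip} identifies $T_p(\xi)=\ip{A^\star}{\Phi(\xi)}$, so the difference equals $\ip{A^\star-A_{\mathrm{TT}}}{\Phi(\xi)}$. Cauchy--Schwarz in the Frobenius inner product bounds this by $\|A^\star-A_{\mathrm{TT}}\|_F\,\|\Phi(\xi)\|_2$. Lemma \ref{lem:bessel} then gives $\|\Phi(\xi)\|_2\le K^N$ uniformly over the patch.

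Summing the two bounds gives the stated inequality. Note that neither the rank constraint nor the norm budget $\Lambda$ is used, since $A_{\mathrm{TT}}$ enters only through $\|A^\star-A_{\mathrm{TT}}\|_F$; the result therefore holds for any tensor of the right shape.

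\textbf{Main obstacle.} The mathematics is routine, so the main care needed is bookkeeping. First, $h_A$ must be evaluated in normalized coordinates, so that the same $\Phi(\xi)$ appears in both the Taylor identity and the TT predictor. Second, the zero-padded box embedding must reproduce the simplex-truncated $T_p$ exactly, which Lemma \ref{lem:taylor_ip} already guarantees.
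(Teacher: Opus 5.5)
Your proposal is correct and matches the paper's proof: a triangle inequality through $T_p(\xi)$, Theorem~\ref{thm:taylor} for the truncation term, and Lemma~\ref{lem:tt_comp} for the TT term. Your unpacking of that lemma into Lemma~\ref{lem:taylor_ip} plus the Cauchy--Schwarz/Bessel bound of Lemma~\ref{lem:predbound} is exactly how the paper proves it, and you are right that the rank and norm constraints on $A_{\mathrm{TT}}$ are never used.
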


\begin{proof}
By the triangle inequality,
\[
|g(x)-h_{A_{\mathrm{TT}}}(x)|
\le |g(x_0+r\xi)-T_p(\xi)| + |T_p(\xi)-h_{A_{\mathrm{TT}}}(x)|.
\]
Apply Theorem \ref{thm:taylor} to the first term and Lemma \ref{lem:tt_comp} to the second.
\end{proof}

\begin{remark}[When the deterministic certificate is informative]\label{rem:informative}
Theorem \ref{thm:deterministic} holds for any $(\Lambda,\chi)$ and does \emph{not} assume
that $A^\star$ is well-compressible in TT format. The bound is quantitatively informative
when the TT approximation term does not dominate over the Taylor truncation term. This condition is problem-dependent and can be assessed empirically by measuring the TT
approximation error of the embedded Taylor tensor $A^\star$ as a function of $\chi$.
\end{remark}

\section{Statistical Learning Framework}\label{sec:statistical}

Now that we have a certificate predictor with deterministic error bounds, we will embed it into a statistical learning framework. For this, we define an appropriate hypothesis class for the learning task.

\begin{definition}[TT hypothesis class]\label{def:hypclass}
Define the hypothesis class of TT-parameterized predictors:
\[
\cH_{\mathrm{TT}}(\Lambda, \chi) := \left\{ h_A(x) = \ip{A}{\Phi\left( \frac{x - x_0}{r} \right)} : A \in \cA_{\mathrm{TT}}(\Lambda, \chi) \right\}.
\]
Equivalently, $h_A(x) = \tilde{h}_A(\xi)$ where $\tilde{h}_A(\xi) := \ip{A}{\Phi(\xi)}$ and $\xi = (x - x_0)/r$.
\end{definition}

$\mathcal{H}_{\mathrm{TT}}(\Lambda,\chi)$ in Definition \ref{def:hypclass} is \emph{not} restricted to simplex
support: an arbitrary $A\in\mathcal{A}_{\mathrm{TT}}(\Lambda,\chi)$ may place mass on $|\alpha|>p$ terms.
Our deterministic Taylor-TT bound (Theorem \ref{thm:deterministic}) should be interpreted as an
\emph{approximation certificate} for the class: it upper-bounds the error of the specific predictor
$h_{A_{\mathrm{TT}}}$ obtained by TT-approximating the embedded Taylor tensor $A^\star$.

\begin{lemma}[Prediction boundedness and Lipschitz continuity in $A$]\label{lem:predbound}
For any tensors $A, B \in \R^{(p+1) \times \cdots \times (p+1)}$ and any $x \in \cB(x_0, r)$:
\[
|h_A(x)| \le \|A\|_F \, K^N, \qquad |h_A(x) - h_B(x)| \le \|A - B\|_F \, K^N.
\]
For $A \in \cA_{\mathrm{TT}}(\Lambda, \chi)$, we have $|h_A(x)| \le \Lambda K^N$ uniformly over $x \in \cB(x_0, r)$.
\end{lemma}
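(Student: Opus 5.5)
The plan is to reduce both inequalities to the Cauchy--Schwarz inequality for the Frobenius inner product, combined with the feature-norm bound of Lemma \ref{lem:bessel}. Fix $x \in \cB(x_0,r)$ and set $\xi = (x-x_0)/r$. By Definition \ref{def:patch}, $\|x-x_0\|_\infty \le r$, so every coordinate satisfies $|\xi_i| \le 1$, i.e.\ $\xi \in [-1,1]^N$. This places $\xi$ in the domain where Lemma \ref{lem:bessel} applies, giving $\|\Phi(\xi)\|_2 \le K^N$. Note that the Frobenius norm of a tensor coincides with the Euclidean norm of its vectorization in $\R^{(p+1)^N}$. Hence $\ip{\cdot}{\cdot}$ is just the standard inner product in that space, and Cauchy--Schwarz applies directly.

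For the first inequality, I will write $h_A(x) = \ip{A}{\Phi(\xi)}$ as in Definition \ref{def:hypclass}. Cauchy--Schwarz then gives
\[
|h_A(x)| \le \|A\|_F\,\|\Phi(\xi)\|_2 \le \|A\|_F\, K^N .
\]
For the Lipschitz statement, I will use linearity of the inner product in its first argument:
\[
h_A(x) - h_B(x) = \ip{A-B}{\Phi(\xi)}.
\]
The same Cauchy--Schwarz step then yields $|h_A(x)-h_B(x)| \le \|A-B\|_F\,K^N$. No TT structure is used in either step, which is consistent with the statement holding for arbitrary tensors $A,B$.

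The final claim is immediate. For $A \in \cA_{\mathrm{TT}}(\Lambda,\chi)$, Definition \ref{def:tt} gives $\|A\|_F \le \Lambda$, so $|h_A(x)| \le \Lambda K^N$. This bound does not depend on $x$, so it holds uniformly over the patch.

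There is no real obstacle in this argument. The only point requiring care is verifying that $\xi$ lies in $[-1,1]^N$, which is what lets Lemma \ref{lem:bessel} be invoked; this follows from the choice of the $\ell^\infty$ patch in Definition \ref{def:patch}.
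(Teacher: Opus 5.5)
Your proof is correct and matches the paper's argument: Cauchy--Schwarz for the Frobenius inner product, the feature-norm bound $\|\Phi(\xi)\|_2 \le K^N$ from Lemma \ref{lem:bessel}, linearity in the first argument for the Lipschitz estimate, and $\|A\|_F \le \Lambda$ for the uniform bound on the TT class. You are slightly more explicit than the paper in checking that $\xi \in [-1,1]^N$ follows from the $\ell^\infty$ patch, and that is the right point to verify.
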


\begin{proof}
Recall that $\xi = (x - x_0)/r \in [-1,1]^N$. By Cauchy-Schwarz inequality and Lemma \ref{lem:bessel}:
\[
|h_A(x)| = |\ip{A}{\Phi(\xi)}| \le \|A\|_F \|\Phi(\xi)\|_2 \le \|A\|_F \, K^N.
\]
The Lipschitz bound follows similarly: $|h_A(x) - h_B(x)| = |\ip{A - B}{\Phi(\xi)}| \le \|A - B\|_F \, K^N$.
\end{proof}

\begin{definition}[Local TT hypothesis class]\label{def:localhypclass}
Given patch radius $r > 0$ and bond dimension $\chi \in \mathbb{N}$, define the canonical local hypothesis class $\cH_{\mathrm{TT}}(\Lambda^\star(r), \chi)$ where $\Lambda^\star(r)$ is given by Lemma \ref{lem:lambda_star}. By construction, $A^\star \in \cA_{\mathrm{TT}}(\Lambda^\star(r), \chi)$ whenever $\rankTT(A^\star) \le \chi$.

\end{definition}

With the existence of a good hypothesis established via the Taylor-TT certificate with deterministic error decomposition in Theorem \ref{thm:deterministic}, we define a \emph{learning task} as a regression problem, where the learner collects input-output samples $(X_i, Y_i)$ from the patch (where $Y_i = g(X_i) + \varepsilon_i$ may include measurement noise) and fits a predictor from a constrained tensor-train (TT) hypothesis class that minimizes the expected squared loss via empirical risk minimization (ERM) \cite{788640,ShalevShwartzBenDavid2014}. The statistical analysis proceeds by using standard theoretical machine learning tools to bound the population/true risk.  We use
uniform convergence \cite{JMLR:v11:shalev-shwartz10a} of empirical risk to population risk to show that the ERM solution generalizes i.e. low
training error implies low generalization error. We will now elaborate on these steps in the following sections.

\begin{definition}[Local regression model]\label{ass:regression}
We define \emph{local regression model} on the patch $\cB(x_0, r)$ as a tuple
$(P_X,\, g,\, \varepsilon,\, G_{\max},\, \sigma)$ where:
\begin{enumerate}
  \item $(X_i, Y_i)_{i=1}^n$ are i.i.d. with $X_i \sim P_X$ (probability distribution)
        supported on $\cB(x_0, r)$;
  \item $Y_i = g(X_i) + \varepsilon_i$ with
        $\E[\varepsilon_i \mid X_i] = 0$; (unbiased error/noise)
  \item $|g(x)| \le G_{\max}$ for all $x \in \cB(x_0, r)$
        and $|\varepsilon_i| \le \sigma$ a.s. (bounded signal and noise)
\end{enumerate}
We write $|Y_i| \le Y_{\max} := G_{\max} + \sigma$ a.s.
\end{definition}

\begin{definition}\label{def:risk}
    Define the population and empirical risks:
\[
L(h) := \E[(h(X) - Y)^2], \qquad \hat{L}_n(h) := \frac{1}{n} \sum_{i=1}^{n} (h(X_i) - Y_i)^2.
\]
Define the clean (noise-free) risk:
\[
R(h) := \E[(h(X) - g(X))^2].
\]
\end{definition}

\begin{lemma}[Excess risk transfers from $L$ to $R$]\label{lem:cleanrisk}
Under $\E[\varepsilon \mid X] = 0$:
\[
L(h) = R(h) + \E[\varepsilon^2] \qquad \text{for all measurable } h,
\]
so for any hypothesis class $\cH$:
\[
L(h) - \inf_{h' \in \cH} L(h') = R(h) - \inf_{h' \in \cH} R(h').
\]
\end{lemma}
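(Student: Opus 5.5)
The plan is to expand the square inside the population risk and show that the cross term vanishes by conditioning on $X$. For any measurable $h$, write $h(X)-Y = (h(X)-g(X)) - \varepsilon$, using $Y = g(X)+\varepsilon$ from Definition \ref{ass:regression}. Then
\[
(h(X)-Y)^2 = (h(X)-g(X))^2 - 2\,(h(X)-g(X))\,\varepsilon + \varepsilon^2 .
\]
Taking expectations term by term gives $L(h) = R(h) - 2\,\E[(h(X)-g(X))\varepsilon] + \E[\varepsilon^2]$. This requires each term to be integrable. That is immediate whenever $h$ is bounded on the patch, for instance for any $h\in\cH_{\mathrm{TT}}(\Lambda,\chi)$ by Lemma \ref{lem:predbound}, since $|g|\le G_{\max}$ and $|\varepsilon|\le\sigma$. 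For a general measurable $h$, the identity is read in the extended sense: if $R(h)=\infty$, then both sides are infinite.

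The key step is the cross term. By the tower property,
\[
\E[(h(X)-g(X))\varepsilon] = \E\bigl[(h(X)-g(X))\,\E[\varepsilon\mid X]\bigr] = 0.
\]
This holds because $h(X)-g(X)$ is $X$-measurable and $\E[\varepsilon\mid X]=0$. Pulling the factor out of the conditional expectation needs $(h(X)-g(X))\varepsilon$ to be integrable. When $R(h)<\infty$, this follows from Cauchy--Schwarz together with $\E[\varepsilon^2]\le\sigma^2$. This integrability check is the only point needing care, and I expect it to be the main (mild) obstacle. It is handled by restricting attention to $h$ with $R(h)<\infty$, or simply to the bounded hypothesis class used in the paper. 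The result is $L(h)=R(h)+\E[\varepsilon^2]$.

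For the second claim, note that $\E[\varepsilon^2]$ is a constant independent of $h$. Taking the infimum over $h'\in\cH$ therefore gives $\inf_{h'\in\cH}L(h') = \inf_{h'\in\cH}R(h') + \E[\varepsilon^2]$. Subtracting this from $L(h)=R(h)+\E[\varepsilon^2]$, the noise variance cancels, which yields the stated equality of excess risks. The infima are finite for the bounded class $\cH_{\mathrm{TT}}$, so no $\infty-\infty$ issues arise.
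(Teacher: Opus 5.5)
Your proposal is correct and follows the paper's proof: expand the square, kill the cross term with the tower property and $\E[\varepsilon\mid X]=0$, then subtract the $h$-independent constant $\E[\varepsilon^2]$ to get the excess-risk identity. Your integrability remarks (Cauchy--Schwarz for the cross term, and boundedness of $\cH_{\mathrm{TT}}$) are a sound refinement that the paper leaves implicit.
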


\begin{proof}
Expand $L(h) = \E[(h - g - \varepsilon)^2] = \E[(h-g)^2] + \E[\varepsilon^2] - 2\E[(h-g)\varepsilon]$. Using the tower property and $\E[\varepsilon \mid X] = 0$:
\[
\E[(h-g)\varepsilon] = \E[\E[(h(X)-g(X))\varepsilon \mid X]] = \E[(h(X)-g(X)) \cdot \E[\varepsilon \mid X]] = 0. \qedhere
\]
\end{proof}


\begin{lemma}[Compactness of the TT constraint set]\label{lem:compact}
For fixed $(N, p, \chi, \Lambda)$, the set $\cA_{\mathrm{TT}}(\Lambda, \chi) \subset \R^{(p+1)^N}$ is compact. Consequently, there exists an empirical risk minimizer (ERM):
\[
\hat{A} \in \argmin_{A \in \cA_{\mathrm{TT}}(\Lambda, \chi)} \hat{L}_n(h_A), \qquad \hat{h} := h_{\hat{A}} \in \argmin_{h \in \cH_{\mathrm{TT}}(\Lambda, \chi)} \hat{L}_n(h).
\]
\end{lemma}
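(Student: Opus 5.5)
Compactness in the finite-dimensional space $\R^{(p+1)^N}$ is equivalent to being closed and bounded (Heine--Borel). Boundedness is immediate from the Frobenius constraint $\|A\|_F\le\Lambda$, so the whole argument reduces to showing that $\cA_{\mathrm{TT}}(\Lambda,\chi)$ is closed. We then obtain existence of the ERM from continuity of $A\mapsto \hat L_n(h_A)$ and the extreme value theorem.

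For closedness, the set is the intersection of the closed ball $\{\|A\|_F\le\Lambda\}$ with the set $\{\rankTT(A)\le\chi\}$. The key fact to use is the standard characterization (Oseledets \cite{Oseledets2011}): $A$ admits a TT representation with ranks $r_k\le\chi$ if and only if every unfolding matrix $A^{\langle k\rangle}\in\R^{(p+1)^k\times(p+1)^{N-k}}$, $k=1,\dots,N-1$, has matrix rank at most $\chi$. The ``if'' direction is exactly the TT-SVD construction underlying Lemma \ref{lem:ttsvd}, which produces exact cores of size $\operatorname{rank}A^{\langle k\rangle}$ when no truncation occurs. The ``only if'' direction follows by writing $A^{\langle k\rangle}=UV$ with $U$ built from $G_1\cdots G_k$ and $V$ from $G_{k+1}\cdots G_N$, with inner dimension $r_k\le\chi$. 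Each set $\{A:\operatorname{rank}A^{\langle k\rangle}\le\chi\}$ is closed, since it is the common zero set of all $(\chi+1)\times(\chi+1)$ minors of $A^{\langle k\rangle}$. These minors are polynomial, hence continuous, in the entries of $A$, and the unfolding map is a linear isometric reshaping. A finite intersection of closed sets is closed, so the set is compact.

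This identification of bounded TT-rank with bounded unfolding ranks is the main subtle point. A naive argument taking limits of cores fails, because the cores themselves need not stay bounded (gauge freedom $G_k\mapsto G_kM$, $G_{k+1}\mapsto M^{-1}G_{k+1}$). I would therefore avoid core-level compactness entirely and rely on the unfolding characterization. An alternative is to normalize to left-orthonormal gauge, in which $\|G_N\|_F=\|A\|_F\le\Lambda$ and the other cores are bounded by orthonormality, and then extract a convergent subsequence; I would cite this only as a backup.

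For the ERM, the map $A\mapsto h_A(X_i)=\ip{A}{\Phi(\xi_i)}$ is linear, hence continuous (indeed Lipschitz with constant $K^N$ by Lemma \ref{lem:predbound}). Therefore $\hat L_n(h_A)=\frac1n\sum_i(h_A(X_i)-Y_i)^2$ is a continuous function of $A$ for a fixed sample. A continuous real function on the nonempty compact set $\cA_{\mathrm{TT}}(\Lambda,\chi)$, which contains $0$, attains its minimum at some $\hat A$. Since every $h\in\cH_{\mathrm{TT}}(\Lambda,\chi)$ is of the form $h_A$, the corresponding $\hat h=h_{\hat A}$ minimizes $\hat L_n$ over the hypothesis class.
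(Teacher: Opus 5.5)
Your proposal is correct and follows the same route as the paper. Both arguments use Heine--Borel, show $\{\rankTT(A)\le\chi\}$ is closed via the unfolding-rank characterization and closedness of bounded-rank matrices under the continuous unfolding map, and then apply the extreme value theorem to the continuous empirical risk. Your minors argument, which replaces the paper's citation, and your explicit check that the set is nonempty because it contains $0$ are small additions that the paper leaves implicit.
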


\begin{proof}
The Frobenius ball $\{A : \|A\|_F \le \Lambda\}$ is compact in finite-dimensional Euclidean space (closed and bounded). 

The set $\{A : \rankTT(A) \le \chi\}$ is closed: the TT-rank of $A$ equals $\max_{k=1,\ldots,N-1} \mathrm{rank}(A^{\langle k \rangle})$, where $A^{\langle k \rangle} \in \R^{(p+1)^k \times (p+1)^{N-k}}$ is the $k$-th unfolding (matricization) of $A$. The map $A \mapsto A^{\langle k \rangle}$ is linear (hence continuous), and the set of matrices with rank $\le \chi$ is closed in the Frobenius topology \cite[Sec. 2]{HiriartUrrutyLe2013rank}.

Therefore $\cA_{\mathrm{TT}}(\Lambda, \chi)$ is a closed subset
of a compact set, and is therefore compact.

Continuity of $\hat{L}_n(h_A)$ in $A$ follows from continuity of $A \mapsto h_A(x) = \ip{A}{\Phi((x-x_0)/r)}$ for each $x$. A continuous function on a compact set attains its minimum.
\end{proof}

\section{PAC Bounds via Pseudo-Dimension}

\begin{definition}[Bounded loss class] \label{bounded_loss_class}
Let $\cH_{\mathrm{TT}}(\Lambda,\chi)$ be as in Definition \ref{def:hypclass}.
From Lemma \ref{lem:predbound} we have $|h_A(x)| \le \Lambda K^N$. 
Under Definition \ref{ass:regression}, $|Y|\le Y_{\max}$ almost surely, 
and $|h(X)|\le \Lambda K^N$ almost surely for all $h\in\cH_{\mathrm{TT}}(\Lambda,\chi)$. 
Therefore the squared loss $\ell_h(x,y) := (h(x)-y)^2$ is uniformly bounded:
\[
0 \le \ell_h(X,Y) \le M(\Lambda) := \bigl(\Lambda K^N + Y_{\max}\bigr)^2 \quad \text{a.s.}
\]
In the local setting, substituting the canonical choice 
$\Lambda = \Lambda^\star(r)$ from Definition \ref{def:localhypclass} gives
\[
M(r) := \bigl(\Lambda^\star(r)\, K^N + Y_{\max}\bigr)^2,
\]
which decreases as $r \to 0$ since $\Lambda^\star(r) \to C_{\le p}^{(x)}$.    
\end{definition}

We write $h_A(x)=\langle A,\Phi((x-x_0)/r)\rangle$. Although $\Phi$ lives in an exponentially
large feature space, the TT constraint restricts $A$ to a low-parameter tensor network family. We use pseudo-dimension to quantify complexity of the hypothesis class which is required to prove the uniform convergence of the population and empirical risk classes. For this, we use a standard pseudo-dimension bound for tensor network regression models in terms of the number of real parameters in the network \cite{KhavariRabusseau2021}.

\begin{theorem}[TN pseudo-dimension via parameter count {\cite[Thm. 2]{KhavariRabusseau2021}}]\label{thm:tn_pdim}
Let $G=(V,E,\mathrm{dim})$ be a tensor-network structure, and let $\cH^{\mathrm{regression}}_G$
be the associated real-valued regression hypothesis class parameterized by $N_G$ real parameters.
Then
\[
\mathrm{Pdim}\!\left(\cH^{\mathrm{regression}}_G\right)\le 2\,N_G\,\log(12|V|).
\]
\end{theorem}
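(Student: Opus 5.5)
The plan is the standard sign-pattern counting argument for classes that are polynomial in their parameters, via Warren's theorem. The key observation is that a tensor-network predictor is a low-degree polynomial in its parameters once the input is fixed.

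\textbf{Polynomial structure.} Fix an input $x$ and write $\theta\in\R^{N_G}$ for the concatenation of all core entries. The prediction is a full contraction of the network with the fixed feature vectors; in our TT case this is $\ip{A}{\Phi((x-x_0)/r)}$ with $A=G_1[\cdot]\cdots G_N[\cdot]$. Each core enters the contraction exactly once, so $\theta\mapsto h_\theta(x)$ is a multilinear polynomial of total degree at most $|V|$. For TT, $|V|=N$.

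\textbf{Reduction to counting sign patterns.} Suppose points $x_1,\dots,x_m$ are pseudo-shattered with witnesses $t_1,\dots,t_m$. Then the $m$ polynomials
\[
P_i(\theta)=h_\theta(x_i)-t_i,
\]
each of degree $\le |V|$ in $N_G$ variables, must realize all $2^m$ patterns of the predicates $P_i(\theta)\ge 0$. I would first turn weak inequalities into strict ones. By finiteness, for each pattern choose one realizing $\theta$ and then shrink the $t_i$ slightly, or equivalently use the nonzero-sign-pattern version of Warren's bound. With this, Warren's theorem (in the form of Goldberg--Jerrum / Anthony--Bartlett) applies for $m\ge N_G$. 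The number of sign patterns is at most
\[
\left(\frac{4e\,m\,|V|}{N_G}\right)^{N_G}.
\]
If $m<N_G$, the claim holds trivially since $2\log(12|V|)\ge 1$.

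\textbf{Solving the inequality.} Shattering forces
\[
2^m\le \left(\frac{4e\,m\,|V|}{N_G}\right)^{N_G},
\]
that is, $m\le N_G\log_2\bigl(4e|V|\,m/N_G\bigr)$. Set $u=m/N_G$. The condition becomes $u\le \log_2(4e|V|u)$, and the goal is $u\le 2\log(12|V|)$.

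I expect this last step to be the main technical obstacle. The constant $12$ and the factor $2$ depend on which base of logarithm is used and on which version of Warren's constant is invoked.

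My first attempt would use the elementary lemma: if $u\le a\log_2(bu)$ with $a=1$, $b=4e|V|$, then $u\le 2\log_2(2b/\ln 2)$ or similar. I would use the concavity bound $\log_2 y\le \tfrac{y}{c}+\log_2\tfrac{c}{e\ln 2}$ with a well-chosen $c$. I would then check numerically that the resulting constant is dominated by $2\log(12|V|)$ for all $|V|\ge 1$, adjusting the choice of $c$ until it is.

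If the constants do not close exactly, I would fall back on citing \cite[Thm.~2]{KhavariRabusseau2021} for the precise numerics. The structural content—pseudo-dimension $O(N_G\log|V|)$—is already established by the argument above, and that is all the subsequent uniform-convergence bound needs.
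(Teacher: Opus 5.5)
Your approach is the standard Warren-type argument behind the cited result. The paper itself gives no proof of this statement and simply imports it from Khavari--Rabusseau. Your steps are all sound: for fixed input the prediction is a multilinear polynomial of degree $|V|$ in the $N_G$ core entries. Perturbing the thresholds gives the $2^m$ witnessing parameter vectors strict, pairwise distinct sign vectors. They therefore lie in distinct connected components of $\R^{N_G}\setminus\bigcup_i\{P_i=0\}$, and Warren's bound gives $2^m\le(4e|V|m/N_G)^{N_G}$ for $m\ge N_G$.

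The one thing missing is the final step, which you leave open with a numerical-tuning plan and a fallback. It closes without any tuning. With $u=m/N_G\ge 1$ you have $u\le\log_2(4e|V|)+\log_2 u$.
\begin{itemize}
\item If $u\ge 4$, then $\log_2 u\le u/2$, so $u\le 2\log_2(4e|V|)\le 2\log_2(12|V|)$, since $4e\approx 10.87<12$.
\item If $u<4$, then $u<4<2\log_2 12\le 2\log_2(12|V|)$.
\end{itemize}

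This also settles your worry about the base of the logarithm. The constant $12$ is just $4e$ rounded up, and the bound must be read with $\log=\log_2$. With natural logarithms, the Warren inequality alone does not imply the stated bound for small $|V|$. For example, with $|V|=N_G=1$, the value $m=6$ satisfies $2^6\le 4e\cdot 6\approx 65.2$, yet $6>2\ln 12\approx 4.97$.

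Your fallback to an unspecified $O(N_G\log|V|)$ would suffice for the big-$O$ statements in Theorem~\ref{thm:end_to_end_pdim}. It would not support the explicit constants carried into Corollary~\ref{cor:tt_pdim} and the remark after Lemma~\ref{lem:loss_pdim_sq}. So it is worth closing the step as above.
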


\begin{corollary}[TT pseudo-dimension (via restriction of a TN model)]\label{cor:tt_pdim}
Consider TT tensors of order $N$ and mode size $m:=p+1$ with TT ranks $(r_k)_{k=1}^{N-1}$,
where $r_0=r_N=1$. Parameterize such tensors by TT cores $G_k\in\R^{r_{k-1}\times m\times r_k}$.
The total number of real core parameters is
\[
N_G=\sum_{k=1}^{N} m\,r_{k-1}r_k.
\]
$\cH^{\mathrm{regression}}_{\mathrm{TT}}$ denotes the associated (unconstrained) TT regression class
$x\mapsto \langle A(G),\Phi((x-x_0)/r)\rangle$ induced by these parameters. By
Theorem \ref{thm:tn_pdim},
\[
\mathrm{Pdim}\!\left(\cH^{\mathrm{regression}}_{\mathrm{TT}}\right)
\le 2\,N_G\,\log(12N).
\]
By Def \ref{def:tt}, $\cH_{\mathrm{TT}}(\Lambda,\chi)$ is a \emph{restriction} of this larger class (imposing the constraints $\|A\|_F\le \Lambda$ and $r_k\le \chi$), pseudo-dimension cannot increase under
restriction, hence
\[
\mathrm{Pdim}\!\left(\cH_{\mathrm{TT}}(\Lambda,\chi)\right)
\le \mathrm{Pdim}\!\left(\cH^{\mathrm{regression}}_{\mathrm{TT}}\right)
\le 2\Big(\sum_{k=1}^{N} m\,r_{k-1}r_k\Big)\log(12N)
\le 2\,N\,m\,\chi^2\,\log(12N),
\]
where the last inequality uses $r_k\le\chi$ for $k=1,\dots,N-1$ (and $\chi\ge 1$).
\end{corollary}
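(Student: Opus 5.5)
The plan is to obtain the bound in three moves: realize the TT regression class as an instance of the tensor-network classes covered by Theorem \ref{thm:tn_pdim}, pass to the constrained class $\cH_{\mathrm{TT}}(\Lambda,\chi)$ by monotonicity of pseudo-dimension, and finish with an elementary parameter count.

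\textbf{Step 1: fit the TT model into Theorem \ref{thm:tn_pdim}.} The TT model is a tensor network $G$ whose graph is a path with $N$ vertices $G_1,\dots,G_N$. Vertex $k$ has one physical leg of dimension $m=p+1$ and virtual legs of dimension $r_{k-1}$ and $r_k$. The regression hypothesis contracts each physical leg with the input vector $v(\xi_k)$, where $\xi=(x-x_0)/r$. This equals $\langle A(G),\Phi(\xi)\rangle$, because $\Phi$ is the rank-one tensor product $v(\xi_1)\otimes\cdots\otimes v(\xi_N)$. The point to check against \cite{KhavariRabusseau2021} is that their regression class allows an arbitrary fixed per-site feature map in place of raw inputs. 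This holds because their proof only uses that the output is a polynomial of bounded degree in the parameters, once the inputs are fixed. Composing with the fixed map $x\mapsto v((x_k-x_{0,k})/r)$ is just a reparametrization of the input domain, so it cannot enlarge the set of shattered points. With $|V|=N$ and $N_G=\sum_k m r_{k-1}r_k$, we get $\mathrm{Pdim}\le 2N_G\log(12N)$.

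\textbf{Step 2: restrict to the constrained class.} I will note directly from the definition that pseudo-dimension is monotone. If $\cH'\subseteq\cH$, any set pseudo-shattered by $\cH'$ with some witnesses is also pseudo-shattered by $\cH$ with the same witnesses. The main subtlety is that $\cA_{\mathrm{TT}}(\Lambda,\chi)$ allows ranks $r_k\le\chi$ that vary with $A$. It is therefore a union over rank profiles, not a single network. I would handle this by padding: any TT with ranks $r_k\le\chi$ can be written with all internal ranks exactly $\chi$, by zero-padding the cores. So $\cH_{\mathrm{TT}}(\Lambda,\chi)$ is contained in the single fixed-structure class with $r_k=\chi$ for $1\le k\le N-1$. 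Imposing the norm constraint only shrinks this class further. I expect this union-versus-single-structure issue to be the one real obstacle, and padding resolves it cleanly.

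\textbf{Step 3: count parameters.} For the uniform-$\chi$ structure,
\[
N_G = m\chi + (N-2)m\chi^2 + m\chi \le N m \chi^2,
\]
using $\chi\le\chi^2$ for $\chi\ge1$. This gives
\[
\mathrm{Pdim}\bigl(\cH_{\mathrm{TT}}(\Lambda,\chi)\bigr)\le 2Nm\chi^2\log(12N),
\]
which is the stated bound with $m=p+1$. The general-rank inequality $\sum_k m r_{k-1}r_k\le Nm\chi^2$ follows the same way, since each term satisfies $r_{k-1}r_k\le\chi^2$ given $r_0=r_N=1\le\chi$.
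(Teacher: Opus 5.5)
Your proposal is correct and follows the same route as the paper: apply Theorem~\ref{thm:tn_pdim} to the path-graph network with $|V|=N$, pass to $\cH_{\mathrm{TT}}(\Lambda,\chi)$ using monotonicity of pseudo-dimension under restriction, and bound $\sum_k m\,r_{k-1}r_k\le Nm\chi^2$. Your zero-padding step, which places every rank profile with $r_k\le\chi$ inside the single uniform-rank-$\chi$ structure, makes explicit the inclusion that the paper's ``restriction'' sentence leaves implicit, and your observation that the fixed feature map only restricts the input domain supplies a detail the paper takes for granted.
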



The uniform convergence theorem we apply later on is stated in terms of the pseudo-dimension of the
\emph{loss class} and not the \emph{predictor class} as in corollary \ref{cor:tt_pdim}. 
Define the squared-loss class
\[
\cG := \left\{(x,y)\mapsto (h(x)-y)^2 : h\in\cH_{\mathrm{TT}}(\Lambda,\chi)\right\}.
\]

\begin{lemma}[Squared-loss lifting: $\mathrm{Pdim}(\cG)$ from $\mathrm{Pdim}(\cH)$]\label{lem:loss_pdim_sq}
Assume $|Y|\le Y_{\max}$ a.s.\ and $|h(X)|\le B$ a.s.\ for all $h\in\cH$.
Let $\cG_\cH:=\{(x,y)\mapsto (h(x)-y)^2 : h\in\cH\}$.
Then
\[
\mathrm{Pdim}(\cG_\cH)\;\le\; 4(2\,\mathrm{Pdim}(\cH) + 1)\log_2(6).
\]
Writing $d_H:=\mathrm{Pdim}(\cH_{\mathrm{TT}}(\Lambda,\chi))$, we may define
\[
d:=\mathrm{Pdim}(\cG)\;\le\; 4(2d_H+1)\log_2(6).
\]
\end{lemma}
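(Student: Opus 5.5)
We bound the pseudo-dimension of $\cG_\cH$ by reducing it to a VC-dimension computation for a class of thresholded functions of $h$, and then applying a composition (Boolean-combination) bound. Recall that $\mathrm{Pdim}(\cG_\cH)$ is the VC dimension of the subgraph class
\[
\Big\{(x,y,t)\mapsto \mathbf{1}\big[(h(x)-y)^2 > t\big] : h\in\cH\Big\}.
\]
For $t<0$ the indicator is identically $1$, so only $t\ge 0$ matters. For $t\ge 0$,
\[
(h(x)-y)^2>t \iff \big(h(x) > y+\sqrt t\big)\ \vee\ \big(h(x) < y-\sqrt t\big).
\]
Hence each loss-subgraph indicator is a fixed Boolean function (an OR, together with one negation) of two indicators of the form $\mathbf{1}[h(x) > s]$. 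Here the thresholds $s = y\pm\sqrt t$ are measurable functions of the point $(x,y,t)$ and do not depend on $h$. The boundedness assumptions $|Y|\le Y_{\max}$ and $|h|\le B$ are not needed for the combinatorial count. They only guarantee that the loss class is bounded by $(B+Y_{\max})^2$, which is what the later uniform-convergence theorem uses.

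The first step is to note that the class $\cF:=\{(x,s)\mapsto \mathbf 1[h(x)>s] : h\in\cH\}$ has VC dimension exactly $\mathrm{Pdim}(\cH)=d_H$, by the definition of pseudo-dimension. Its complementary version $\mathbf 1[h(x)<s]$ has VC dimension at most $d_H$ as well, which I would check by a small reduction that is routine up to strict versus non-strict inequality. The second step is to obtain, via Sauer--Shelah, the growth-function bound $\Pi_{\cF}(m)\le (em/d_H)^{d_H}$ for each of the two component classes. The loss-subgraph class is the image of the product class under a fixed Boolean map. Its growth function is therefore bounded by the product of the two component growth functions, because a single $h$ is shared by both, so the product count is an upper bound:
\[
\Pi_{\cG}(m)\le \Big(\tfrac{em}{d_H}\Big)^{2d_H}.
\]

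The third step extracts the dimension bound. If $m$ points are shattered, then $2^m\le (em/d_H)^{2d_H}$. Solving this inequality gives $m = O(d_H\log)$-type bounds, and I would use the standard lemma that $2^m\le (em/k)^{k}$ implies $m\le 2k\log_2(e\cdot\text{const})$, with $k=2d_H$. To match the stated constant $4(2d_H+1)\log_2 6$, I would instead invoke the known result for $k$-fold Boolean compositions of classes with VC dimension $\le D$, for example the bound $\mathrm{VC}\le 2Dk\log_2(3k)$ from Blumer et al./Shalev-Shwartz--Ben-David. With $k=2$ and $D$ inflated to $2d_H+1$ to absorb the strict/non-strict boundary cases and the $t<0$ region, this gives $2\cdot2\cdot(2d_H+1)\log_2 6$, which is exactly the claimed expression.

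The main obstacle is the bookkeeping that justifies using $D=2d_H+1$. One must confirm that both threshold directions and the degenerate $t<0$ points fit into a single base class of VC dimension at most $2d_H+1$, and that the composition lemma is applied with the correct $k$. I would handle this by defining one base class containing both orientations, bounding its VC dimension by the union bound $\mathrm{VC}(\cF_1\cup\cF_2)\le d_1+d_2+1$, and then applying the composition lemma with $k=2$.
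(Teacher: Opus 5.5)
Your committed argument (restrict to $t\ge 0$, split $(h(x)-y)^2>t$ into the events $h(x)>y+\sqrt t$ and $h(x)<y-\sqrt t$, merge the two orientation families into one family of VC dimension at most $2d_H+1$ via the union bound, then apply the Blumer et al.\ $s$-fold union lemma with $s=2$) is exactly the paper's proof. One side remark is off: the $+1$ in $2d_H+1$ comes only from the union bound, as your final paragraph says, not from strict/non-strict or $t<0$ bookkeeping; and your first sketched route, the product of the two Sauer--Shelah growth functions, would on its own already prove the lemma with a smaller constant.
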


\begin{proof}
Let $d_H=\mathrm{Pdim}(\cH)$. By the definition of pseudo-dimension,
$\mathrm{Pdim}(\cG_\cH)$ is the VC-dimension of the thresholded concept class
\[
\mathcal{C}
:=
\Big\{(x,y,t)\mapsto \mathbf{1}\!\big((h(x)-y)^2>t\big)\;:\;h\in\cH\Big\}.
\]
If $t<0$ then $(h(x)-y)^2>t$ holds trivially and such coordinates cannot increase VC-dimension.
Thus we may restrict to $t\ge 0$ without loss.
Write $\tilde{t}:=\sqrt{t}\ge 0$. Then
\[
(h(x)-y)^2>t
\quad\Longleftrightarrow\quad
\big(h(x)>y+\tilde{t}\big)\ \ \text{or}\ \ \big(h(x)<y-\tilde{t}\big).
\]
Each concept in $\mathcal{C}$ corresponds to the set union $S_h^+\cup S_h^-$, where
\[
S_h^+ := \big\{(x,y,\tilde{t}): h(x)>y+\tilde{t}\big\},
\qquad
S_h^- := \big\{(x,y,\tilde{t}): h(x)<y-\tilde{t}\big\}.
\]
Define the set families $\mathcal{F}_+:=\{S_h^+: h\in\cH\}$ and $\mathcal{F}_-:=\{S_h^-: h\in\cH\}$.
By restriction arguments analogous to those for the standard threshold class
\[
\mathcal{T}:=\{(x,t)\mapsto \mathbf{1}[h(x)>t]: h\in\cH\},
\]
we have $\mathrm{VCdim}(\mathcal{F}_+)\le d_H$ and $\mathrm{VCdim}(\mathcal{F}_-)\le d_H$.
Define $\mathcal{F}:=\mathcal{F}_+\cup\mathcal{F}_-$. By a standard union-growth argument (We give full proof of this inequality in the appendix as Lemma \ref{lem:vc-union}),
\[
\mathrm{VCdim}(\mathcal{F}) \le \mathrm{VCdim}(\mathcal{F}_+) + \mathrm{VCdim}(\mathcal{F}_-) + 1 \le 2d_H+1.
\]
Since $\mathcal{C}\subseteq \mathcal{F}^{\cup 2}:=\{A\cup B: A,B\in\mathcal{F}\}$,
Lemma 3.2.3 of \cite{BlumerEtAl1989} (applied with union parameter $s=2$) yields
\[
\mathrm{VCdim}(\mathcal{C})
< 2 \cdot \mathrm{VCdim}(\mathcal{F}) \cdot 2 \cdot \log_2(6)
\le 4(2d_H+1)\log_2(6),
\]
where we used that $\mathrm{VCdim}(\mathcal{C})$ is an integer in the last step.
Thus, 
\[
\mathrm{Pdim}(\cG_\cH)=\mathrm{VCdim}(\mathcal{C})\le 4(2d_H+1)\log_2(6)
\]
\end{proof}

\begin{remark}[Explicit TT loss-class bound]
Combining Lemma \ref{lem:loss_pdim_sq} with Corollary \ref{cor:tt_pdim} gives
\[
d \;\le\; 4\Big(2 \cdot 2\,N\,m\,\chi^2\,\log(12N) + 1\Big)\log_2(6),
\qquad m=p+1.
\]
Since $\log_2(6) \approx 2.585$, this simplifies to
\[
d \;\lesssim\; 41.4\,N\,m\,\chi^2\,\log(12N) + 10.3
\;=\; O\big(N\,m\,\chi^2\,\log(N)\big).
\]
\end{remark}

\subsection{Uniform convergence and ERM}

We apply a standard uniform deviation bound for bounded loss classes in terms of pseudo-dimension
\cite[Thm. 11.8]{Mohri2018}.

\begin{theorem}[Uniform deviation for bounded losses via pseudo-dimension {\cite[Thm. 11.8]{Mohri2018}}]\label{thm:pdim_uc}
Let $\cG$ be a class of measurable loss functions $\ell:\mathcal{Z}\to[0,M]$ on $\mathcal{Z}=\mathcal{X}\times\mathcal{Y}$,
and let $d:=\mathrm{Pdim}(\cG)$. Then for any $\delta\in(0,1)$, with probability at least $1-\delta$
over $n$ i.i.d.\ samples $Z_1,\dots,Z_n\sim P$,
\[
\sup_{\ell\in\cG}\left|\E[\ell(Z)]-\frac1n\sum_{i=1}^n \ell(Z_i)\right|
\;\le\;
M\sqrt{\frac{2d\log\!\big(\frac{en}{d}\big)}{n}}
\;+\;
M\sqrt{\frac{\log(1/\delta)}{2n}}.
\]
\end{theorem}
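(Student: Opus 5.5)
This is a cited result (Mohri et al., Thm. 11.8), so the plan is to reconstruct the standard argument in three stages: symmetrization into Rademacher complexity, a pseudo-dimension growth-function bound to control that complexity, and McDiarmid for the confidence term. First normalize by $M$: replace $\cG$ by $\cG/M=\{\ell/M\}$, which takes values in $[0,1]$ and has the same pseudo-dimension $d$. It then suffices to prove the bound with $M=1$ and rescale at the end.

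\emph{Step 1 (concentration).} Let $\Phi(Z_1,\dots,Z_n):=\sup_{\ell}|\E\ell-\frac1n\sum_i\ell(Z_i)|$. Changing one $Z_i$ moves $\Phi$ by at most $1/n$. McDiarmid's inequality therefore gives, with probability at least $1-\delta$,
\[
\Phi\le \E\Phi+\sqrt{\log(1/\delta)/(2n)}.
\]
This produces the second term of the statement exactly.

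\emph{Step 2 (reduction to thresholded indicators).} For $\ell\in[0,1]$ use the layer-cake identity
\[
\ell(z)=\int_0^1 \mathbf 1[\ell(z)>t]\,dt .
\]
Then
\[
\Bigl|\E\ell-\tfrac1n\textstyle\sum_i\ell(Z_i)\Bigr|\le \sup_{t}\Bigl|P(\ell>t)-\tfrac1n\textstyle\sum_i\mathbf 1[\ell(Z_i)>t]\Bigr|.
\]
So $\Phi$ is bounded by the uniform deviation of the binary class $\mathcal C=\{z\mapsto\mathbf 1[\ell(z)>t]\}$. By definition of pseudo-dimension, $\mathrm{VCdim}(\mathcal C)=d$. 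By Sauer's lemma its growth function satisfies $\Pi_{\mathcal C}(n)\le(en/d)^d$ for $n\ge d$.

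\emph{Step 3 (expectation bound).} Bound $\E\Phi$ by symmetrization. Introduce a ghost sample and Rademacher signs, which gives $\E\Phi\le 2\mathfrak R_n(\mathcal C)$ for the indicator class. Then apply Massart's finite-class lemma on the at most $\Pi_{\mathcal C}(n)$ distinct label patterns:
\[
\mathfrak R_n(\mathcal C)\le\sqrt{2\log\Pi_{\mathcal C}(n)/n}\le\sqrt{2d\log(en/d)/n}.
\]
Combining Steps 1–3 and multiplying back by $M$ yields the displayed bound.

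\emph{Main obstacle: constants.} Symmetrization introduces a factor of $2$, which must either be absorbed or avoided to match the exact constant $\sqrt{2d\log(en/d)/n}$. The cleanest fix is to avoid Step 2's passage to indicators and instead bound the Rademacher complexity of $\cG/M$ directly. One would try a chaining or empirical-covering argument using the fact that $L_1$ covering numbers of a $[0,1]$-valued class are controlled by the pseudo-dimension.

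If the exact constant cannot be recovered this way, I would fall back on deferring to the cited theorem \cite[Thm.~11.8]{Mohri2018}. In that case I would verify only that its hypotheses hold here: the loss values lie in $[0,M]$, measurability holds, and the samples are i.i.d., with $d$ supplied by Lemma \ref{lem:loss_pdim_sq}.
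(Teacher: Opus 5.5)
The paper gives no proof here; it imports \cite[Thm.~11.8]{Mohri2018}. Your outline (McDiarmid, layer-cake reduction to indicators, symmetrization, Massart, Sauer) is the standard route to such a result, but as written it contains a false step. In Step 2, $\mathrm{VCdim}(\mathcal C)=d$ does not hold ``by definition''. The pseudo-dimension is the VC dimension of $\{(z,t)\mapsto\mathbf{1}[\ell(z)>t]:\ell\in\cG\}$, which is indexed by $\ell$ alone with a threshold fixed at each point. Your $\mathcal C=\{z\mapsto\mathbf{1}[\ell(z)>t]:\ell\in\cG,\ t\in[0,1]\}$ instead lets one common threshold vary from concept to concept. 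A one-element class $\{\ell\}$ with $\ell\not\equiv 0$ has pseudo-dimension $0$, yet its superlevel sets shatter any point where $\ell>0$. Worse, the class $\ell_T=a_T+\epsilon\mathbf{1}_T$, $T\subseteq[m]$, with offsets $a_T$ pairwise at least $2\epsilon$ apart, has pseudo-dimension $1$ while $\mathcal C$ shatters $[m]$. So the supremum over $t$ loses the control by $d$. The repair is to average over the threshold instead of taking a supremum. Take $U_1,\dots,U_n$ i.i.d.\ uniform on $[0,1]$ and independent of the data; then $\ell(z)/M=\Pr_U(\ell(z)>MU)$, and Jensen gives $\E_{Z_{1:n}}\sup_\ell\bigl(\E[\ell(Z)]-\frac1n\sum_i\ell(Z_i)\bigr)\le M\,\E_{Z_{1:n},U_{1:n}}\sup_\ell\bigl(\Pr(\ell(Z)>MU)-\frac1n\sum_i\mathbf{1}[\ell(Z_i)>MU_i]\bigr)$. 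The right side is a uniform deviation of the subgraph class at the i.i.d.\ points $(Z_i,U_i)$, whose VC dimension is at most $d$ by the definition of $\mathrm{Pdim}$.

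The factor $2$ you flag is not a chaining issue, and Dudley-type arguments only worsen constants; it disappears by centring in Massart's lemma. For $\{0,1\}$-valued patterns $c$, $\E_\sigma\sup_c\sum_i\sigma_ic_i=\E_\sigma\sup_c\sum_i\sigma_i(c_i-\tfrac12)$, and the centred vectors have norm $\sqrt n/2$. Hence the Rademacher average is at most $\sqrt{\log\Pi(n)/(2n)}$, and twice it is exactly $\sqrt{2\log\Pi(n)/n}\le\sqrt{2d\log(en/d)/n}$. This cancellation, however, needs the one-sided supremum. For your two-sided $\Phi$, symmetrization only bounds $\E\Phi$ by the Rademacher average with the absolute value inside. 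There $\tfrac12\sum_i\sigma_i$ no longer drops out, and Massart over the patterns and their negatives costs an extra $\log 2$. So apply McDiarmid to $\Phi_+:=\sup_\ell\bigl(\E[\ell(Z)]-\frac1n\sum_i\ell(Z_i)\bigr)$, which has bounded differences $M/n$. Together with the previous paragraph this yields exactly the displayed bound for that one side, which is also all that Mohri's Theorem~11.8 asserts ($R(h)\le\widehat R_S(h)+\cdots$). The two-sided supremum in the statement then follows by a union bound over the two tails, with $\log(2/\delta)$ in place of $\log(1/\delta)$. So falling back on the citation does not literally cover the statement as written, although the ERM argument of Theorem~\ref{thm:pdim_erm} goes through with that $\delta/2$ split.
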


Applying Theorem \ref{thm:pdim_uc} to the squared-loss class
$\cG=\{(x,y)\mapsto (h(x)-y)^2: h\in\cH_{\mathrm{TT}}(\Lambda,\chi)\}$ gives the usual ERM excess-risk bound.


\begin{theorem}[ERM excess-risk bound]\label{thm:pdim_erm}
Assume regression model as defined in Definition \ref{ass:regression} and let $\hat h$ be an ERM over $\cH_{\mathrm{TT}}(\Lambda,\chi)$.
Let $\cG=\{(x,y)\mapsto (h(x)-y)^2: h\in\cH_{\mathrm{TT}}(\Lambda,\chi)\}$ and set
$M=(\Lambda K^N+Y_{\max})^2$. Let $d=\mathrm{Pdim}(\cG)$.
Then with probability at least $1-\delta$:
\[
L(\hat h)\le \inf_{h\in\cH_{\mathrm{TT}}(\Lambda,\chi)} L(h)
\;+\;
2M\sqrt{\frac{2d\log\!\big(\frac{en}{d}\big)}{n}}
\;+\;
2M\sqrt{\frac{\log(1/\delta)}{2n}}.
\]
The same bound holds with $L$ replaced by $R$ (Lemma \ref{lem:cleanrisk}).
\end{theorem}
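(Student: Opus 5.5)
The argument is the standard two-sided uniform-convergence derivation for ERM, specialized to the bounded squared-loss class $\cG$. Write
\[
\varepsilon_n(\delta) := M\sqrt{\frac{2d\log(en/d)}{n}} + M\sqrt{\frac{\log(1/\delta)}{2n}}
\]
for the right-hand side of Theorem \ref{thm:pdim_uc}.

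\textbf{Step 1: applicability of Theorem \ref{thm:pdim_uc}.} Definition \ref{bounded_loss_class} (via Lemma \ref{lem:predbound} and $|Y|\le Y_{\max}$) shows that every $\ell_h\in\cG$ takes values in $[0,M]$ with $M=(\Lambda K^N+Y_{\max})^2$. Measurability holds because $h_A$ is a polynomial in $x$. So, with probability at least $1-\delta$, the event
\[
E:\ \sup_{h\in\cH_{\mathrm{TT}}(\Lambda,\chi)} |L(h)-\hat L_n(h)|\le \varepsilon_n(\delta)
\]
holds. The ERM $\hat h$ exists by Lemma \ref{lem:compact}.

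\textbf{Step 2: the three-term chain on $E$.} Fix any $h\in\cH_{\mathrm{TT}}(\Lambda,\chi)$. On $E$,
\[
L(\hat h)\le \hat L_n(\hat h)+\varepsilon_n \le \hat L_n(h)+\varepsilon_n \le L(h)+2\varepsilon_n .
\]
The middle inequality is the ERM property. The outer inequalities use $E$ at $\hat h$ and at $h$; both are covered because $E$ is a supremum over the whole class, even though $\hat h$ depends on the data. Taking the infimum over $h$ gives $L(\hat h)\le \inf_h L(h)+2\varepsilon_n$, which is exactly the stated bound.

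\textbf{Step 3: transfer to $R$.} Lemma \ref{lem:cleanrisk} gives $L(h)=R(h)+\E[\varepsilon^2]$ for every $h$. Subtracting the constant $\E[\varepsilon^2]$ from both sides yields the same inequality with $L$ replaced by $R$.

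\textbf{Expected subtlety.} Nothing here is deep. The one point to check is that the data-dependent $\hat h$ is covered, which is why the uniform (two-sided supremum) deviation bound of Theorem \ref{thm:pdim_uc} is needed rather than a pointwise concentration bound. It is also worth noting that the infimum need not be attained, and the argument does not need it to be: the bound holds for every $h$ in the class, so it passes to the infimum directly.
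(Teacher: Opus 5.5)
Your proposal is correct and follows the paper's proof essentially verbatim: apply the two-sided uniform deviation bound of Theorem~\ref{thm:pdim_uc} to the bounded loss class $\cG$, then run the chain $L(\hat h)\le \hat L_n(\hat h)+\Delta\le \hat L_n(h)+\Delta\le L(h)+2\Delta$, take the infimum over $h$, and transfer to $R$ via Lemma~\ref{lem:cleanrisk}. Your extra checks (losses in $[0,M]$, measurability, existence of $\hat h$ via Lemma~\ref{lem:compact}, and that the infimum need not be attained) are sound and only make explicit what the paper leaves implicit.
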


\begin{proof}
For each $h\in\cH_{\mathrm{TT}}(\Lambda,\chi)$ define $\ell_h(Z):=(h(X)-Y)^2$, so that
$L(h)=\E[\ell_h(Z)]$ and $\hat L_n(h)=\frac1n\sum_{i=1}^n \ell_h(Z_i)$.
By Theorem \ref{thm:pdim_uc}, with probability at least $1-\delta$,
\[
\sup_{h\in\cH_{\mathrm{TT}}(\Lambda,\chi)}\big|L(h)-\hat L_n(h)\big|\le \Delta,
\qquad
\Delta :=
M\sqrt{\frac{2d\log\!\big(\frac{en}{d}\big)}{n}}
+
M\sqrt{\frac{\log(1/\delta)}{2n}}.
\]
Then for any $h\in\cH_{\mathrm{TT}}(\Lambda,\chi)$,
\[
L(\hat h)\le \hat L_n(\hat h)+\Delta \le \hat L_n(h)+\Delta \le L(h)+2\Delta.
\]
Taking the infimum over $h$ yields the claim. The transfer from $L$ to $R$ follows from
Lemma \ref{lem:cleanrisk}.
\end{proof}

\subsection{Consolidated learning guarantee for local TT surrogates (LTTS) (Taylor + TT + estimation)}

\begin{lemma}[Uniform pointwise error implies risk bound]\label{lem:unif_to_risk}
If a measurable $h$ satisfies $|h(x)-g(x)|\le E$ for all $x\in\cB(x_0,r)$,
then $R(h)=\E[(h(X)-g(X))^2]\le E^2$.
\end{lemma}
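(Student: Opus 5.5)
The argument is a direct monotonicity-of-expectation step. By Definition \ref{ass:regression}, $X\sim P_X$ is supported on $\cB(x_0,r)$, so $X\in\cB(x_0,r)$ almost surely. On that event the hypothesis gives the pointwise bound $(h(X)-g(X))^2\le E^2$.

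First, I check that $R(h)$ is well defined. The function $h$ is measurable by assumption. The function $g$ is continuous on $\cB(x_0,r)$, hence measurable, since it is $(p+1)$-times continuously differentiable by Definition \ref{ass:smooth}. Therefore $x\mapsto (h(x)-g(x))^2$ is a nonnegative measurable function, and its expectation exists in $[0,\infty]$.

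Second, I apply monotonicity of the integral. Since $0\le (h(X)-g(X))^2\le E^2$ almost surely,
\[
R(h)=\E\bigl[(h(X)-g(X))^2\bigr]\le \E[E^2]=E^2 .
\]
Here $E^2$ is a deterministic constant and $P_X$ is a probability measure. The only mild subtlety is that the pointwise bound holds only on the patch, but $P_X(\cB(x_0,r))=1$, so the complement is a null set and does not affect the expectation.

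This is the step that will later be combined with Theorem \ref{thm:deterministic}. Taking $h=h_{A_{\mathrm{TT}}}$ and $E=E_{\mathrm{det}}$ shows that $\inf_{h\in\cH_{\mathrm{TT}}} R(h)\le E_{\mathrm{det}}^2$.
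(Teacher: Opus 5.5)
Your proof is correct and follows essentially the same route as the paper. Both use that $P_X$ is supported on $\cB(x_0,r)$ to get $(h(X)-g(X))^2\le E^2$ almost surely, and then apply monotonicity of expectation; your extra measurability check for $g$ (via continuity) is a harmless detail the paper leaves implicit.
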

\begin{proof}
Since $|h(x)-g(x)|\le E$ for all $x\in\cB(x_0,r)$, 
squaring both sides gives $(h(x)-g(x))^2 \le E^2$ for all $x\in\cB(x_0,r)$.
By Definition \ref{ass:regression}, $X$ is supported on $\cB(x_0,r)$.
Therefore, $(h(X)-g(X))^2 \le E^2$ almost surely, and taking expectations yields
\[
R(h) = \E[(h(X)-g(X))^2] \le E^2. \qedhere
\]
\end{proof}

\begin{theorem}[Consolidated PAC bound for local TT surrogates (LTTS)]\label{thm:end_to_end_pdim}
Let $g$ and regression model be defined by Definitions \ref{ass:smooth} and \ref{ass:regression}. Fix $(\Lambda,\chi)$ and let $\hat h$ be an ERM
over $\cH_{\mathrm{TT}}(\Lambda,\chi)$. Define
\[
E_{\mathrm{det}}
:=
\frac{C_{p+1}^{(x)} \, r^{p+1} \, N^{p+1}}{(p+1)!}
\;+\;
K^N\,\varepsilon_{\mathrm{TT}}(\Lambda,\chi).
\]
Let $\cG=\{(x,y)\mapsto (h(x)-y)^2: h\in\cH_{\mathrm{TT}}(\Lambda,\chi)\}$, set $d=\mathrm{Pdim}(\cG)$ and
$M=(\Lambda K^N+Y_{\max})^2$. Then with probability at least $1-\delta$:
\[
R(\hat h)
\;\le\;
E_{\mathrm{det}}^2
\;+\;
2M\sqrt{\frac{2d\log\!\big(\frac{en}{d}\big)}{n}}
\;+\;
2M\sqrt{\frac{\log(1/\delta)}{2n}}.
\]
Moreover, by Lemma \ref{lem:loss_pdim_sq} and Corollary \ref{cor:tt_pdim}, one may take

\[
d \;=\; \mathrm{Pdim}(\cG)
\;\le\; 4(2\,\mathrm{Pdim}(\cH) + 1)\log_2(6)
\]
which is of the order $ O\big(N\,m\,\chi^2\,\log(N)\big)$ where $m=p+1$
\end{theorem}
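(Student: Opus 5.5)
The plan is to chain the ERM excess-risk bound (Theorem \ref{thm:pdim_erm}, in its clean-risk form $R$) with the deterministic certificate (Theorem \ref{thm:deterministic}) through Lemma \ref{lem:unif_to_risk}. The only quantity still to control is $\inf_{h\in\cH_{\mathrm{TT}}(\Lambda,\chi)} R(h)$. Once we show it is at most $E_{\mathrm{det}}^2$, the displayed risk bound follows immediately. The statement about $d$ is then just Lemma \ref{lem:loss_pdim_sq} combined with Corollary \ref{cor:tt_pdim}, as already recorded in the preceding remark.

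First, Theorem \ref{thm:pdim_erm} gives, with probability at least $1-\delta$,
\[
R(\hat h)\le \inf_{h\in\cH_{\mathrm{TT}}(\Lambda,\chi)} R(h)+2M\sqrt{\frac{2d\log(en/d)}{n}}+2M\sqrt{\frac{\log(1/\delta)}{2n}}.
\]
Here $M=(\Lambda K^N+Y_{\max})^2$ is a valid loss bound by Lemma \ref{lem:predbound} and Definition \ref{ass:regression}. The transfer from $L$ to $R$ uses Lemma \ref{lem:cleanrisk}, which is valid because $\E[\varepsilon^2]$ is a constant independent of $h$.

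Second, I bound the infimum. Fix any $A\in\cA_{\mathrm{TT}}(\Lambda,\chi)$. Theorem \ref{thm:deterministic} gives the uniform bound $|g(x)-h_A(x)|\le \varepsilon_{\mathrm{Taylor}}+K^N\|A^\star-A\|_F$ on $\cB(x_0,r)$. By Lemma \ref{lem:unif_to_risk}, this yields $R(h_A)\le(\varepsilon_{\mathrm{Taylor}}+K^N\|A^\star-A\|_F)^2$. Taking the infimum over $A$ and using continuity and monotonicity of $t\mapsto(\varepsilon_{\mathrm{Taylor}}+K^N t)^2$ on $t\ge 0$, the right-hand side becomes $(\varepsilon_{\mathrm{Taylor}}+K^N\varepsilon_{\mathrm{TT}}(\Lambda,\chi))^2=E_{\mathrm{det}}^2$. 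Alternatively, Lemma \ref{lem:compact} shows the infimum in Definition \ref{def:epsTT} is attained, because $A\mapsto\|A^\star-A\|_F$ is continuous on a compact set. We can therefore pick a minimizer $A_{\mathrm{TT}}$ directly and avoid any limiting argument.

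Finally, I substitute this into the first display. For $d$, Corollary \ref{cor:tt_pdim} gives $\mathrm{Pdim}(\cH)\le 2Nm\chi^2\log(12N)$, and Lemma \ref{lem:loss_pdim_sq} (with $B=\Lambda K^N$) converts this to $d\le 4(2\,\mathrm{Pdim}(\cH)+1)\log_2 6=O(Nm\chi^2\log N)$. I expect no real obstacle here; the only point needing care is the infimum step, that is, passing from a bound for each fixed $A$ to $\varepsilon_{\mathrm{TT}}$. Compactness (Lemma \ref{lem:compact}) handles this cleanly. One subtlety to note is that Theorem \ref{thm:pdim_uc} needs $d\le n$ for the $\log(en/d)$ term to be meaningful, but this is inherited as stated from that theorem.
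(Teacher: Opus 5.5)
Your proposal is correct and follows the paper's proof: you bound $\inf_{h}R(h)$ by the Taylor--TT certificate via Theorem \ref{thm:deterministic} and Lemma \ref{lem:unif_to_risk}, then substitute this into the clean-risk form of Theorem \ref{thm:pdim_erm}. The only difference is cosmetic: the paper handles the infimum in $\varepsilon_{\mathrm{TT}}$ with a $\tau$-approximate minimizer and lets $\tau\downarrow 0$, whereas you either pass the infimum through the continuous nondecreasing map $t\mapsto(\varepsilon_{\mathrm{Taylor}}+K^N t)^2$ or attain it using the compactness in Lemma \ref{lem:compact}, and both are valid.
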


\begin{proof}
Fix an arbitrary $\tau>0$. By definition of $\varepsilon_{\mathrm{TT}}(\Lambda,\chi)$ (Definition \ref{def:epsTT}),
there exists $A_\tau\in\cA_{\mathrm{TT}}(\Lambda,\chi)$ such that
\[
\|A^\star-A_\tau\|_F \le \varepsilon_{\mathrm{TT}}(\Lambda,\chi)+\tau.
\]
Apply Theorem \ref{thm:deterministic} with $A_{\mathrm{TT}}=A_\tau$ to obtain a uniform pointwise bound on the patch:
\[
|g(x)-h_{A_\tau}(x)|
\le
\frac{C_{p+1}^{(x)} \, r^{p+1} \, N^{p+1}}{(p+1)!}
+
K^N\big(\varepsilon_{\mathrm{TT}}(\Lambda,\chi)+\tau\big)
\qquad \forall x\in\cB(x_0,r).
\]
Hence, by Lemma \ref{lem:unif_to_risk},
\[
R(h_{A_\tau}) \le \left(
\frac{C_{p+1}^{(x)} \, r^{p+1} \, N^{p+1}}{(p+1)!}
+
K^N\big(\varepsilon_{\mathrm{TT}}(\Lambda,\chi)+\tau\big)
\right)^2.
\]
Therefore,
\[
\inf_{h\in\cH_{\mathrm{TT}}(\Lambda,\chi)} R(h) \le R(h_{A_\tau}).
\]
Apply Theorem \ref{thm:pdim_erm} with $R$ in place of $L$ (via Lemma \ref{lem:cleanrisk}) and combine.
Finally let $\tau\downarrow 0$.
\end{proof}

As discussed before, the deterministic quantity $E_{\mathrm{det}}$ in Theorem \ref{thm:end_to_end_pdim} is an
\emph{error certificate} for the \emph{reference predictor} $h_{A_\tau}$ obtained by
(Taylor truncation $\rightarrow$ embedding $\rightarrow$ TT compression), as established by
Theorem \ref{thm:deterministic}. It is not claimed that the same Taylor-remainder term holds
for an arbitrary ERM minimizer $\hat h\in\cH_{\mathrm{TT}}(\Lambda,\chi)$, which may in principle
place mass on indices with $|\alpha|>p$.
Instead, we use $h_{A_\tau}$ only to upper bound the approximation term
$\inf_{h\in\cH_{\mathrm{TT}}(\Lambda,\chi)}R(h)\le R(h_{A_\tau})$, and then apply the ERM
uniform-convergence bound to conclude that $\hat h$ competes with (and may improve upon)
this reference predictor.

\begin{corollary}[Sample size for excess-risk tolerance $\eta$]\label{cor:sample_pdim}
Fix $\eta>0$ and $\delta\in(0,1)$. If
\[
n \;\gtrsim\; \frac{(\Lambda K^N+Y_{\max})^4}{\eta^2}\left(d\log\!\Big(\frac{en}{d}\Big)+\log\frac{2}{\delta}\right),
\]
then with probability at least $1-\delta$ the ERM satisfies
\[
R(\hat h)\le \inf_{h\in\cH_{\mathrm{TT}}(\Lambda,\chi)} R(h)+\eta.
\]
\end{corollary}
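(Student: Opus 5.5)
The corollary follows from Theorem \ref{thm:pdim_erm} by a direct inversion of the excess-risk term, so my plan is to work with that bound and choose $n$ to make the statistical part at most $\eta$. Throughout, $M=(\Lambda K^N+Y_{\max})^2$ and $d=\mathrm{Pdim}(\cG)$. Theorem \ref{thm:pdim_erm}, transferred to $R$ via Lemma \ref{lem:cleanrisk}, gives with probability at least $1-\delta$
\[
R(\hat h)\le \inf_{h\in\cH_{\mathrm{TT}}(\Lambda,\chi)}R(h)+2M\sqrt{\frac{2d\log(en/d)}{n}}+2M\sqrt{\frac{\log(1/\delta)}{2n}}.
\]
It therefore suffices to show that the sample-size condition forces the sum of the last two terms to be at most $\eta$.

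For the first step I combine the two square roots. Using $\sqrt{a}+\sqrt{b}\le\sqrt{2(a+b)}$, the statistical term is at most
\[
2M\sqrt{\frac{2\bigl(2d\log(en/d)+\tfrac12\log(1/\delta)\bigr)}{n}}\le 4M\sqrt{\frac{d\log(en/d)+\log(2/\delta)}{n}}.
\]
Here I used $\tfrac14\log(1/\delta)\le\log(2/\delta)$, which holds because $\delta<1$.

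Requiring this to be $\le\eta$ is equivalent to
\[
n\ge \frac{16M^2}{\eta^2}\Bigl(d\log(en/d)+\log(2/\delta)\Bigr).
\]
Since $M^2=(\Lambda K^N+Y_{\max})^4$, this is exactly the stated condition with the absolute constant $16$ absorbed into $\gtrsim$. This matches how the paper uses $\gtrsim$ elsewhere.

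The one real subtlety is that $n$ appears on both sides, through $\log(en/d)$. I would address it in two ways. First, I would state the condition as an implicit one, as the corollary does, and note that the map $n\mapsto n/\log(en/d)$ is increasing for $n\ge d$. Hence whenever the displayed inequality holds at some $n\ge d$, the bound follows at that $n$, and the set of admissible $n$ is nonempty because the right-hand side grows only logarithmically in $n$. Second, if an explicit form were desired, I would apply the standard fact that $n\ge 2a\log(2ea/d)$ implies $n\ge a\log(en/d)$. This yields the explicit choice $n\asymp \frac{M^2}{\eta^2}\bigl(d\log\frac{M^2}{\eta^2}+\log\frac{2}{\delta}\bigr)$. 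I do not expect this step to be a genuine obstacle; it only requires care with constants, and the proof is otherwise a rearrangement of Theorem \ref{thm:pdim_erm}.
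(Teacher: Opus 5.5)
Your proof is correct and takes the same route as the paper: bound the statistical term of the ERM guarantee by $\eta$ and solve for $n$. Your explicit constant $16$ and the $\log(1/\delta)\to\log(2/\delta)$ step make precise what the paper leaves to $\gtrsim$. Your route is also slightly cleaner. You invoke Theorem~\ref{thm:pdim_erm}, which gives exactly the comparator $\inf_{h\in\cH_{\mathrm{TT}}(\Lambda,\chi)}R(h)$ appearing in the statement. The paper's proof instead cites Theorem~\ref{thm:end_to_end_pdim}, whose comparator is the larger quantity $E_{\mathrm{det}}^2$.
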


\begin{proof}
Theorem \ref{thm:end_to_end_pdim} yields a risk bound $R(\hat h)
\;\le\;
E_{\mathrm{det}}^2
\;+\; \Delta_n(\delta)$. To obtain a sample complexity statement, we fix a target estimation tolerance $\eta\;>\;0$ and require $\Delta_n(\delta)\;\le\;\eta$. Solving this sufficient condition for $n$ and substituting for $M=(\Lambda K^N+Y_{\max})^2$ yields the above sample complexity.  
\end{proof}


\subsection{Local TT surrogates (LTTS) guarantee: risk bound and sample complexity}

The following theorem is the main statistical result of this paper. It 
specializes the general statistical learning machinery of the preceding sections to the 
local hypothesis class $\cH_{\mathrm{TT}}(\Lambda^*(r),\chi)$, making the full 
$r$-dependence explicit and quantifying the advantage of local over 
global surrogation.

\begin{theorem}[LTTS guarantee with $r$ dependence]\label{thm:local_ltts}
Let $g$ and the regression model be as in Definitions \ref{ass:smooth} 
and \ref{ass:regression}. Let $\hat{h}$ be an ERM over the local hypothesis 
class $\cH_{\mathrm{TT}}(\Lambda^*(r),\chi)$ from Definition \ref{def:localhypclass}, 
with $\Lambda^\star(r)$ from Lemma \ref{lem:lambda_star}. Define
\[
M(r) \;:=\; \bigl(\Lambda^\star(r)\,K^N + Y_{\max}\bigr)^2,
\qquad
\varepsilon_{\mathrm{TT}}(r,\chi) \;:=\; \varepsilon_{\mathrm{TT}}
\bigl(\Lambda^\star(r),\chi\bigr),
\]
where $\varepsilon_{\mathrm{TT}}(\Lambda^\star(r),\chi)$ is as in 
Definition \ref{def:epsTT}, and let $d = \mathrm{Pdim}(\cG) = 
O(N\,m\,\chi^2\log N)$ with $m = p+1$.

\textbf{(i) Risk bound.}
With probability at least $1-\delta$ over $n$ i.i.d.\ samples:
\[
R(\hat{h})
\;\le\;
\underbrace{
\left(
\frac{C_{p+1}^{(x)}\,r^{p+1}\,N^{p+1}}{(p+1)!}
\;+\;
K^N\,\varepsilon_{\mathrm{TT}}(r,\chi)
\right)^{\!2}
}_{E_{\mathrm{det}}(r)^2\ \text{(approximation)}}
\;+\;
\underbrace{
2M(r)\sqrt{\frac{2d\log\!\bigl(\frac{en}{d}\bigr)}{n}}
\;+\;
2M(r)\sqrt{\frac{\log(1/\delta)}{2n}}
}_{\text{statistical terms}}.
\]

\textbf{(ii) Sample complexity.}
Fix excess-risk tolerance $\eta > 0$ and $\delta \in (0,1)$. 
To guarantee $R(\hat{h}) \le E_{\mathrm{det}}(r)^2 + \eta$ with 
probability at least $1-\delta$, it suffices to take
\[
n \;\gtrsim\;
\frac{M(r)^2}{\eta^2}
\left(
d\log\!\Big(\frac{en}{d}\Big) + \log\frac{2}{\delta}
\right).
\]

\end{theorem}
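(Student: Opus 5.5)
The proof is a direct specialization of Theorem \ref{thm:end_to_end_pdim} and Corollary \ref{cor:sample_pdim} to the choice $\Lambda=\Lambda^\star(r)$. No new probabilistic or approximation argument is needed. The work consists of checking that every hypothesis of those results holds for the local class $\cH_{\mathrm{TT}}(\Lambda^\star(r),\chi)$ and then substituting.

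For part (i), I would fix $\Lambda=\Lambda^\star(r)$ from Lemma \ref{lem:lambda_star}. This is a legitimate finite positive number, since $C_{\le p}^{(x)}<\infty$ by Definition \ref{ass:smooth}. Lemma \ref{lem:compact} then guarantees that an ERM $\hat h$ over this class exists. Next I would invoke Theorem \ref{thm:end_to_end_pdim} verbatim with this $\Lambda$. Its approximation term becomes
\[
\left(\frac{C_{p+1}^{(x)} r^{p+1}N^{p+1}}{(p+1)!}+K^N\varepsilon_{\mathrm{TT}}(\Lambda^\star(r),\chi)\right)^2 ,
\]
which is $E_{\mathrm{det}}(r)^2$ once we write $\varepsilon_{\mathrm{TT}}(r,\chi):=\varepsilon_{\mathrm{TT}}(\Lambda^\star(r),\chi)$. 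The loss bound becomes $M=(\Lambda^\star(r)K^N+Y_{\max})^2=M(r)$, consistent with Definition \ref{bounded_loss_class}. The pseudo-dimension $d$ is bounded by Lemma \ref{lem:loss_pdim_sq} combined with Corollary \ref{cor:tt_pdim}, which gives $O(Nm\chi^2\log N)$. That bound does not depend on $\Lambda$, so it is unaffected by the specialization.

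One point deserves a remark: the infimum defining $\varepsilon_{\mathrm{TT}}$ is over a nonempty set. It always contains $A=0$, so $\varepsilon_{\mathrm{TT}}(r,\chi)\le\|A^\star\|_F\le\Lambda^\star(r)$ is finite. The $\tau$-approximation argument in the proof of Theorem \ref{thm:end_to_end_pdim} therefore goes through unchanged. When $\rankTT(A^\star)\le\chi$, Definition \ref{def:localhypclass} gives $\varepsilon_{\mathrm{TT}}=0$, but this is not needed for the proof.

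For part (ii), I would require the statistical term $\Delta_n(\delta)=2M(r)\bigl(\sqrt{2d\log(en/d)/n}+\sqrt{\log(1/\delta)/(2n)}\bigr)$ to be at most $\eta$. Using $(a+b)^2\le 2(a^2+b^2)$, it suffices to have
\[
n\ge \frac{8M(r)^2}{\eta^2}\Bigl(2d\log\!\big(\tfrac{en}{d}\big)+\tfrac12\log\tfrac1\delta\Bigr),
\]
which is of the stated form $\frac{M(r)^2}{\eta^2}\bigl(d\log(en/d)+\log\frac{2}{\delta}\bigr)$ up to absolute constants. This mirrors the proof of Corollary \ref{cor:sample_pdim} with $(\Lambda K^N+Y_{\max})^4=M(r)^2$. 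Combining with (i) gives $R(\hat h)\le E_{\mathrm{det}}(r)^2+\eta$ with probability at least $1-\delta$.

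The only mildly delicate step is the implicit appearance of $n$ inside $\log(en/d)$ in (ii). I would handle it as the paper does, treating the condition as a sufficient implicit inequality hidden in $\gtrsim$. If an explicit bound is preferred, the standard fact that $n\ge 2a\log a$ implies $n\ge a\log n$ (for $a\ge e$) resolves it at the cost of a constant and a logarithmic factor.
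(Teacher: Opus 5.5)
Your proposal is correct and matches the paper's proof, which likewise applies Theorem~\ref{thm:end_to_end_pdim} and Corollary~\ref{cor:sample_pdim} with $\Lambda=\Lambda^\star(r)$ and $M=M(r)$, after noting that $\cA_{\mathrm{TT}}(\Lambda^\star(r),\chi)$ is nonempty so that $\varepsilon_{\mathrm{TT}}(r,\chi)<\infty$. Your additional details (that $A=0$ witnesses nonemptiness, that the pseudo-dimension bound does not depend on $\Lambda$, and the explicit constant obtained by solving $\Delta_n(\delta)\le\eta$) just make explicit what the paper leaves implicit.
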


\begin{proof}
Apply Theorem \ref{thm:end_to_end_pdim} with $\Lambda = \Lambda^\star(r)$ 
from Lemma \ref{lem:lambda_star}, also
$\cA_{\mathrm{TT}}(\Lambda^\star(r), \chi)$ is non-empty and hence 
$\varepsilon_{\mathrm{TT}}(r,\chi) < \infty$. 
Substituting $M = M(r)$ into the PAC bound of 
Theorem \ref{thm:end_to_end_pdim} gives part (i) directly.
Part (ii) follows from Corollary \ref{cor:sample_pdim} under the same 
substitution.
\end{proof}

Both the Taylor truncation term in $E_{\mathrm{det}}(r)^2$ and $M(r)$ decrease and saturate as $r\to 0$, so 
smaller patches simultaneously reduce approximation error and statistical 
complexity. Note that $M(r)$ contains a factor of $K^N$; for fixed $N$ 
this is a constant, but for large $N$ the choice of $r$ and $\chi$ must 
be made carefully to control the overall bound. Smaller patches 
require fewer samples to achieve the same excess-risk tolerance $\eta$, 
giving a quantitative advantage of local over global surrogation.

\section{Application to Quantum Learning Models}\label{sec:quantum}

The local TT surrogate (LTTS) framework developed in the preceding sections applies to any locally smooth
black-box function with bounded signal and noise. A gate-based quantum learning model typically outputs an expectation value
\[
  g(x) = \Tr\!\bigl(O\, U(x)\,\rho\, U(x)^\dagger\bigr),
\]
where $x \in \R^N$ is a classical input, $\rho$ is an initial quantum state,
$U(x)$ is a data-dependent unitary circuit, and $O$ is a bounded Hermitian
observable. When the input enters through finitely many gates of the form
$e^{-i x_j H_{j,\ell}}$ with finite-dimensional Hermitian generators
$H_{j,\ell}$, the map $x \mapsto U(x)$ is real-analytic, since the matrix
exponential is an entire function of the scalar parameter and products of
analytic maps are analytic. Because the trace is linear, $g(x)$ is
real-analytic in $x$. In particular, $g \in C^{p+1}(\cB(x_0,r))$ for any $p$
and any compact patch $\cB(x_0,r)$. Since each partial derivative
$\partial_x^\alpha g$ is continuous and $\cB(x_0,r)$ is compact, the supremum
in Definition \ref{ass:smooth} is finite. This reasoning applies broadly: it
covers reuploading models \cite{P_rez_Salinas_2020,PhysRevA.103.032430}, quantum convolutional architectures (as demonstrated in experiment \ref{sub:ltts_qcnn}) and general hybrid
ans\"atze with data-dependent rotations.

For bounded observables, $|g(x)| \le \|O\|_\infty$, hence we may take
$G_{\max}=\|O\|_\infty$. Moreover, under finite-shot estimation of a bounded
observable, each label $Y_i$ (a sample mean, hence a convex combination of
bounded outcomes) satisfies $|Y_i|\le \|O\|_\infty$ deterministically.
Therefore the induced noise $\varepsilon_i:=Y_i-g(X_i)$ is also bounded, with
$|\varepsilon_i|\le |Y_i|+|g(X_i)| \le 2\|O\|_\infty$, so
Definition \ref{ass:regression} holds with $\sigma = 2\|O\|_\infty$ and hence
$Y_{\max}=G_{\max}+\sigma \le 3\|O\|_\infty$. In our setting the tighter label
bound $|Y_i|\le \|O\|_\infty$ is available directly and can be used wherever
only a bound on $|Y_i|$ is required. Such local smoothness and boundedness are precisely the conditions our statistical analysis requires. Thus, trained quantum models are a natural setting for this framework. The high cost of circuit evaluation makes the speedup from classical surrogation especially valuable.


\section{Numerical Experiments}
\subsection{Rank scaling experiment}

\paragraph{Oracle experiment: TT-rank scaling of simplex-to-box zero-padding.}
We test the effects of embedding a simplex-supported derivative tensor into a box-indexed tensor through zero-padding on the TT rank. Given a test function $f\colon\mathbb{R}^N\to\mathbb{R}$ (defined over tensor dimension $N$) and truncation
order $p$, we construct two \emph{derivative tensors} over the index set
$\alpha\in\{0,\dots,p\}^N$:
\begin{itemize}
\item the \emph{true box tensor} on the box index set $\alpha\in\{0,\dots,p\}^N$,
namely
\[
A_{\mathrm{box}}[\alpha] \;=\; \partial^\alpha f(x_0),
\]
\item and the \emph{simplex-embedded tensor} $A_\Delta$ defined on the same box shape by
\[
A_\Delta[\alpha] \;=\;
\begin{cases}
\partial^\alpha f(x_0), & |\alpha|\le p,\\
0, & |\alpha|>p.
\end{cases}
\]
\end{itemize}

Each tensor is approximated by TT-SVD at successive rank caps
$\chi\in\{1,\dots,\chi_{\max}\}$ via left-to-right sweeps, and we record the smallest $\chi$ for
which the reconstruction $A_{\mathrm{TT}}$ meets a preset Frobenius tolerance.
Let $\chi_{\mathrm{box}}(\varepsilon)$ and $\chi_{\Delta}(\varepsilon)$
denote the corresponding smallest rank caps for $A_{\mathrm{box}}$ and
$A_\Delta$ respectively, and define the rank ratio $\rho=\chi_\Delta/\chi_{\mathrm{box}}$.
We report $\rho$ under two normalisations: a \emph{common-scale} criterion:
\begin{equation}\label{eq:common-scale}
  \lVert A - A_{\mathrm{TT}}\rVert_F
  \;\le\;
  \varepsilon\,\lVert A_{\mathrm{box}}\rVert_F,
\end{equation}
applied identically to both tensors, and a \emph{self-relative} criterion:
\begin{equation}\label{eq:self-relative}
  \lVert A - A_{\mathrm{TT}}\rVert_F
  \;\le\;
  \varepsilon\,\lVert A\rVert_F,
\end{equation}
applied to each tensor with its own norm.
Figure \ref{fig:rank-scatter} visualises all
$(\chi_{\mathrm{box}},\,\chi_\Delta)$ pairs under both criteria at
$\varepsilon\in\{10^{-2},10^{-3}\}$, and Table \ref{tab:rank-summary}
reports medians, interquartile ranges, and median rank caps per function
family.

Under the experiment, we tested seven function classes (\emph{ExpSum, ProductCos, matched-degree polynomials, higher-degree polynomials, QuadraticForm, Trig and Gauss}) spanning a range of separability and
smoothness structures across configurations
$(N,p)\in\{(4,3),(4,4),(6,2),(6,3)\}$ and tolerances
$\varepsilon\in\{10^{-2},10^{-3}\}$, yielding 176 test cases per
tolerance level in our default configuration.
Table \ref{tab:rank-summary} aggregates these into five
buckets (Separable, Poly $\deg=p$, Poly $\deg>p$, QuadraticForm, Trig\,+\,Gauss). All cases reach the prescribed tolerance within the rank cap
$\chi_{\max}=25$ (0 unreached across all families and tolerances),
ruling out selection effects.


\begin{figure}[!t]
  \centering
  \includegraphics[width=\textwidth]{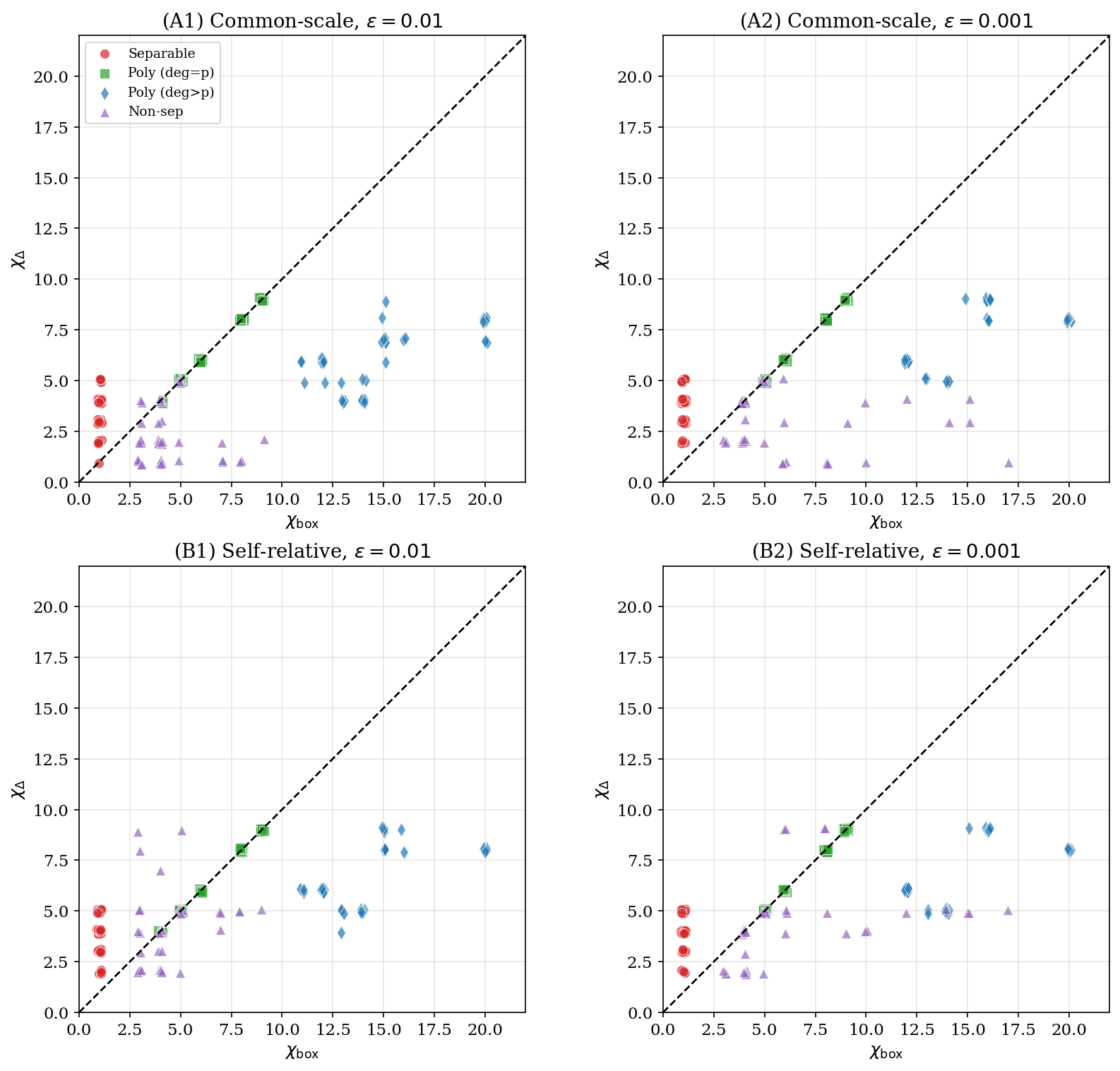}
  \caption{%
    Scatter of minimal TT rank caps
    $(\chi_{\mathrm{box}},\,\chi_\Delta)$ for 176 test cases across
    four $(N,p)$ configurations.
    Top row: common-scale criterion
    $\lVert A-A_{\mathrm{TT}}\rVert_F \le
     \varepsilon\lVert A_{\mathrm{box}}\rVert_F$;
    bottom row: self-relative criterion
    $\lVert A-A_{\mathrm{TT}}\rVert_F \le
     \varepsilon\lVert A\rVert_F$.
    Left column: $\varepsilon=10^{-2}$;
    right column: $\varepsilon=10^{-3}$.
    Points above the dashed diagonal indicate rank inflation
    ($\rho>1$); points below indicate deflation ($\rho<1$).
    Small random jitter is added for visibility since ranks are
    discrete.
    Separable functions cluster above the diagonal at
    $\chi_{\mathrm{box}}=1$; matched-degree polynomials lie on the
    diagonal; higher-degree polynomials and additional non-separable
    families fall below the diagonal under the common-scale criterion
    and move closer to the diagonal under the self-relative criterion.}
  \label{fig:rank-scatter}
\end{figure}


\begin{table}[t]
  \centering
  \caption{%
    Median rank ratio $\rho = \chi_\Delta / \chi_{\mathrm{box}}$
    with interquartile range (IQR) and median rank caps per
    function family, under both tolerance normalisations.
    All 176 instances reach the prescribed tolerance within
    $\chi_{\max}=25$ (0 unreached).}
  \label{tab:rank-summary}
  \small
  \setlength{\tabcolsep}{4.5pt}
  \begin{tabular}{@{}ll r cc cc@{}}
    \toprule
    & & &
    \multicolumn{2}{c}{\textbf{Common-scale}} &
    \multicolumn{2}{c}{\textbf{Self-relative}} \\
    \cmidrule(lr){4-5} \cmidrule(lr){6-7}
    \textbf{Family} & $\varepsilon$ & $n$ &
      med.\,$\rho$ [IQR] & med.\,$\chi_{\mathrm{box}}$/$\chi_\Delta$ &
      med.\,$\rho$ [IQR] & med.\,$\chi_{\mathrm{box}}$/$\chi_\Delta$ \\
    \midrule
    \multirow{2}{*}{Separable}
      & $10^{-2}$ & 48
        & 4.00\;[3.00,\,4.00] & 1\,/\,4
        & 4.00\;[3.00,\,4.00] & 1\,/\,4 \\
      & $10^{-3}$ & 48
        & 4.00\;[3.00,\,4.00] & 1\,/\,4
        & 4.00\;[3.00,\,4.00] & 1\,/\,4 \\
    \addlinespace
    \multirow{2}{*}{Poly ($\deg\!=\!p$)}
      & $10^{-2}$ & 40
        & 1.00\;[1.00,\,1.00] & 7\,/\,7
        & 1.00\;[1.00,\,1.00] & 7\,/\,7 \\
      & $10^{-3}$ & 40
        & 1.00\;[1.00,\,1.00] & 7\,/\,7
        & 1.00\;[1.00,\,1.00] & 7\,/\,7 \\
    \addlinespace
    \multirow{2}{*}{Poly ($\deg\!>\!p$)}
      & $10^{-2}$ & 40
        & 0.40\;[0.36,\,0.47] & 14\,/\,6
        & 0.45\;[0.40,\,0.53] & 14\,/\,7 \\
      & $10^{-3}$ & 40
        & 0.45\;[0.40,\,0.50] & 14\,/\,7
        & 0.45\;[0.40,\,0.52] & 14\,/\,7 \\
    \addlinespace
    \multirow{2}{*}{QuadraticForm}
      & $10^{-2}$ & 16
        & 1.00\;[1.00,\,1.00] & 4\,/\,4
        & 1.00\;[1.00,\,1.00] & 4\,/\,4 \\
      & $10^{-3}$ & 16
        & 1.00\;[1.00,\,1.00] & 4\,/\,4
        & 1.00\;[1.00,\,1.00] & 4\,/\,4 \\
    \addlinespace
    \multirow{2}{*}{Trig\,+\,Gauss}
      & $10^{-2}$ & 32
        & 0.37\;[0.25,\,0.67] & 4\,/\,2
        & 0.69\;[0.57,\,1.33] & 4\,/\,4 \\
      & $10^{-3}$ & 32
        & 0.50\;[0.19,\,0.54] & 6\,/\,2
        & 0.50\;[0.44,\,0.83] & 6\,/\,4 \\
    \bottomrule
  \end{tabular}
\end{table}


\paragraph{Numerical settings:}
Derivatives are evaluated at $x_0=0$ and tensors are formed on the dense
grid $\{0,\dots,p\}^N$ of size $(p+1)^N$.
For separable baselines we used \emph{ExpSum}
$f(x)=\exp\!\bigl(\sum_i c_i x_i\bigr)$ and \emph{ProductCos}
$f(x)=\prod_i\cos(c_i x_i)$, each with a deterministic all-ones instance
and 10 random draws ($c_i\sim\mathcal{N}(0,1)$) per $(N,p)$.
For \emph{polynomials} we tested \emph{matched-degree} ($\deg=p$) and \emph{higher-degree}
($\deg=p+2$), each with 10 independent random seeds per degree class per
$(N,p)$ to safeguard against single-instance artifacts.
Additional non-separable families comprised:
(i) coupled quadratic forms (\emph{QuadraticForm})
    $f(x)=x^{\!\top}\! Ax + b^{\!\top}\! x + c$ with dense $A$
    (4 random instances per $(N,p)$);
(ii) trigonometric sums (\emph{Trig})
    $f(x)=\sum_{k=1}^{K} w_k \prod_{i=1}^{N}\cos(m_{k,i} x_i)$ with
    $K=8$ and $\lVert m_k\rVert_1 \le 3$
    (4 random instances per $(N,p)$); and
(iii) correlated Gaussians (\emph{Gauss})
    $f(x)=\exp\!\bigl(-\tfrac{1}{2}\,x^{\!\top}\! Qx\bigr)$ with random
    SPD precision $Q$
    (4 random instances per $(N,p)$).
For the Gaussian family, derivatives were computed exactly at $x_0=0$ via a
pairing recursion (Wick/Isserlis); this expansion point causes all
odd-total-order derivatives to vanish, introducing structured sparsity in both tensors.

The results split into three regimes, visible as distinct clusters in
Figure \ref{fig:rank-scatter} and confirmed quantitatively in
Table \ref{tab:rank-summary}.
\begin{itemize}
\item \emph{Separable functions} (ExpSum, ProductCos) exhibit strong, consistent
rank inflation: the corresponding points sit well above the diagonal at
$\chi_{\mathrm{box}}=1$, with median $\rho=4.00$ and IQR
$[3.00,\,4.00]$ across both tolerances and both normalisations
(Table \ref{tab:rank-summary}).
These functions admit a TT-rank-1 representation on the
box, and the simplex mask disrupts this low-rank separable structure by
introducing cross-mode coupling.

\item \emph{Matched-degree polynomials} ($\deg=p$) yield $\rho=1.00$ in every
instance i.e for all 40 draws, every tolerance, both normalisations, appearing
exactly on the diagonal in Figure \ref{fig:rank-scatter}. The quadratic forms, whose Taylor expansions terminate at order 2 and thus
lie entirely within the simplex for all tested $p\ge 2$, show $\rho=1.00$
throughout.

\item \emph{Higher-degree polynomials} ($\deg>p$) and several additional
\emph{non-separable families} show rank deflation under the common-scale
criterion.
In the top row of Figure \ref{fig:rank-scatter}, Poly ($\deg>p$) and the
Trig\,+\,Gauss class fell consistently below the diagonal.
Table \ref{tab:rank-summary} reports that at
$\varepsilon=10^{-2}$, Poly ($\deg>p$) has median $\rho=0.40$ and
Trig\,+\,Gauss has median $\rho=0.37$ under common-scale normalisation.
This deflation is consistent with simplex truncation removing
high-total-degree content that would otherwise require approximation. Under the self-relative criterion (bottom row), the Trig\,+\,Gauss class
moved closer to the diagonal relative to the common-scale panels, and its
interquartile range widens to include values above unity at
$\varepsilon=10^{-2}$ (Table \ref{tab:rank-summary}: median $\rho=0.69$,
IQR $[0.57,\,1.33]$), indicating that zero-padding can be mildly inflationary
for some instances when measured relative to each tensor's own norm.

\end{itemize}

In summary, although Trig\,+\,Gauss class under self-relative criterion showed mild rank inflation, the median behaviour for several non-separable families
remains deflationary, showing
that the direction and magnitude of rank change depend on both tensor
content and the chosen normalisation. Across the tested non-separable families, simplex-to-box embedding via zero-padding does not exhibit systematic inflation in the tested non-separable families; under common-scale it is predominantly deflationary, while under self-relative it is closer to neutral with occasional mild inflation. This validates that our implementation choice (box-index TT storage with simplex truncation via zero-padding) is not inherently penalising in rank for the tested non-separable families.

\subsection{Local TT surrogate (LTTS) empirical validation} \label{sub:ltts_qcnn}


\begin{figure}[t]
    \centering
    \includegraphics[width=\linewidth]{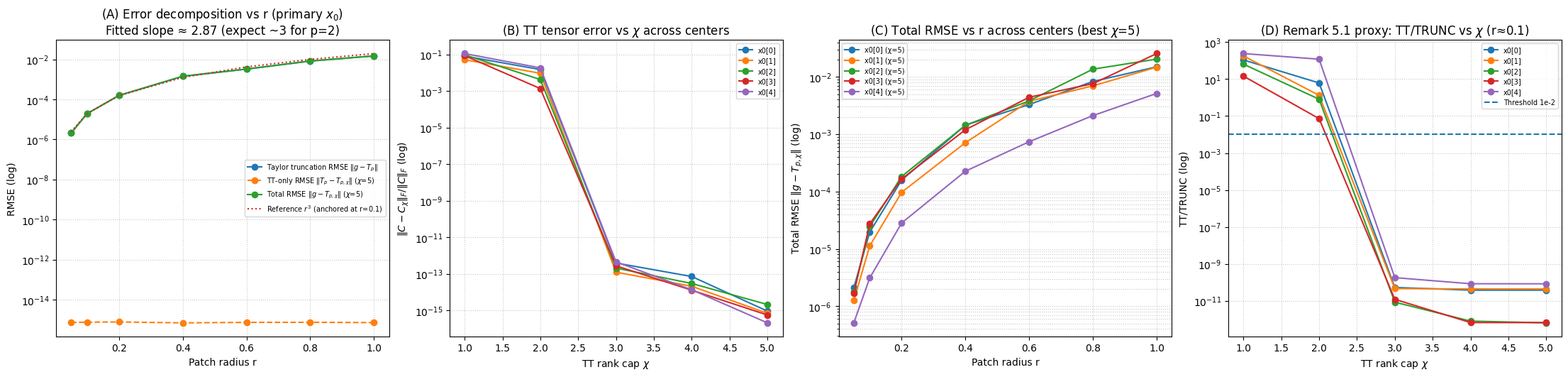}\par\vspace{0.6em}
    \includegraphics[width=\linewidth]{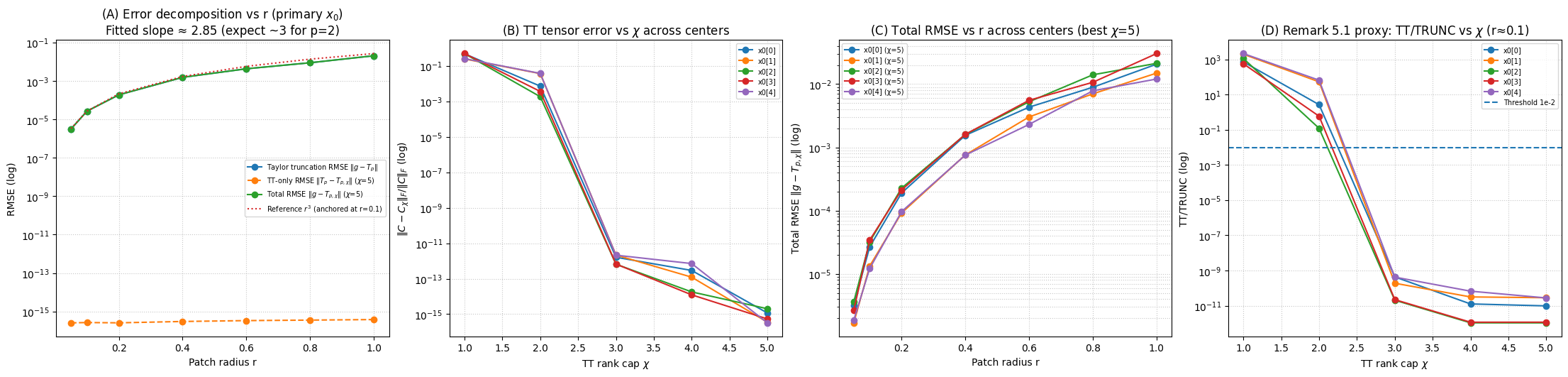}
    \caption{\textbf{Local TT surrogate (LTTS) validation across two datasets:}
    \textbf{Top row:} synthetic Gaussian classification. \textbf{Bottom row:} UCI Banknote Authentication.
    Each row shows the same 4-panel diagnostic: (A) error decomposition vs patch radius $r$ at the primary center $x_0$,
    including an $r^3$ reference,
    (B) TT coefficient-tensor truncation error $\|C-C_\chi\|_F/\|C\|_F$ vs rank cap $\chi$ across multiple centers;
    (C) total surrogate RMSE $\|g-T_{p,\chi}\|$ vs $r$ across centers at the best (largest) $\chi$ used;
    (D) Remark \ref{rem:informative} proxy, TT/TRUNC $=\mathrm{RMSE}(T_p-T_{p,\chi})/\mathrm{RMSE}(g-T_p)$ vs $\chi$ at $r\approx 0.1$,
    with the $10^{-2}$ threshold indicating when TT compression error is negligible relative to Taylor truncation.}
    \label{fig:local-tt-validation-two-datasets}
\end{figure}


We validate and demonstrate the core claims behind the local TT surrogate (LTTS) construction using experiments on two independent datasets:
(i) a synthetic Gaussian dataset  and (ii) a real-data experiment on the UCI Banknote Authentication dataset. In both cases, we train an identical 6-qubit QCNN ($N\!=\!20$ parameters, angle embedding $R_Y(x_i)$, two entangling layers, pooling to
qubit 0, output $g(x) = \langle Z_0 \rangle$). After training and freezing the parameters $\theta^*$, the function $g$ is treated as a black-box oracle. Then we construct a degree-$p=2$ local Taylor surrogate around multiple centers $x_0$ and compress its coefficient tensor using TT-SVD with rank cap $\chi$.
We sweep (a) patch centers $x_0$ and (b) patch radii $r$, and report a consistent set of diagnostics summarized in Fig. \ref{fig:local-tt-validation-two-datasets}.

\paragraph{Datasets:}
\textbf{(1) Synthetic Gaussian} ($D\!=\!6$): balanced binary classification with class
means separated by $\Delta\!=\!2$ in coordinate 1; features clipped and mapped to
$[0,\pi]$. Trained with 3 restarts to 96\% test accuracy.
\textbf{(2) UCI Banknote Authentication \cite{banknote_authentication_267}} ($D\!=\!6$): 1{,}372 samples with 4 wavelet
features, standardised, zero-padded to 6D, mapped to $[0,\pi]$. Trained with 1 restart
to 86\% test accuracy.

\paragraph{Pipeline:}
At each chosen patch centre $x_0$, we: (i) compute the degree-$p\!=\!2$ Taylor coefficient
tensor $\mathcal{C}$ via 73 central finite-difference circuit
evaluations, (ii) decompose
$\mathcal{C}$ into TT format via TT-SVD with hard rank cap $\chi$,
(iii) sweep $\chi \in \{1,\dots,5\}$ to identify the threshold where TT compression
error becomes negligible (this is done to back our remark \ref{rem:informative}); (iv) sweep patch radius
$r \in \{0.05,\dots,1.0\}$ to verify $O(r^3)$ truncation scaling;
(v) fit an ERM surrogate via ALS warm-started with the Taylor-TT surrogate on 600 uniform samples in $B_r(x_0)$ and
(vi) verify the deterministic certificate bound $E_{\det}$.
To test spatial universality, we repeat at 5 diverse centres per dataset: the original
training sample, the last training sample, a class-0 and class-1 sample near the
decision boundary, and a high-confidence test sample.

All findings are consistent across both datasets and all chosen patch centres. In Fig. \ref{fig:local-tt-validation-two-datasets} for both datasets, the four panels separate and validate the two error sources in the local Taylor-TT surrogate. (A) shows the primary-patch error decomposition versus $r$ into truncation $\|g-T_p\|$, TT-only $\|T_p-T_{p,\chi}\|$, and total $\|g-T_{p,\chi}\|$, exhibiting the expected $\sim r^3$ scaling for $p=2$ while the TT-only term drops to near machine precision once $\chi$ is sufficiently large, so the total error is truncation-dominated in the shown regime. (B) plots the TT coefficient-tensor compression error $\|C-C_\chi\|_F/\|C\|_F$ versus $\chi$ across several $x_0$, demonstrating rapid decay at modest ranks and consistency across centers. (C) reports the total surrogate RMSE $\|g-T_{p,\chi}\|$ versus $r$ across centers at the best $\chi$, confirming robustness of local accuracy and the expected degradation as patch size grows. And (D) reports the Remark \ref{rem:informative} proxy (validation) $\mathrm{TT/TRUNC}:=\mathrm{RMSE}(T_p-T_{p,\chi})/\mathrm{RMSE}(g-T_p)$ at $r\approx 0.1$, which falls sharply below a small threshold (e.g.\ $10^{-2}$) beyond a modest $\chi$ and stays extremely small thereafter, indicating that TT compression error is negligible relative to Taylor truncation at that patch scale. Higher relative errors at boundary points ($|g(x_0)| \approx 0$) were observed which can be attributed to small
denominators rather than degraded absolute accuracy. The framework's structural properties are
robust to data distribution, feature space, and expansion centre. In addition, we observe a consistent wall-clock speedup of $\sim 250$–$400\times$ per model evaluation when replacing circuit calls by the TT surrogate (across centers and both datasets).

\paragraph{ERM is consistently competitive with, and often improves upon, the certificate in our experiments.}
The ALS-fitted ERM surrogate achieves \emph{lower} test error than the TT-SVD
certificate at every $\chi$ on both datasets during the experiments as reported in Table \ref{tab:erm_vs_certificate_two_datasets}. The certificate inherits the full Taylor truncation error since it is
constructed from $\mathcal{C}$ without seeing $g$ directly. The ERM, fitting on actual
circuit evaluations, partially corrects for the degree-2 remainder. Warm-starting ALS
from certificate cores is advantageous for convergence as compared to random initialization. But the scaling of advantage due to warm-start for larger datasets is yet to be tested. The deterministic bound $E_{\det}$ holds in all cases (\texttt{cert}$\,\le\,E_{\det}$).

\begin{table}[t]
\centering
\caption{ERM improves upon the Taylor-TT certificate across TT rank caps $\chi$ on both the Synthetic Gaussian and Banknote datasets (data-space certification at $r=0.1$, $p=2$, $D=6$). We report here for a primary patch centre case, the deterministic bound $E_{\det}$, certificate and ERM test RMSE, and the ratio $\text{te\_ratio}=\mathrm{ERM\_RMSE}/\mathrm{cert\_RMSE}$. In all cases, $\text{te\_ratio}<1$ and the certificate satisfies $\text{cert}\le E_{\det}$.}
\label{tab:erm_vs_certificate_two_datasets}
\begin{tabular}{c|cccc|cccc}
\toprule
& \multicolumn{4}{c|}{Synthetic Gaussian} & \multicolumn{4}{c}{Banknote Authentication} \\
\cmidrule(lr){2-5}\cmidrule(lr){6-9}
$\chi$
& $E_{\det}$ & cert\_RMSE & ERM\_RMSE & te\_ratio
& $E_{\det}$ & cert\_RMSE & ERM\_RMSE & te\_ratio \\
\midrule
1
& $1.255\mathrm{e}{-}02$ & $1.133\mathrm{e}{-}04$ & $1.056\mathrm{e}{-}04$ & $0.93$
& $1.016\mathrm{e}{-}01$ & $9.484\mathrm{e}{-}04$ & $8.645\mathrm{e}{-}04$ & $0.91$ \\
2
& $3.572\mathrm{e}{-}04$ & $1.703\mathrm{e}{-}05$ & $8.667\mathrm{e}{-}06$ & $0.51$
& $2.389\mathrm{e}{-}04$ & $2.152\mathrm{e}{-}05$ & $9.400\mathrm{e}{-}06$ & $0.44$ \\
3
& $7.895\mathrm{e}{-}05$ & $1.729\mathrm{e}{-}05$ & $8.557\mathrm{e}{-}06$ & $0.49$
& $8.510\mathrm{e}{-}05$ & $2.188\mathrm{e}{-}05$ & $9.110\mathrm{e}{-}06$ & $0.42$ \\
4
& $7.895\mathrm{e}{-}05$ & $1.729\mathrm{e}{-}05$ & $8.621\mathrm{e}{-}06$ & $0.50$
& $8.510\mathrm{e}{-}05$ & $2.188\mathrm{e}{-}05$ & $8.985\mathrm{e}{-}06$ & $0.41$ \\
5
& $7.895\mathrm{e}{-}05$ & $1.729\mathrm{e}{-}05$ & $8.949\mathrm{e}{-}06$ & $0.52$
& $8.510\mathrm{e}{-}05$ & $2.188\mathrm{e}{-}05$ & $9.124\mathrm{e}{-}06$ & $0.42$ \\
\bottomrule
\end{tabular}
\end{table}

\section{Discussion}

In summary, we proposed a model-agnostic analytical framework for local TT surrogates (LTTS) wherein the total approximation error decomposes into three transparent sources, each independently controllable:
\begin{itemize}
    \item \emph{Polynomial truncation error}, controlled by the patch radius $r$ and polynomial degree $p$;
    \item \emph{Tensor compression error}, controlled by the TT bond dimension $\chi$;
    \item \emph{Statistical estimation error}, arising when the surrogate is learned from finite noisy samples with $r$ dependence.
\end{itemize}

Neither the deterministic approximation component nor the statistical component alone suffices: the Taylor-TT certificate without ERM gives no learning algorithm, and ERM without the certificate gives no guarantee that the hypothesis class is rich enough. The final learned surrogate is a TT-format function but is \emph{not} the certificate predictor. ERM is free to find any predictor in the hypothesis class, and may outperform the certificate. The resulting surrogate can be evaluated, differentiated, and integrated at negligible classical cost. Our framework directly addresses a practical concern in the deployment of quantum machine learning: large-scale-trained quantum models still require repeated quantum circuit executions, which are computationally expensive or even infeasible. By decoupling the training phase (which runs on quantum hardware) from the deployment/inference phase (which runs classically), local TT surrogates enable a workflow in which quantum resources are invested once during training and then amortized over an unlimited number of cheap classical queries. Given the explicit $r$-dependence in our guarantees, local surrogates do improve with smaller patch radii. Numerical experiments on quantum convolutional neural networks trained on real-world datasets validate the theoretical predictions while using the Taylor-TT certificate as a warm start.

A natural question is whether one can obtain guarantees similar to classical surrogates in Schriber et al \cite{Schreiber_2023} by sampling $g$ on a deterministic grid and reconstructing uniformly on $\cB(x_0, r)$. Such results require the target to belong \emph{a priori} to a finite-dimensional function family with a well-conditioned sampling operator (in their case: bandlimited Fourier models). In essence, there ought to be structural guarantees for the quantum learning model we intend to surrogate. The statistical regression approach sidesteps these issues by foregoing any such design prerequisites and views the target as a black box with local smoothness. For such agnostic targets, the TT constraint controls model complexity, Taylor features control the approximation-to-function link, and i.i.d. sampling yields principled PAC guarantees.

In our framework, while the \emph{parameter count} $d_{\mathrm{eff}}$ scales polynomially i.e $\deff = N(p+1)\chi^2$ rather than the naive $(p+1)^N$, the \emph{worst-case constants} in the bounds inherit an exponential factor $K^N$ through the tensor-product feature norm. As a consequence, the worst-case sample complexity bound scales as $K^{4N}$ in Corollory \ref{cor:sample_pdim}, which could be improved upon in future work by utilizing more efficient TT constructions.  This separates \emph{representation complexity} (polynomial via TT structure) from \emph{feature-induced constants} (exponential via the embedding), clarifying precisely where the curse of dimensionality enters. On the same note, it should be possible to construct the certificate predictor using other higher-order approximations such as Chebyshev \cite{rodrguezaldavero2025chebyshevapproximationcompositionfunctions} or Fourier bases, potentially yielding tighter complexity bounds.


\section*{Data availability}
\addcontentsline{toc}{section}{Data availability}  
The code for all the numerical experiments and the plots in this paper is available at \url{https://github.com/sreerajrajindrannair/Local_tensor-train_surrogates_for_qml}

\section*{Acknowledgments}

\addcontentsline{toc}{section}{Acknowledgments}  

S.R.N. acknowledges and thanks his doctoral research funding agency, Sydney Quantum Academy (SQA). The authors thank Afrad Basheer for reviewing an early draft of this manuscript.


\bibliographystyle{unsrturl}
\bibliography{references}

@article{Schreiber_2023,
   title={Classical Surrogates for Quantum Learning Models},
   volume={131},
   ISSN={1079-7114},
   url={http://dx.doi.org/10.1103/PhysRevLett.131.100803},
   DOI={10.1103/physrevlett.131.100803},
   number={10},
   journal={Physical Review Letters},
   publisher={American Physical Society (APS)},
   author={Schreiber, Franz J. and Eisert, Jens and Meyer, Johannes Jakob},
   year={2023},
   month=sep }

@misc{lerch2024efficient,
    title={Efficient quantum-enhanced classical simulation for patches of quantum landscapes},
    author={Sacha Lerch and Ricard Puig and Manuel S. Rudolph and Armando Angrisani and Tyson Jones and M. Cerezo and Supanut Thanasilp and Zoë Holmes},
    year={2024},
    eprint={2411.19896},
    archivePrefix={arXiv},
    primaryClass={quant-ph}
}

@article{PhysRevA.107.062612,
  title = {Multidimensional Fourier series with quantum circuits},
  author = {Casas, Berta and Cervera-Lierta, Alba},
  journal = {Phys. Rev. A},
  volume = {107},
  issue = {6},
  pages = {062612},
  numpages = {15},
  year = {2023},
  month = {Jun},
  publisher = {American Physical Society},
  doi = {10.1103/PhysRevA.107.062612},
  url = {https://link.aps.org/doi/10.1103/PhysRevA.107.062612}
}

@article{PhysRevA.103.032430,
  title = {Effect of data encoding on the expressive power of variational quantum-machine-learning models},
  author = {Schuld, Maria and Sweke, Ryan and Meyer, Johannes Jakob},
  journal = {Phys. Rev. A},
  volume = {103},
  issue = {3},
  pages = {032430},
  numpages = {12},
  year = {2021},
  month = {Mar},
  publisher = {American Physical Society},
  doi = {10.1103/PhysRevA.103.032430},
  url = {https://link.aps.org/doi/10.1103/PhysRevA.103.032430}
}

@article{P_rez_Salinas_2020,
   title={Data re-uploading for a universal quantum classifier},
   volume={4},
   ISSN={2521-327X},
   url={http://dx.doi.org/10.22331/q-2020-02-06-226},
   DOI={10.22331/q-2020-02-06-226},
   journal={Quantum},
   publisher={Verein zur Forderung des Open Access Publizierens in den Quantenwissenschaften},
   author={Pérez-Salinas, Adrián and Cervera-Lierta, Alba and Gil-Fuster, Elies and Latorre, José I.},
   year={2020},
   month=feb, pages={226} }

@article{P_rez_Salinas_2021,
   title={One qubit as a universal approximant},
   volume={104},
   ISSN={2469-9934},
   url={http://dx.doi.org/10.1103/PhysRevA.104.012405},
   DOI={10.1103/physreva.104.012405},
   number={1},
   journal={Physical Review A},
   publisher={American Physical Society (APS)},
   author={Pérez-Salinas, Adrián and López-Núñez, David and García-Sáez, Artur and Forn-Díaz, P. and Latorre, José I.},
   year={2021},
   month=jul }

@article{Schuld_2022,
   title={Is Quantum Advantage the Right Goal for Quantum Machine Learning?},
   volume={3},
   ISSN={2691-3399},
   url={http://dx.doi.org/10.1103/PRXQuantum.3.030101},
   DOI={10.1103/prxquantum.3.030101},
   number={3},
   journal={PRX Quantum},
   publisher={American Physical Society (APS)},
   author={Schuld, Maria and Killoran, Nathan},
   year={2022},
   month=jul }

@article{Jerbi_2024,
   title={Shadows of quantum machine learning},
   volume={15},
   ISSN={2041-1723},
   url={http://dx.doi.org/10.1038/s41467-024-49877-8},
   DOI={10.1038/s41467-024-49877-8},
   number={1},
   journal={Nature Communications},
   publisher={Springer Science and Business Media LLC},
   author={Jerbi, Sofiene and Gyurik, Casper and Marshall, Simon C. and Molteni, Riccardo and Dunjko, Vedran},
   year={2024},
   month=jul }

@inproceedings{10.1145/2939672.2939778,
author = {Ribeiro, Marco Tulio and Singh, Sameer and Guestrin, Carlos},
title = {"Why Should I Trust You?": Explaining the Predictions of Any Classifier},
year = {2016},
isbn = {9781450342322},
publisher = {Association for Computing Machinery},
address = {New York, NY, USA},
url = {https://doi.org/10.1145/2939672.2939778},
doi = {10.1145/2939672.2939778},
booktitle = {Proceedings of the 22nd ACM SIGKDD International Conference on Knowledge Discovery and Data Mining},
pages = {1135–1144},
numpages = {10},
location = {San Francisco, California, USA},
series = {KDD '16}
}

@article{Pira_2024,
   title={On the interpretability of quantum neural networks},
   volume={6},
   ISSN={2524-4914},
   url={http://dx.doi.org/10.1007/s42484-024-00191-y},
   DOI={10.1007/s42484-024-00191-y},
   number={2},
   journal={Quantum Machine Intelligence},
   publisher={Springer Science and Business Media LLC},
   author={Pira, Lirandë and Ferrie, Chris},
   year={2024},
   month=aug }

@article{McClean_2016,
   title={The theory of variational hybrid quantum-classical algorithms},
   volume={18},
   ISSN={1367-2630},
   url={http://dx.doi.org/10.1088/1367-2630/18/2/023023},
   DOI={10.1088/1367-2630/18/2/023023},
   number={2},
   journal={New Journal of Physics},
   publisher={IOP Publishing},
   author={McClean, Jarrod R and Romero, Jonathan and Babbush, Ryan and Aspuru-Guzik, Alán},
   year={2016},
   month=feb, pages={023023} }

@article{PhysRevA.98.032309,
  title = {Quantum circuit learning},
  author = {Mitarai, K. and Negoro, M. and Kitagawa, M. and Fujii, K.},
  journal = {Phys. Rev. A},
  volume = {98},
  issue = {3},
  pages = {032309},
  numpages = {6},
  year = {2018},
  month = {Sep},
  publisher = {American Physical Society},
  doi = {10.1103/PhysRevA.98.032309},
  url = {https://link.aps.org/doi/10.1103/PhysRevA.98.032309}
}

@misc{cerezo2024doesprovableabsencebarren,
      title={Does provable absence of barren plateaus imply classical simulability? Or, why we need to rethink variational quantum computing}, 
      author={M. Cerezo and Martin Larocca and Diego García-Martín and N. L. Diaz and Paolo Braccia and Enrico Fontana and Manuel S. Rudolph and Pablo Bermejo and Aroosa Ijaz and Supanut Thanasilp and Eric R. Anschuetz and Zoë Holmes},
      year={2024},
      eprint={2312.09121},
      archivePrefix={arXiv},
      primaryClass={quant-ph},
      url={https://arxiv.org/abs/2312.09121}, 
}

@article{Larocca_2025,
   title={Barren plateaus in variational quantum computing},
   volume={7},
   ISSN={2522-5820},
   url={http://dx.doi.org/10.1038/s42254-025-00813-9},
   DOI={10.1038/s42254-025-00813-9},
   number={4},
   journal={Nature Reviews Physics},
   publisher={Springer Science and Business Media LLC},
   author={Larocca, Martín and Thanasilp, Supanut and Wang, Samson and Sharma, Kunal and Biamonte, Jacob and Coles, Patrick J. and Cincio, Lukasz and McClean, Jarrod R. and Holmes, Zoë and Cerezo, M.},
   year={2025},
   month=mar, pages={174–189} }

@misc{cotler2021revisitingdequantizationquantumadvantage,
      title={Revisiting dequantization and quantum advantage in learning tasks}, 
      author={Jordan Cotler and Hsin-Yuan Huang and Jarrod R. McClean},
      year={2021},
      eprint={2112.00811},
      archivePrefix={arXiv},
      primaryClass={quant-ph},
      url={https://arxiv.org/abs/2112.00811}, 
}

@article{Longo_2024,
   title={Explainable Artificial Intelligence (XAI) 2.0: A manifesto of open challenges and interdisciplinary research directions},
   volume={106},
   ISSN={1566-2535},
   url={http://dx.doi.org/10.1016/j.inffus.2024.102301},
   DOI={10.1016/j.inffus.2024.102301},
   journal={Information Fusion},
   publisher={Elsevier BV},
   author={Longo, Luca and Brcic, Mario and Cabitza, Federico and Choi, Jaesik and Confalonieri, Roberto and Ser, Javier Del and Guidotti, Riccardo and Hayashi, Yoichi and Herrera, Francisco and Holzinger, Andreas and Jiang, Richard and Khosravi, Hassan and Lecue, Freddy and Malgieri, Gianclaudio and Páez, Andrés and Samek, Wojciech and Schneider, Johannes and Speith, Timo and Stumpf, Simone},
   year={2024},
   month=jun, pages={102301} }

@book{molnar2025,
  title={Interpretable Machine Learning},
  subtitle={A Guide for Making Black Box Models Explainable},
  author={Christoph Molnar},
  year={2025},
  edition={3},
  isbn={978-3-911578-03-5},
  url={https://christophm.github.io/interpretable-ml-book}
}

@misc{dunjko2017machinelearningartificial,
      title={Machine learning \& artificial intelligence in the quantum domain}, 
      author={Vedran Dunjko and Hans J. Briegel},
      year={2017},
      eprint={1709.02779},
      archivePrefix={arXiv},
      primaryClass={quant-ph},
      url={https://arxiv.org/abs/1709.02779}, 
}

@article{Wang_2024,
   title={A comprehensive review of quantum machine learning: from NISQ to fault tolerance},
   volume={87},
   ISSN={1361-6633},
   url={http://dx.doi.org/10.1088/1361-6633/ad7f69},
   DOI={10.1088/1361-6633/ad7f69},
   number={11},
   journal={Reports on Progress in Physics},
   publisher={IOP Publishing},
   author={Wang, Yunfei and Liu, Junyu},
   year={2024},
   month=oct, pages={116402} }

@article{Bridgeman_2017,
   title={Hand-waving and interpretive dance: an introductory course on tensor networks},
   volume={50},
   ISSN={1751-8121},
   url={http://dx.doi.org/10.1088/1751-8121/aa6dc3},
   DOI={10.1088/1751-8121/aa6dc3},
   number={22},
   journal={Journal of Physics A: Mathematical and Theoretical},
   publisher={IOP Publishing},
   author={Bridgeman, Jacob C and Chubb, Christopher T},
   year={2017},
   month=may, pages={223001} }

@misc{perezgarcia2007matrixproductstaterepresentations,
      title={Matrix Product State Representations}, 
      author={D. Perez-Garcia and F. Verstraete and M. M. Wolf and J. I. Cirac},
      year={2007},
      eprint={quant-ph/0608197},
      archivePrefix={arXiv},
      primaryClass={quant-ph},
      url={https://arxiv.org/abs/quant-ph/0608197}, 
}

@misc{nair2025localsurrogatesquantummachine,
      title={Local surrogates for quantum machine learning}, 
      author={Sreeraj Rajindran Nair and Christopher Ferrie},
      year={2025},
      eprint={2506.09425},
      archivePrefix={arXiv},
      primaryClass={quant-ph},
      url={https://arxiv.org/abs/2506.09425}, 
}

@misc{chang2025primerquantummachinelearning,
      title={A Primer on Quantum Machine Learning}, 
      author={Su Yeon Chang and M. Cerezo},
      year={2025},
      eprint={2511.15969},
      archivePrefix={arXiv},
      primaryClass={quant-ph},
      url={https://arxiv.org/abs/2511.15969}, 
}

@ARTICLE{11014055,
  author={Lamichhane, Pradeep and Rawat, Danda B.},
  journal={IEEE Access}, 
  title={Quantum Machine Learning: Recent Advances, Challenges, and Perspectives}, 
  year={2025},
  volume={13},
  number={},
  pages={94057-94105},
  doi={10.1109/ACCESS.2025.3573244}}

@misc{acampora2025quantumcomputingartificialintelligence,
      title={Quantum computing and artificial intelligence: status and perspectives}, 
      author={Giovanni Acampora and Andris Ambainis and Natalia Ares and Leonardo Banchi and Pallavi Bhardwaj and Daniele Binosi and G. Andrew D. Briggs and Tommaso Calarco and Vedran Dunjko and Jens Eisert and Olivier Ezratty and Paul Erker and Federico Fedele and Elies Gil-Fuster and Martin Gärttner and Mats Granath and Markus Heyl and Iordanis Kerenidis and Matthias Klusch and Anton Frisk Kockum and Richard Kueng and Mario Krenn and Jörg Lässig and Antonio Macaluso and Sabrina Maniscalco and Florian Marquardt and Kristel Michielsen and Gorka Muñoz-Gil and Daniel Müssig and Hendrik Poulsen Nautrup and Sophie A. Neubauer and Evert van Nieuwenburg and Roman Orus and Jörg Schmiedmayer and Markus Schmitt and Philipp Slusallek and Filippo Vicentini and Christof Weitenberg and Frank K. Wilhelm},
      year={2025},
      eprint={2505.23860},
      archivePrefix={arXiv},
      primaryClass={quant-ph},
      url={https://arxiv.org/abs/2505.23860}, 
}

@misc{huang2025vastworldquantumadvantage,
      title={The vast world of quantum advantage}, 
      author={Hsin-Yuan Huang and Soonwon Choi and Jarrod R. McClean and John Preskill},
      year={2025},
      eprint={2508.05720},
      archivePrefix={arXiv},
      primaryClass={quant-ph},
      url={https://arxiv.org/abs/2508.05720}, 
}

@misc{eisert2025mindgapsfraughtroad,
      title={Mind the gaps: The fraught road to quantum advantage}, 
      author={Jens Eisert and John Preskill},
      year={2025},
      eprint={2510.19928},
      archivePrefix={arXiv},
      primaryClass={quant-ph},
      url={https://arxiv.org/abs/2510.19928}, 
}

@article{Huggins_2019,
   title={Towards quantum machine learning with tensor networks},
   volume={4},
   ISSN={2058-9565},
   url={http://dx.doi.org/10.1088/2058-9565/aaea94},
   DOI={10.1088/2058-9565/aaea94},
   number={2},
   journal={Quantum Science and Technology},
   publisher={IOP Publishing},
   author={Huggins, William and Patil, Piyush and Mitchell, Bradley and Whaley, K Birgitta and Stoudenmire, E Miles},
   year={2019},
   month=jan, pages={024001} }

@article{_unkovi__2022,
   title={Deep tensor networks with matrix product operators},
   volume={4},
   ISSN={2524-4914},
   url={http://dx.doi.org/10.1007/s42484-022-00081-1},
   DOI={10.1007/s42484-022-00081-1},
   number={2},
   journal={Quantum Machine Intelligence},
   publisher={Springer Science and Business Media LLC},
   author={Žunkovič, Bojan},
   year={2022},
   month=aug }

@article{Rieser_2023,
   title={Tensor networks for quantum machine learning},
   volume={479},
   ISSN={1471-2946},
   url={http://dx.doi.org/10.1098/rspa.2023.0218},
   DOI={10.1098/rspa.2023.0218},
   number={2275},
   journal={Proceedings of the Royal Society A: Mathematical, Physical and Engineering Sciences},
   publisher={The Royal Society},
   author={Rieser, Hans-Martin and Köster, Frank and Raulf, Arne Peter},
   year={2023},
   month=jul }

@article{Preskill_2018,
   title={Quantum Computing in the NISQ era and beyond},
   volume={2},
   ISSN={2521-327X},
   url={http://dx.doi.org/10.22331/q-2018-08-06-79},
   DOI={10.22331/q-2018-08-06-79},
   journal={Quantum},
   publisher={Verein zur Forderung des Open Access Publizierens in den Quantenwissenschaften},
   author={Preskill, John},
   year={2018},
   month=aug, pages={79} }

@article{RevModPhys.94.015004,
  title = {Noisy intermediate-scale quantum algorithms},
  author = {Bharti, Kishor and Cervera-Lierta, Alba and Kyaw, Thi Ha and Haug, Tobias and Alperin-Lea, Sumner and Anand, Abhinav and Degroote, Matthias and Heimonen, Hermanni and Kottmann, Jakob S. and Menke, Tim and Mok, Wai-Keong and Sim, Sukin and Kwek, Leong-Chuan and Aspuru-Guzik, Al\'an},
  journal = {Rev. Mod. Phys.},
  volume = {94},
  issue = {1},
  pages = {015004},
  numpages = {69},
  year = {2022},
  month = {Feb},
  publisher = {American Physical Society},
  doi = {10.1103/RevModPhys.94.015004},
  url = {https://link.aps.org/doi/10.1103/RevModPhys.94.015004}
}

@misc{gujju2024quantummachinelearningnearterm,
      title={Quantum Machine Learning on Near-Term Quantum Devices: Current State of Supervised and Unsupervised Techniques for Real-World Applications}, 
      author={Yaswitha Gujju and Atsushi Matsuo and Rudy Raymond},
      year={2024},
      eprint={2307.00908},
      archivePrefix={arXiv},
      primaryClass={quant-ph},
      url={https://arxiv.org/abs/2307.00908}, 
}

@article{Melnikov_2023,
   title={Quantum machine learning: from physics to software engineering},
   volume={8},
   ISSN={2374-6149},
   url={http://dx.doi.org/10.1080/23746149.2023.2165452},
   DOI={10.1080/23746149.2023.2165452},
   number={1},
   journal={Advances in Physics: X},
   publisher={Informa UK Limited},
   author={Melnikov, Alexey and Kordzanganeh, Mohammad and Alodjants, Alexander and Lee, Ray-Kuang},
   year={2023},
   month=feb }

@misc{preskill2025nisqmegaquopmachine,
      title={Beyond NISQ: The Megaquop Machine}, 
      author={John Preskill},
      year={2025},
      eprint={2502.17368},
      archivePrefix={arXiv},
      primaryClass={quant-ph},
      doi={https://doi.org/10.1145/3723153},
      url={https://arxiv.org/abs/2502.17368}, 
}

@misc{zimbors2025mythsquantumcomputationfault,
      title={Myths around quantum computation before full fault tolerance: What no-go theorems rule out and what they don't}, 
      author={Zoltán Zimborás and Bálint Koczor and Zoë Holmes and Elsi-Mari Borrelli and András Gilyén and Hsin-Yuan Huang and Zhenyu Cai and Antonio Acín and Leandro Aolita and Leonardo Banchi and Fernando G. S. L. Brandão and Daniel Cavalcanti and Toby Cubitt and Sergey N. Filippov and Guillermo García-Pérez and John Goold and Orsolya Kálmán and Elica Kyoseva and Matteo A. C. Rossi and Boris Sokolov and Ivano Tavernelli and Sabrina Maniscalco},
      year={2025},
      eprint={2501.05694},
      archivePrefix={arXiv},
      primaryClass={quant-ph},
      url={https://arxiv.org/abs/2501.05694}, 
}

@misc{mhiri2025unifyingaccountwarmstart,
      title={A unifying account of warm start guarantees for patches of quantum landscapes}, 
      author={Hela Mhiri and Ricard Puig and Sacha Lerch and Manuel S. Rudolph and Thiparat Chotibut and Supanut Thanasilp and Zoë Holmes},
      year={2025},
      eprint={2502.07889},
      archivePrefix={arXiv},
      primaryClass={quant-ph},
      url={https://arxiv.org/abs/2502.07889}, 
}

@misc{molteni2025quantummachinelearningadvantages,
      title={Quantum machine learning advantages beyond hardness of evaluation}, 
      author={Riccardo Molteni and Simon C. Marshall and Vedran Dunjko},
      year={2025},
      eprint={2504.15964},
      archivePrefix={arXiv},
      primaryClass={quant-ph},
      url={https://arxiv.org/abs/2504.15964}, 
}

@misc{salmon2023provableadvantagequantumpac,
      title={Provable Advantage in Quantum PAC Learning}, 
      author={Wilfred Salmon and Sergii Strelchuk and Tom Gur},
      year={2023},
      eprint={2309.10887},
      archivePrefix={arXiv},
      primaryClass={quant-ph},
      url={https://arxiv.org/abs/2309.10887}, 
}

@article{Tripathi:2025qki,
    author = "Tripathi, Sarvapriya and Upadhyay, Himanshu and Soni, Jayesh",
    title = {{Carbon efficient quantum AI: an empirical study of ans{\"a}tz design trade-offs in QNN and QLSTM models}},
    doi = "10.1038/s41598-025-28582-6",
    journal = "Sci. Rep.",
    volume = "15",
    number = "1",
    pages = "44936",
    year = "2025"
}

@misc{fellousasiani2022resourcecostlargescale,
      title={The resource cost of large scale quantum computing}, 
      author={Marco Fellous-Asiani},
      year={2022},
      eprint={2112.04022},
      archivePrefix={arXiv},
      primaryClass={quant-ph},
      url={https://arxiv.org/abs/2112.04022}, 
}

@misc{arunachalam2017surveyquantumlearningtheory,
      title={A Survey of Quantum Learning Theory}, 
      author={Srinivasan Arunachalam and Ronald de Wolf},
      year={2017},
      eprint={1701.06806},
      archivePrefix={arXiv},
      primaryClass={quant-ph},
      url={https://arxiv.org/abs/1701.06806}, 
}

@misc{banknote_authentication_267,
  author       = {Lohweg, Volker},
  title        = {{Banknote Authentication}},
  year         = {2012},
  howpublished = {UCI Machine Learning Repository},
  note         = {{DOI}: https://doi.org/10.24432/C55P57}
}

@misc{hernicht2025enhancingscalabilityclassicalsurrogates,
      title={Enhancing the Scalability of Classical Surrogates for Real-World Quantum Machine Learning Applications}, 
      author={Philip Anton Hernicht and Alona Sakhnenko and Corey O'Meara and Giorgio Cortiana and Jeanette Miriam Lorenz},
      year={2025},
      eprint={2508.06131},
      archivePrefix={arXiv},
      primaryClass={quant-ph},
      url={https://arxiv.org/abs/2508.06131}, 
}

@misc{hangleiter2026quantumadvantageachieved,
      title={Has quantum advantage been achieved?}, 
      author={Dominik Hangleiter},
      year={2026},
      eprint={2603.09901},
      archivePrefix={arXiv},
      primaryClass={quant-ph},
      url={https://arxiv.org/abs/2603.09901}, 
}

@misc{eisert2013entanglementtensornetworkstates,
      title={Entanglement and tensor network states}, 
      author={J. Eisert},
      year={2013},
      eprint={1308.3318},
      archivePrefix={arXiv},
      primaryClass={quant-ph},
      url={https://arxiv.org/abs/1308.3318}, 
}

@misc{rodrguezaldavero2025chebyshevapproximationcompositionfunctions,
      title={Chebyshev approximation and composition of functions in matrix product states for quantum-inspired numerical analysis}, 
      author={Juan José Rodríguez-Aldavero and Paula García-Molina and Luca Tagliacozzo and Juan José García-Ripoll},
      year={2025},
      eprint={2407.09609},
      archivePrefix={arXiv},
      primaryClass={quant-ph},
      url={https://arxiv.org/abs/2407.09609}, 
}

@article{Oseledets2011,
  title        = {Tensor-Train Decomposition},
  author       = {Oseledets, Ivan V.},
  journal      = {SIAM Journal on Scientific Computing},
  volume       = {33},
  number       = {5},
  pages        = {2295--2317},
  year         = {2011},
  doi          = {10.1137/090752286},
  publisher    = {Society for Industrial and Applied Mathematics (SIAM)}
}

@article{Caro2022,
  title        = {Generalization in Quantum Machine Learning from Few Training Data},
  author       = {Caro, Manuel and Huang, Hsin-Yuan and Cerezo, M. and Sharma, Kunal and Sornborger, Andrew and Cincio, Lukasz and Coles, Patrick J.},
  journal      = {Nature Communications},
  volume       = {13},
  pages        = {4919},
  year         = {2022},
  doi          = {10.1038/s41467-022-32550-3},
  publisher    = {Springer Nature}
}

@inproceedings{KhavariRabusseau2021,
 author = {Khavari, Behnoush and Rabusseau, Guillaume},
 booktitle = {Advances in Neural Information Processing Systems},
 editor = {M. Ranzato and A. Beygelzimer and Y. Dauphin and P.S. Liang and J. Wortman Vaughan},
 pages = {10931--10943},
 publisher = {Curran Associates, Inc.},
 title = {Lower and Upper Bounds on the Pseudo-Dimension of Tensor Network Models},
 url = {https://proceedings.neurips.cc/paper_files/paper/2021/file/5a9d8bf5b7a4b35f3110dde8673bdda2-Paper.pdf},
 volume = {34},
 year = {2021}
}

@book{Mohri2018,
  title        = {Foundations of Machine Learning},
  author       = {Mohri, Mehryar and Rostamizadeh, Afshin and Talwalkar, Ameet},
  publisher    = {MIT Press},
  edition      = {2},
  year         = {2018},
  isbn         = {9780262039406}
}

@book{ShalevShwartzBenDavid2014,
  title        = {Understanding Machine Learning: From Theory to Algorithms},
  author       = {Shalev-Shwartz, Shai and Ben-David, Shai},
  publisher    = {Cambridge University Press},
  year         = {2014},
  doi          = {10.1017/CBO9781107298019},
  isbn         = {9781107057135}
}

@article{sauer1972density,
  title        = {On the Density of Families of Sets},
  author       = {Sauer, N.},
  journal      = {Journal of Combinatorial Theory, Series A},
  volume       = {13},
  number       = {1},
  pages        = {145--147},
  year         = {1972},
  doi          = {10.1016/0097-3165(72)90019-2},
  publisher    = {Elsevier}
}

@article{HiriartUrrutyLe2013rank,
  title        = {A Variational Approach of the Rank Function},
  author       = {Hiriart-Urruty, Jean-Baptiste and Le, Hoang Tuy},
  journal      = {TOP},
  volume       = {21},
  number       = {2},
  pages        = {207--240},
  year         = {2013},
  doi          = {10.1007/s11750-012-0234-4},
  publisher    = {Springer}
}

@article{BlumerEtAl1989,
  title        = {Learnability and the {Vapnik-Chervonenkis} Dimension},
  author       = {Blumer, Anselm and Ehrenfeucht, Andrzej and Haussler, David and Warmuth, Manfred K.},
  journal      = {Journal of the ACM},
  volume       = {36},
  number       = {4},
  pages        = {929--965},
  year         = {1989},
  doi          = {10.1145/76359.76371},
  publisher    = {Association for Computing Machinery (ACM)}
}

@ARTICLE{788640,
  author={Vapnik, V.N.},
  journal={IEEE Transactions on Neural Networks}, 
  title={An overview of statistical learning theory}, 
  year={1999},
  volume={10},
  number={5},
  pages={988-999},
  doi={10.1109/72.788640}}

@book{Vapnik1999-VAPTNO,
	author = {Vladimir Vapnik},
	editor = {},
	publisher = {Springer: New York},
	title = {The Nature of Statistical Learning Theory},
	year = {1999}
}

@article{JMLR:v11:shalev-shwartz10a,
  author  = {Shai Shalev-Shwartz and Ohad Shamir and Nathan Srebro and Karthik Sridharan},
  title   = {Learnability, Stability and Uniform Convergence},
  journal = {Journal of Machine Learning Research},
  year    = {2010},
  volume  = {11},
  number  = {90},
  pages   = {2635--2670},
  url     = {http://jmlr.org/papers/v11/shalev-shwartz10a.html}
}

@misc{watanabe2026tensornetworksurrogatemodels,
      title={Tensor network surrogate models for variational quantum computation}, 
      author={Ryo Watanabe and Dries Sels and Joseph Tindall},
      year={2026},
      eprint={2604.20180},
      archivePrefix={arXiv},
      primaryClass={quant-ph},
      url={https://arxiv.org/abs/2604.20180}, 
}


\appendix
\renewcommand{\thesection}{\Alph{section}}

\section{Appendix: VC Dimension Union Bound}

We provide a self-contained proof of the union bound for VC dimension used in Lemma \ref{lem:loss_pdim_sq}.

\begin{lemma}[VC Dimension of Union]
\label{lem:vc-union}
Let $\mathcal{C}_1, \mathcal{C}_2$ be set families over a domain $\mathcal{X}$ with $\mathrm{VCdim}(\mathcal{C}_1) = d_1$ and $\mathrm{VCdim}(\mathcal{C}_2) = d_2$. Then
\[
\mathrm{VCdim}(\mathcal{C}_1 \cup \mathcal{C}_2) \leq d_1 + d_2 + 1.
\]
\end{lemma}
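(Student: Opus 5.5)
The proof goes by counting dichotomies and applying the Sauer--Shelah lemma \cite{sauer1972density}. Suppose, for contradiction, that some finite set $S \subseteq \mathcal{X}$ with $|S| = m := d_1 + d_2 + 2$ is shattered by $\mathcal{C}_1 \cup \mathcal{C}_2$. Then every one of the $2^m$ subsets of $S$ arises as a trace $C \cap S$ with $C \in \mathcal{C}_1$ or $C \in \mathcal{C}_2$. Hence
\[
2^m \le |\Pi_{\mathcal{C}_1}(S)| + |\Pi_{\mathcal{C}_2}(S)|,
\]
where $\Pi_{\mathcal{C}}(S) := \{C \cap S : C \in \mathcal{C}\}$ is the trace family.

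Next we bound each trace family by Sauer--Shelah:
\[
|\Pi_{\mathcal{C}_j}(S)| \le \sum_{i=0}^{d_j} \binom{m}{i}, \qquad j = 1, 2.
\]
For the second family, substitute $i \mapsto m - i$ and use $\binom{m}{i} = \binom{m}{m-i}$. This gives
\[
\sum_{i=0}^{d_2} \binom{m}{i} = \sum_{i=m-d_2}^{m} \binom{m}{i} = \sum_{i=d_1+2}^{m} \binom{m}{i},
\]
since $m - d_2 = d_1 + 2$. Adding the two bounds, the index ranges $\{0, \dots, d_1\}$ and $\{d_1+2, \dots, m\}$ are disjoint and together omit exactly the index $i = d_1 + 1$. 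Therefore the total is at most
\[
2^m - \binom{m}{d_1+1} < 2^m,
\]
because $0 \le d_1 + 1 \le m$ makes that binomial coefficient positive. This contradicts the displayed inequality $2^m \le |\Pi_{\mathcal{C}_1}(S)| + |\Pi_{\mathcal{C}_2}(S)|$, so no set of size $d_1 + d_2 + 2$ is shattered, and $\mathrm{VCdim}(\mathcal{C}_1 \cup \mathcal{C}_2) \le d_1 + d_2 + 1$.

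The only delicate point is arranging the complementary-index rewriting so that the two Sauer sums cover disjoint ranges and leave exactly one gap. Choosing $m = d_1 + d_2 + 2$ is precisely what makes the two ranges meet with a single missing index. The edge cases $d_j = 0$ and infinite dimensions need no separate treatment: if either $d_j$ is infinite the claim is vacuous, and the Sauer--Shelah bound holds for $d_j = 0$, so the same argument applies.
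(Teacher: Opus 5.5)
Your proof is correct and follows essentially the same route as the paper: bound the trace of the union on a set of size $m=d_1+d_2+2$ by the sum of the two traces, apply Sauer's lemma, and use $\binom{m}{i}=\binom{m}{m-i}$ to show the two Sauer sums together omit exactly the term $\binom{m}{d_1+1}$, so the total is strictly below $2^m$. The differences are cosmetic: you argue by contradiction, and your reindexing makes the paper's assumption $d_1\le d_2$ unnecessary.
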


\begin{proof}
For a finite set $S\subseteq\mathcal X$, let $\mathcal C|_S:=\{C\cap S: C\in\mathcal C\}$ and
$\Pi_{\mathcal C}(n):=\max_{|S|=n} |\mathcal C|_S|$.

We use the standard growth function argument. For any $S$ with $|S|=n$, we have $(\mathcal C_1\cup\mathcal C_2)|_S=\mathcal C_1|_S\cup \mathcal C_2|_S$, hence
$|(\mathcal C_1\cup\mathcal C_2)|_S|\le |\mathcal C_1|_S|+|\mathcal C_2|_S|$.
Taking the maximum over $S$ gives
\[
\Pi_{\mathcal{C}_1 \cup \mathcal{C}_2}(n) \leq \Pi_{\mathcal{C}_1}(n) + \Pi_{\mathcal{C}_2}(n).
\]

By Sauer's lemma \cite{sauer1972density}, if $\mathrm{VCdim}(\mathcal{C}_i) = d_i$, then for all $n$,
\[
\Pi_{\mathcal{C}_i}(n) \leq \sum_{j=0}^{d_i} \binom{n}{j}.
\]

Assume $d_1 \leq d_2$ and set $n = d_1 + d_2 + 2$. Then
\[
\sum_{j=0}^{d_2} \binom{n}{j}
= 2^n - \sum_{j=d_2+1}^{n} \binom{n}{j}
= 2^n - \sum_{j=0}^{n-d_2-1} \binom{n}{j}
= 2^n - \sum_{j=0}^{d_1+1} \binom{n}{j},
\]
where we used $\binom{n}{j}=\binom{n}{n-j}$ and $n-d_2-1=d_1+1$.

Therefore
\begin{align*}
\Pi_{\mathcal{C}_1 \cup \mathcal{C}_2}(n) 
&\leq \sum_{j=0}^{d_1} \binom{n}{j} + \sum_{j=0}^{d_2} \binom{n}{j} \\
&\le \sum_{j=0}^{d_1} \binom{n}{j} + 2^n - \sum_{j=0}^{d_1+1} \binom{n}{j} \\
&= 2^n - \binom{n}{d_1+1}
< 2^n.
\end{align*}
Hence $\Pi_{\mathcal{C}_1 \cup \mathcal{C}_2}(n) < 2^n$, so $\mathcal{C}_1 \cup \mathcal{C}_2$ cannot shatter $n$ points.
Therefore:

\[
\mathrm{VCdim}(\mathcal{C}_1 \cup \mathcal{C}_2) \leq n - 1 = d_1 + d_2 + 1.
\]
\end{proof}

(See \cite[Ch. 6, Exercises, item 2]{ShalevShwartzBenDavid2014}, which asks to prove the
special case $\mathrm{VCdim}(H_1\cup H_2)\le 2d+1$)

\end{document}